%% file: apssamp.tex
\documentclass[reprint,onecolumn,12pt,aps,prx,notitlepage,longbibliography,amsmath,amssymb,nofootinbib]{revtex4-2}

\usepackage{graphicx}
\usepackage{bm}
\usepackage{booktabs}
\usepackage{xcolor}
\usepackage{hyperref}
\usepackage{mathtools}
\hypersetup{colorlinks=true,linkcolor=blue,citecolor=blue,urlcolor=blue}

\newcommand{\Tr}{\operatorname{Tr}}
\newcommand{\E}{\mathbb E}
\newcommand{\id}{\mathbb I}
\newcommand{\cPT}{\Upsilon}

\newtheorem{theorem}{Theorem}
\newtheorem{proposition}{Proposition}
\newtheorem{corollary}{Corollary}

\makeatletter
\newenvironment{fixedtable}
  {\par\medskip\noindent\begin{minipage}{\linewidth}\centering}
  {\end{minipage}\par\medskip}
\newcommand{\fixedtablecaption}[2]{%
  \refstepcounter{table}%
  \@makecaption{\tablename~\thetable}{#1}%
  \label{#2}%
}
\makeatother

\makeatletter
\AtBeginDocument{\immediate\write\@auxout{\string\citation{REVTEX42Control}}}
\makeatother

\begin{document}

\title{Continuous-Process Randomized Compilation for Quantum Process Tensors}
\author{He Wang}
\email{dr\_hewang19@163.com}

\begin{abstract}
Traditional discrete randomized compiling (RC) shapes errors by selecting random control frames at logical-gate or cycle boundaries. However, a boundary-only description does not resolve the control evolution during a finite-duration instrument or the system--environment coupling that persists throughout the operation. When the environment retains memory, error control concerns multi-time deviations of the complete experimental record rather than an isolated single-gate channel. Here we establish continuous-process randomized compilation (CPRC) within the continuous process tensor (cPT) framework. We place random unitary control-frame trajectories, pointwise-covariant finite-duration instruments, and persistent environmental dynamics on a common physical time axis. We study four control protocols: fixed Pauli refresh, bounded Cayley paths, unitary Brownian motion, and Ornstein--Uhlenbeck (OU) driving, with Poisson refresh treated as an event-time extension.
We obtain three principal results. First, pointwise covariance ensures that every control trajectory in the absence of system--environment coupling exactly reproduces the target logical experiment. A closed control-frame path shows that endpoint compensation alone can introduce first-order errors in the instrument action. Second, independent Pauli conjugations applied over each complete interval diagonalize error histories at designated boundaries, whereas an exactly solvable static-bath model shows that the resulting classical labels can retain environment-induced correlations. Third, for bounded centered coupling and controls satisfying the required mixing conditions, we derive a total-variation error bound for complete adaptive output records. We further establish Fock-norm convergence of the entire trajectory-averaged, trace-closed cPT on a fixed finite time interval for finite-dimensional non-Gaussian environments and Gaussian baths satisfying the stated physical-cutoff realization, time-integrable covariance, and cutoff-stability conditions, including the uncut Drude spectrum with divergent equal-time correlations. We evaluate finite-duration output probabilities and operator conditional mutual information (CMI) in a qubit-environment model, and present numerical verification of all operation-label coefficients at a prescribed pair of insertion times in the two-particle sector, for both a qubit environment and a thermal Drude bath.
This work extends RC from discrete gate-level settings to continuous-time quantum processes with memory. It distinguishes logical transparency, interval classicalization, and operational decoupling in memory-bearing quantum processes, providing a unified theoretical framework and quantitative tools for error suppression in continuous-time quantum control, as well as control protocols and verification benchmarks for quantum error tailoring in non-Markovian environments.
\end{abstract}

\maketitle

\section{Introduction}
Error suppression in quantum control is a central challenge for scalable quantum computing. Randomized compiling (RC) is a practical error-tailoring technique that inserts random unitary operations at logical-gate or cycle boundaries and absorbs their inverses into neighboring operations. Under suitable conditions, this procedure tailors gate-dependent coherent noise into a stochastic noise model while preserving the logical function of the ideal circuit \cite{wallman2016,hashim2021}. Studies of coherent-noise characterization and experimental noise tailoring support this approach \cite{wallman2015,hashim2021}. Related mitigation methods and reviews place RC within a broader error-reduction framework \cite{temme2017,endo2021,cai2023,perrin2024,kurita2023}.

In conventional discrete-circuit settings, Pauli twirling with the appropriate frame updates yields a Pauli-diagonal error channel for memoryless gates \cite{wallman2016}. When the system interacts with an environment that retains memory, however, the relevant errors can be correlated across multiple times throughout the control sequence \cite{combref004,combref005,combref078,prakash2024}. A description in terms of independent single-gate channels is then insufficient. A further limitation arises when operations are represented only at their endpoints. Such a description does not specify how a random control frame acts during a finite-duration instrument. Within its active interval, the environment evolves and retains historical information, while the changing control frame modulates the system--environment interaction. Endpoint compensation alone does not, in general, determine the resulting logical operation. Addressing this problem requires a framework that resolves both the continuous implementation of instruments and the relevant multi-time correlations.

Process tensors, also known as quantum combs, provide a standard framework for describing multi-time quantum processes. They map sequences of interventions to the corresponding output statistics and encode temporal correlations arising from environmental memory \cite{combref009,combref017,combref019,combref029}. In the discrete process-tensor setting, Choi-channel representations show that Pauli twirling under the corresponding protocol conditions can convert quantum temporal noise correlations into classical correlations. This construction supports probabilistic error cancellation and virtual channel purification \cite{liu2024choichannel}. Related calibration-free suppression has been demonstrated experimentally in nuclear-spin systems \cite{liu2026calibrationfree}. Tensor-network process-tensor decoders provide a complementary approach to error correction that retains spatiotemporal correlations \cite{kobayashi2024decoders}. These developments establish the relevance of multi-time error descriptions, but a discrete process tensor with intervention slots only at the boundaries of an instrument window does not resolve the dynamics within that window. The continuous process tensor (cPT) framework places interventions on a continuous physical time axis \cite{wassnerCPT}. Its field representation uses continuous matrix product states (cMPS), providing a compact tensor-network description of continuous quantum processes \cite{wassnerCPT,combref024,combref039}. A recent dilation theorem further shows that process-canonical representations are dense in the space of physical cPTs \cite{fuksa2026dilation}.

Research on quantum processes has developed along two complementary directions: process characterization and numerical simulation. Operational process tomography addresses non-Markovian dynamics under restricted experimental access \cite{combref003,combref004,combref051} and has been demonstrated on superconducting quantum processors \cite{combref005,combref076}. Experimental characterization of non-Markovian processes has also been reported in other settings \cite{combref079}. Subsequent work has improved sampling efficiency and self-consistent reconstruction \cite{combref006,combref007}; Hamiltonian learning provides another route to characterizing superconducting-processor dynamics \cite{combref091}. Isometric, compressed, and gate-set approaches offer complementary tools for learning finite combs and gate models \cite{combref081,combref082,combref087,combref088}. These methods make multi-time process descriptions accessible to experimental characterization and reconstruction.

Numerical methods address the cost of simulating a process while retaining its environmental history. Time-evolving influence functionals and matrix-product compression reduce the computational cost of long histories while preserving bath memory \cite{combref001,combref002,combref008,combref026,combref027,combref028,combref065}. The geometry and optimization of matrix product states (MPS), together with dynamical low-rank integration, provide further tools for these calculations \cite{combref068,combref070,combref071,combref072,combref074,combref075}. Continuous tensor-network and cMPS methods supply a numerical language for continuous fields \cite{combref025,combref048,combref069,combref092,combref093,combref094,combref095}. These methods provide the computational tools needed to retain environmental memory and evaluate the effect of randomized control on multi-time processes.

For active error suppression, Hamiltonian-level random control and dynamical decoupling provide a complementary route. Viola and Knill studied random group-valued paths in a logical frame and derived an $O(g^2T\Delta t)$ error-scaling law for bounded interactions with independently refreshed control frames \cite{viola2005}. Finite-strength continuous control and random bang-bang decoupling have also been investigated within the dynamical-decoupling framework \cite{violaKnill2003,chen2006,hahn2022}. Deterministic sequences, protected gates, colored noise, and pulse imperfections further constrain performance \cite{viola1998,viola1999,khodjasteh2005,uhrig2007,alvarez2011,souza2011,cywinski2008,west2010,combref064}. These studies establish methods for suppressing system--environment coupling and preserving intended operations. They do not by themselves provide a cPT formulation that combines finite-duration logical instruments with a quantitative bound on complete adaptive output records.

The central question is how to preserve the intended logical experiment along a random control path while bounding its complete-record error in the presence of environmental memory. To address this question, we establish continuous-process randomized compilation (CPRC) within the cPT framework. We place random control-frame trajectories, finite-duration covariant instruments, and memory-bearing environmental dynamics on a common physical time axis and study four control protocols. Each protocol has a distinct temporal structure and implementation constraint. Fixed Pauli refresh independently selects control frames at prescribed nodes and serves as the benchmark for ideal digital updates. Bounded Cayley paths interpolate between independently sampled target frames over finite intervals, making the control evolution between nodes explicit. Refining these nodes increases the required control speed. Unitary Brownian motion provides an analytically tractable idealized diffusion model \cite{nechita2013,onorati2017}, rather than a finite-bandwidth drive. Ornstein--Uhlenbeck (OU) driving produces smoother control paths with temporal correlations, although a finite correlation time does not imply a strict frequency cutoff. OU-driven qubit dynamics have also been studied in the context of classical noise \cite{dekeijzer2025}. These four protocols are studied in the main text; Appendix~\ref{app:poisson} treats Poisson refresh with random event times as an event-time extension.

\paragraph{Contributions and scope.}
The results are organized into three levels.
\begin{enumerate}
\item \textbf{Logical implementation and physicality.} We formulate the pointwise covariance rule, prove that every control trajectory reproduces the target logical experiment in the absence of coupling, and use a closed control-frame path to show that endpoint compensation alone is generally insufficient. The averaged process continues to satisfy the corresponding positivity and causality requirements.
\item \textbf{Error structure and complete records.} We identify the designated boundaries at which independent whole-interval Pauli conjugations diagonalize error histories, and use an exactly solvable two-cell model to show that the resulting classical labels can still carry environmental memory. For bounded centered coupling and controls satisfying the required mixing conditions, we derive a total-variation (TV) bound for complete adaptive records. The OU specialization depends on a mixing condition for the joint process of the control frame and OU driving variables.
\item \textbf{Process convergence and numerical checks.} We prove convergence of trajectory-averaged cPT coefficients in each fixed particle sector for finite-dimensional non-Gaussian environments and Gaussian baths with the stated physical-cutoff realization and time-integrable covariance, including the uncut Drude spectrum. A uniform particle-sector tail estimate then establishes Fock-norm convergence of the entire trajectory-averaged, trace-closed cPT on a fixed finite time interval; in the Gaussian case, cutoff removal is justified in the same norm. A qubit environment is used to study finite-duration output probabilities and operator conditional mutual information (CMI). We present numerical verification of all operation-label coefficients at a prescribed pair of insertion times in the two-particle sector, for both a qubit environment and a thermal Drude bath. The process-convergence results and the complete-record bound under bounded coupling concern different convergence objects and rely on different premises.
\end{enumerate}
For clarity, Table~\ref{tab:theorems} summarizes the distinct convergence objects, assumptions, and conclusions established in this work.

\begin{fixedtable}
\fixedtablecaption{Distinct statements proved in this work. ``Fixed sector'' means that the number and times of process-field insertions are held fixed while the refresh rate changes. A statement about all allowed experimental outcomes does not, by itself, give an entropy-continuity theorem for the infinite Fock representation.}{tab:theorems}
\begin{ruledtabular}
\begin{tabular}{p{0.22\linewidth}p{0.38\linewidth}p{0.30\linewidth}}
Object & Main premise & Conclusion\\ \hline
Interval history & factorized initial state; independent whole-interval Pauli frames & diagonal histories; memory may remain\\
Finite environment sector & bounded interaction; all finite control moments; Proposition~\ref{prop:finite-sector} & fixed coefficient convergence\\
Gaussian sector & factorized Gaussian input; Hermitian linear fields; physical cutoff realization; $L^1$ covariance and cutoff stability; all finite control moments; Proposition~\ref{prop:fixed-sector} & fixed coefficient convergence\\
Full trace-closed cPT & normalized input; pointwise-covariant logical operations; either sector premise above; fixed finite system, interval, and operation basis; Theorem~\ref{thm:fock-convergence} & Fock-norm convergence of the trajectory average\\
Uncut Drude sector & $C\in L^2$, $C(0)=\infty$; Eq.~\eqref{eq:app-Drude-refresh-rate} & $O[h\log(\tau_0/h)]$ refresh rate\\
Complete experiment & integrable norm envelope; centered coupling; initial control--state independence; autonomous no-feedback Markov control; pointwise-covariant logical instruments; Theorem~\ref{thm:complete-experiment} & Eq.~\eqref{eq:general-bound}\\
Two-cell benchmark & static finite bath & exact classical-memory/TV formulas
\end{tabular}
\end{ruledtabular}
\end{fixedtable}
The first row applies the standard Pauli character average to designated cPT intervals. The fixed-refresh scaling in the complete-experiment row has a random-decoupling precedent \cite{viola2005}; its present operational output metric and the correlated-control versions of Eq.~\eqref{eq:general-bound} are the extensions assessed here. The finite-environment and Gaussian rows concern cPT amplitudes; the full-process row combines their convergence with a uniform summable tail as proved in Theorem~\ref{thm:fock-convergence}. These results use hypotheses different from the bounded-interaction premise of the complete-experiment theorem. The Drude and two-cell rows make those distinctions explicit rather than identifying diagonalization with loss of memory.

This work extends RC from discrete circuit settings to continuous-time quantum processes. It distinguishes logical transparency, interval classicalization, and operational decoupling in memory-bearing quantum processes, providing a systematic theoretical framework and quantitative methods for error suppression in quantum control in non-Markovian environments.

\input{background}

\section{Compiled dynamics and finite-duration instruments}
We now specify the laboratory dynamics and the physical implementation of
instruments contracted with the cPT of Eq.~\eqref{eq:cpt-expansion}.
The operational sequence underlying this construction is summarized in
Fig.~\ref{fig:protocol}.

\begin{figure}[t]
\includegraphics[width=\linewidth]{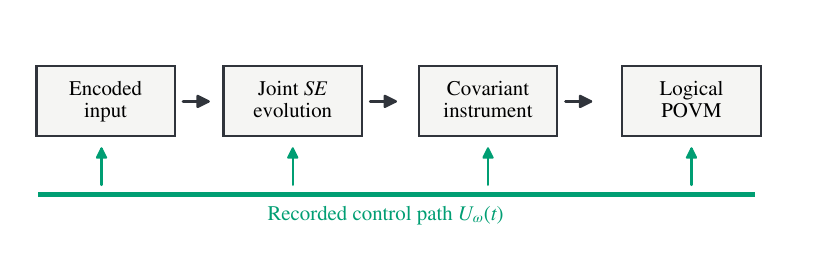}
\caption{Operational structure. The encoded input, joint system--environment evolution, finite-duration covariant instrument, and logical POVM are evaluated along one complete random control path. The same path is used on both sides of every multi-time contraction before trajectory averaging.}
\label{fig:protocol}
\end{figure}

\subsection{Laboratory and random-unitary frames}
We use the laboratory Hamiltonian
\begin{equation}
 H_{\rm lab}^{\omega}(t)=H_E(t)+H_{SE}(t)+H_{\rm ctrl}^{\omega}(t)
 +H_{\rm inst}^{\omega}(t),
\end{equation}
where the autonomous random control generates a unitary frame $U_\omega(t)$ on $S$. Transforming with $U_\omega(t)$ gives
\begin{equation}
 H_{\rm I}^{\omega}(t)=H_E(t)
 +\bigl(U_\omega^\dagger(t)\otimes\id_E\bigr)H_{SE}(t)
 \bigl(U_\omega(t)\otimes\id_E\bigr).
 \label{eq:random-frame}
\end{equation}
Here $H_E(t)$ and $H_{SE}(t)$ may have prescribed, deterministic time dependence independent of the control path and measurement record. We assume that $H_E(t)$ generates a well-defined propagator. The system-identity component of $H_{SE}(t)$ is absorbed into $H_E(t)$, leaving a system-traceless coupling; the static Hamiltonians used in the numerical examples are special cases. The ideal comparison retains the same environment propagation and removes only this coupling.
This baseline model has no separate system Hamiltonian $H_S$: the instrument Hamiltonian generates the intended system operations on their finite supports. Equation~\eqref{eq:random-frame} isolates the uncontrolled system--environment process; the logical instrument Hamiltonian is added during each instrument window. Changing frame leaves the environmental state intact throughout the trajectory. For the Drude calculation we use $H_{SE}=g\sigma_z\otimes B$ with a finite-temperature Gaussian bath. For the qubit checks we use $H_E=(1.7/2)\sigma_z^{(E)}$ and $H_{SE}=0.8\sigma_z^{(S)}\otimes\sigma_x^{(E)}$.

\subsection{Covariant finite-duration instruments}
Let $\mathcal A_t$ be a logical instrument in the logical frame. A physical implementation at a trajectory $\omega$ must satisfy, at every time during its support,
\begin{equation}
 \mathcal A_t^{\omega}[X]=(U_\omega(t)\otimes\id_A)\,
 \mathcal A_t\!\left[(U_\omega^\dagger(t)\otimes\id_A)X
 (U_\omega(t)\otimes\id_A)\right]
 (U_\omega^\dagger(t)\otimes\id_A),
\label{eq:covariant-instrument}
\end{equation}
with the evident extension to an adaptive classical record and an auxiliary system $A$. For a Hamiltonian realization with instrument memory $M$, this requires the laboratory interaction at every time in the instrument support to be
\begin{equation}
 H_{SM,\omega}^{\rm inst,lab}(t)
 =(U_\omega(t)\otimes\id_M)H_{SM}^{\rm inst,log}(t)
 (U_\omega^\dagger(t)\otimes\id_M).
 \label{eq:instrument-hamiltonian-covariance}
\end{equation}
The frame-generating control Hamiltonian is a separate term in $H_{\rm lab}^{\omega}$. Pointwise covariance governs the physical implementation throughout $[a,b]$. Once an isolated instrument obeys this rule, its complete outcome map has only endpoint coordinate factors: $\mathcal J_{x,\omega}^{\rm lab}=\operatorname{Ad}_{U_\omega(b)}\circ\mathcal J_x^{\rm log}\circ\operatorname{Ad}_{U_\omega(a)^\dagger}$. This endpoint identity follows from continuous covariance; endpoint pulses alone do not implement it in general. Appendix~\ref{app:instrument-covariance} derives the identity with an instrument ancilla and record. A target logical POVM $\{M_x\}$ is implemented as $U_\omega(T)M_xU_\omega^\dagger(T)$, or by an inverse-frame operation followed by a fixed detector. At a closed trace boundary the correction is invisible because $\Tr(U X U^\dagger)=\Tr X$.

\begin{proposition}[Trajectory-wise logical transparency]\label{prop:transparency}
If $H_{SE}(t)=0$, the input is encoded in the initial frame, the finite-duration instrument Hamiltonian obeys Eq.~\eqref{eq:instrument-hamiltonian-covariance} at every time, and the final POVM is frame covariant, then every control trajectory produces the target distribution of the complete adaptive logical record.
\end{proposition}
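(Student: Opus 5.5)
The plan is to prove the statement by a frame-covariance telescoping argument along a single fixed trajectory $\omega$. With $H_{SE}\equiv 0$, the joint evolution factorizes. The environment evolves under $H_E(t)$ alone and never correlates with $S$. The system evolves under $H_{\rm ctrl}^{\omega}+H_{\rm inst}^{\omega}$. The first step is to compute the system propagator between successive instrument windows. Outside instrument supports only $H_{\rm ctrl}^{\omega}$ acts, and by definition it generates $U_\omega$, so the free propagation from $s$ to $t$ is $\operatorname{Ad}_{U_\omega(t)U_\omega(s)^\dagger}$. The encoded input is $\rho^{\rm lab}=U_\omega(0)\rho^{\rm log}U_\omega(0)^\dagger$. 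Consequently, at every time the laboratory state is the logical state conjugated by $U_\omega(t)$.

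The second step handles each finite-duration instrument on its support $[a,b]$, together with its memory $M$ and any ancilla or record. Here I will use the endpoint identity $\mathcal J_{x,\omega}^{\rm lab}=\operatorname{Ad}_{U_\omega(b)}\circ\mathcal J_x^{\rm log}\circ\operatorname{Ad}_{U_\omega(a)^\dagger}$ stated just before the proposition and derived in Appendix~\ref{app:instrument-covariance}. Its derivation is the one place where real work is needed. Write the laboratory generator on $[a,b]$ as $H_{\rm ctrl}^{\omega}(t)\otimes\id_M+\operatorname{Ad}_{U_\omega(t)\otimes\id_M}H_{SM}^{\rm inst,log}(t)$. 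Then check that $V(t):=(U_\omega(t)\otimes\id_M)\,W^{\rm log}(t,a)\,(U_\omega(a)^\dagger\otimes\id_M)$ solves the laboratory Schr\"odinger equation with $V(a)=\id$, where $W^{\rm log}$ is the logical instrument propagator. Differentiating with the product rule, the term $\dot U_\omega U_\omega^\dagger=-iH_{\rm ctrl}^\omega$ reproduces the control part, and the covariance rule Eq.~\eqref{eq:instrument-hamiltonian-covariance} reproduces the instrument part. Uniqueness of solutions then gives the identity. The identity also covers mid-window records and adaptive choices: any measurement on $M$ at an intermediate time acts on the memory only, so it commutes with the $S$-frame factors. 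I expect this step to be the main obstacle. It needs care with time ordering when $H_{\rm ctrl}^{\omega}$ is only measurable or stochastic, as in the Brownian case. I would first assume piecewise-continuous paths, then extend by the well-posedness assumption on the propagator.

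The third step is to telescope. Compose the free-propagation segments with the instrument maps in time order. Each adjacent pair $\operatorname{Ad}_{U_\omega(a)^\dagger}\circ\operatorname{Ad}_{U_\omega(a)U_\omega(b')^\dagger}\circ\operatorname{Ad}_{U_\omega(b')}$ collapses to the identity. The whole laboratory branch map for a record $(x_1,\dots,x_n)$ therefore equals $\operatorname{Ad}_{U_\omega(T)}\circ(\mathcal J_{x_n}^{\rm log}\circ\cdots\circ\mathcal J_{x_1}^{\rm log})$ applied to $\rho^{\rm log}$, tensored with the untouched environment. Adaptivity enters only through the choice of later logical instruments as functions of earlier recorded outcomes. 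This dependence is identical in the laboratory and logical descriptions, so an induction over the instruments handles it. Finally, the frame-covariant POVM $U_\omega(T)M_xU_\omega^\dagger(T)$ gives $\Tr[U_\omega(T)M_xU_\omega(T)^\dagger\,\operatorname{Ad}_{U_\omega(T)}(\cdot)]=\Tr[M_x(\cdot)]$. Tracing out $E$ contributes a factor $1$. The joint probability of the complete record is therefore the target one for every $\omega$, with no averaging needed.
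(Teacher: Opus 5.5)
Your proposal is correct and is essentially the paper's argument. The paper phrases the cancellation as a pass to the logical-frame state $\widetilde\rho_\omega(t)=(U_\omega^\dagger(t)\otimes\id_A)\rho^{\rm lab}_\omega(t)(U_\omega(t)\otimes\id_A)$, whose generator loses the control term and becomes the $\omega$-independent logical instrument Hamiltonian; it derives the endpoint identity in Appendix~\ref{app:instrument-covariance} exactly as you do, by differentiating $(U_\omega(t)\otimes\id_M)W^{\rm log}(t,a)(U_\omega(a)^\dagger\otimes\id_M)$ and invoking uniqueness, so your telescoping is the same computation written in endpoint form. The one place to be more concrete is the irregular paths: instead of an unspecified extension from piecewise-continuous controls, the paper splits refresh protocols at their jump times, where $\rho^{\rm lab}\mapsto J\rho^{\rm lab}J^\dagger$ with $J=U^+(U^-)^\dagger$ leaves the logical state unchanged, and it uses the Stratonovich product rule for Brownian driving.
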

To prove the proposition before any trajectory average, transform the laboratory state to logical coordinates:
\begin{equation}
 \widetilde\rho_\omega(t)=(U_\omega^\dagger(t)\otimes\id_A)
 \rho_\omega^{\rm lab}(t)(U_\omega(t)\otimes\id_A).
\end{equation}
The environment and any instrument memory are carried along unchanged. The frame-generating term cancels the derivative of $U_\omega$ in this representation, while Eq.~\eqref{eq:instrument-hamiltonian-covariance} becomes the same logical instrument Hamiltonian for every $\omega$. Preparing the laboratory input in the encoded frame makes $\widetilde\rho_\omega(0)$ equal to the target input. The common time-ordered propagation generated by $H_E(t)$ commutes with the system frame, so every finite-duration window and every adaptive outcome map propagates the same logical conditional state on every trajectory. Finally, $M_{x,\omega}^{\rm lab}=U_\omega(T)M_xU_\omega^\dagger(T)$ gives $\Tr[M_{x,\omega}^{\rm lab}\rho_\omega^{\rm lab}(T)]=\Tr[M_x\widetilde\rho_\omega(T)]$. Appendix~\ref{app:physicality} gives the full record-preserving cancellation and the corresponding endpoint formulas.
\paragraph{A closed-frame counterexample.}
Endpoint frame checks do not enforce the pointwise rule. Set $H_{SE}=0$ and take a target instrument Hamiltonian $h_{\rm inst}^{\rm log}=\eta\sigma_x$ throughout $[0,T]$. For $\Omega>0$, consider the bounded, closed control path
\begin{equation}
 \theta(t)=\Omega\min\{t,T-t\},\qquad
 U(t)=e^{-i\theta(t)\sigma_z},\qquad U(0)=U(T)=\id.
 \label{eq:endpoint-loop}
\end{equation}
Its generator $H_{\rm ctrl}(t)=\dot\theta(t)\sigma_z$ has norm $\Omega$ almost everywhere.
An endpoint-only prescription sees the identity frame at both ends and leaves $h_{\rm inst}^{\rm lab}=\eta\sigma_x$ unchanged. In logical coordinates, however, the instrument acts as
\begin{equation}
 \widetilde h_{\rm inst}(t)=U^\dagger(t)h_{\rm inst}^{\rm lab}U(t)
 =\eta[\cos(2\theta(t))\sigma_x-\sin(2\theta(t))\sigma_y].
 \label{eq:endpoint-logical-hamiltonian}
\end{equation}
The two halves of the loop give
\begin{equation}
 \int_0^T\!dt\,\frac{\widetilde h_{\rm inst}(t)}{\eta}
 =\frac{\sin(\Omega T)}{\Omega}\sigma_x
 -\frac{1-\cos(\Omega T)}{\Omega}\sigma_y.
 \label{eq:endpoint-loop-integral}
\end{equation}
At $\Omega T=\pi$ and to first order in $|\eta|T$, the endpoint-only logical propagator is
$\widetilde V(T)=\id+(2i\eta/\Omega)\sigma_y+O((\eta T)^2)$,
whereas the target is $V_{\rm id}(T)=\id-i\eta T\sigma_x+O((\eta T)^2)$.
For input $|0\rangle$ and readout $\Pi_{y,+}=(\id+\sigma_y)/2$,
\begin{equation}
 p_{\rm end}(+)=\tfrac12+O((\eta T)^2),\qquad
 p_{\rm id}(+)=\tfrac12-\eta T+O((\eta T)^2),
 \label{eq:endpoint-loop-probability}
\end{equation}
so the output-probability difference is $\eta T+O((\eta T)^2)$ despite identical endpoint frames. Implementing $h_{\rm inst}^{\rm lab}(t)=U(t)(\eta\sigma_x)U^\dagger(t)$ instead restores the target generator at every time. The counterexample exposes the intrinsic limitation of endpoint compensation; Proposition~\ref{prop:transparency} gives a sufficient pathwise rule for preserving logical transparency and provides a direct design criterion for finite-duration instruments.

The same obstruction can be stated directly as a failure of endpoint-comb coarse graining. Let $E$ be a qubit initialized in $|0\rangle_E$, set $H_E=0$, and compare process A, with $H_{SE}^{A}(t)=0$, to process B, with
\begin{equation}
 H_{SE}^{B}(t)=\dot\theta(t)\sigma_z^S\otimes\sigma_z^E,
\end{equation}
where $\theta(t)$ is the closed loop in Eq.~\eqref{eq:endpoint-loop}. Both joint propagators from $0$ to $T$ are $\id_{SE}$, so every experiment restricted to the two endpoints sees the same channel Choi operator. Apply nevertheless the same finite-duration instrument $h_{\rm inst}(t)=\eta\sigma_x^S$ throughout the interval. In process B, the first-order interaction-picture generator on the bath branch $|0\rangle_E$ is
\begin{equation}
 \langle0|_E\!\int_0^T\!dt\,
 V_B^\dagger(t,0)(\sigma_x^S\otimes\id_E)V_B(t,0)|0\rangle_E
 =\frac{\sin(\Omega T)}{\Omega}\sigma_x
 -\frac{1-\cos(\Omega T)}{\Omega}\sigma_y.
 \label{eq:endpoint-comb-loop-instrument}
\end{equation}
At $\Omega T=\pi$ this is $-2\sigma_y/\Omega$, whereas process A gives $T\sigma_x$. For input $|0\rangle$ and terminal $\sigma_y$ readout, the two $+1$ probabilities therefore differ by $\eta T+O((\eta T)^2)$. Thus the endpoint map is not injective for finite-duration experiments: a cPT, or a comb with an interior slot, is required to resolve the distinction.

\section{Control protocols: fixed refresh, bounded paths, and continuous drives}
\label{sec:protocols}
The main text follows the physical progression from scheduled node updates to continuous stochastic control. All protocols are autonomous and initialized independently of the logical input. We focus on zero deterministic drift; a drift can be added as a separate Markov generator. Table~\ref{tab:protocols} summarizes the state variables, control parameters, and averaging objects for the four protocols.

\begin{fixedtable}
\fixedtablecaption{Control models used in the main text. Fixed refresh samples independent Pauli target frames at prescribed nodes. A bounded Cayley path reaches the same targets through a finite-time interpolation. Brownian and OU controls start from a Haar frame; OU driving starts from its stationary Gaussian law. The Poisson event-time extension and its numerical results are given in Appendix~\ref{app:poisson}.}{tab:protocols}
\begin{ruledtabular}
\begin{tabular}{llll}
Protocol & Markov state & Parameters & Averaging variable\\ \hline
Fixed refresh & current Pauli frame & holding lengths $h_j$ & interval frames\\
Cayley path & frame and edge coordinate & path duration & node frames and paths\\
Brownian & $U_t$ & $\kappa$ & frame path\\
OU & $(U_t,\boldsymbol\xi_t)$ & $\tau,\sigma^2$ & joint frame--velocity path
\end{tabular}
\end{ruledtabular}
\end{fixedtable}

\subsection{Fixed Pauli refresh}
At prescribed nodes $0=t_0<t_1<\cdots<t_K=T$, draw $G_j$ independently and uniformly from the single-qubit Pauli frame set. The frame is held at $G_j$ on $[t_j,t_{j+1})$, while the same system--environment state continues to evolve. If a complete operation slot coincides with a holding interval, its propagator is conjugated by the constant $G_j$ and the independent-interval identity in Eq.~\eqref{eq:pauli-diagonal} applies. Internal cuts of one holding interval share the same frame and are not independent compiled slots.

\paragraph{Relation to gate-level randomized compiling.}
Circuit RC samples frames around prescribed gates or cycles and compiles compensating operations so that the ideal circuit is unchanged \cite{wallman2016,hashim2021}. Fixed refresh retains these node-level draws but specifies the continuous propagation between nodes. CPRC adds the pointwise covariance rule in Eq.~\eqref{eq:instrument-hamiltonian-covariance}, so a finite-duration instrument is implemented in the current frame throughout its support rather than corrected only at its endpoints.

\subsection{Bounded Cayley realization of the fixed nodes}
The independently sampled node frames can be reached by a bounded path on the Pauli Cayley graph. For a relative target $Q_j=G_{j+1}G_j^\dagger$, choose a shortest word in the available generators $X$ and $Z$, and let $V_j(s)$ be the corresponding piecewise exponential path over the transition window $[t_{j+1}-\tau_{\rm tr},t_{j+1}]$:
\begin{equation}
 U_\omega(t)=V_j(t)G_j,\qquad V_j(0)=\id,\quad V_j(\tau_{\rm tr})=Q_j.
 \label{eq:cayley-main-path}
\end{equation}
The target frame at every node has the same independent distribution as in fixed refresh. During interpolation $V_j(s)$ is generally outside the Pauli set, so exact character cancellation applies only to whole constant-frame intervals. A Duhamel comparison with the ideal jump adds at most $2g_{\rm p}$ times the total transition duration. For the full-interval numerical paths this duration is of order $T$, and that comparison does not yield a decreasing bound. The direct estimate instead uses the two adjacent independent target frames on which each path segment depends. Their interaction correlations vanish after two node crossings, giving Eq.~\eqref{eq:cayley-direct-bound}. Appendix~\ref{app:cayley} proves the complete-experiment estimate. The required edge speed rises as $h=T/K$ shrinks, so the sequence represents a control-speed refinement. Related bounded-control and random-decoupling precedents are in Refs.~\cite{violaKnill2003,chen2006,viola2005,hahn2022}.

\subsection{Brownian and OU frame processes}
For Brownian control, choose a grid $u_0<\cdots<u_N$ and define $\Gamma_k=U(u_{k+1})U^\dagger(u_k)$. Its Markov chain rule is
\begin{equation}
 d\Pr(U_0,\ldots,U_N)=d\mu(U_0)\prod_{k=0}^{N-1}
 q_{\Delta_k}(U_{k+1}U_k^\dagger)d\mu(U_{k+1}),
 \label{eq:path-density}
\end{equation}
where $q_\Delta$ is the increment density relative to Haar measure. Brownian increments are independent, but successive endpoint frames remain correlated. For a Brownian heat kernel,
\begin{equation}
 q_\Delta(G)=\sum_{j=0,1/2,1,\ldots}(2j+1)e^{-\kappa j(j+1)\Delta/2}\chi_j(G).
\end{equation}
For OU driving, $\tau$ and $\sigma^2$ set the velocity correlation time and stationary variance. The joint frame--velocity state must be propagated without redrawing the velocity at each grid point. Its high-frequency tail means that finite $\tau$ is not a hard bandwidth cutoff. Brownian motion is obtained from OU only in a separate short-correlation limit with the integrated variance held fixed.

\paragraph{Event-time extension.}
Poisson refresh draws the same Pauli frames as fixed refresh but places the nodes on a rate-$\lambda_r$ clock. It is useful when an unsynchronized, time-homogeneous event model is required, but a finite-rate trajectory can contain a long interval without an update. The complete event-schedule definition, conditional classicalization proof, exponential two-time correlation, finite-pulse bound and numerical scans are collected in Appendix~\ref{app:poisson}.

\begin{proposition}[Physicality of the averaged process]\label{prop:averaged-cpt}
For each main-text control law, whenever the pathwise joint propagator and its cPT are well defined and integrable, the trajectory-averaged logical-frame cPT is positive and causal. Averaging complete trajectories commutes with recovery of any fixed finite comb and with contraction against a common logical finite-duration instrument implemented covariantly on each path. The same statement holds for the event-time extension of Appendix~\ref{app:poisson}.
\end{proposition}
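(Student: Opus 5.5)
The plan is to treat the trajectory average as a convex combination of physical cPTs and to use three facts: positivity and the causality constraints are preserved under mixtures, and contraction and comb recovery are linear. First, fix a control law from Section~\ref{sec:protocols} and a complete trajectory $\omega$. For each such $\omega$, the pathwise joint propagator generated by the random-frame Hamiltonian in Eq.~\eqref{eq:random-frame} is unitary on $S\otimes E$ and acts on a fixed initial state. Its cPT $\cPT_\omega$ in logical coordinates is therefore a physical continuous process tensor in the sense of the background section. Every restriction to finitely many intervention slots is positive and satisfies the hierarchy of partial-trace (causality) constraints. Logical coordinates are obtained by conjugating with $U_\omega(t)$ on the system only. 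This is a local unitary relabelling of each slot, and such a relabelling preserves both positivity and the causal trace conditions.

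Second, I would define the averaged object coefficientwise: in each fixed particle sector, at fixed insertion times, it is $\E_\omega[\cPT_\omega]$. Measurability of $\omega\mapsto\cPT_\omega$ follows from the Markov path laws in Eq.~\eqref{eq:path-density}, the OU joint law, and the Cayley and Poisson schedules, each of which is a measurable function of finitely or countably many independent draws. The integrability hypothesis makes the Bochner integral well defined. Positivity of the average then holds slot-wise. For every finite comb restriction and every positive test operator $P$, we have $\Tr[P\,\E\cPT_\omega]=\E\Tr[P\cPT_\omega]\ge 0$. Each causality constraint is a linear partial-trace identity of the form $\Tr_{\rm out_k}\cPT^{(k)}=\id\otimes\cPT^{(k-1)}$. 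Because the identity appears with a fixed normalization independent of $\omega$, it passes through the expectation.

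Third, I would prove the two commutation statements. Recovery of a fixed finite comb from the cPT is a fixed linear (partial contraction and trace) map $R$, bounded on the relevant sector space. Hence $R(\E\cPT_\omega)=\E R(\cPT_\omega)$, with Fubini justified by integrability. For contraction with a logical finite-duration instrument, take a covariant implementation obeying Eq.~\eqref{eq:covariant-instrument}. By the endpoint identity $\mathcal J_{x,\omega}^{\rm lab}=\operatorname{Ad}_{U_\omega(b)}\circ\mathcal J_x^{\rm log}\circ\operatorname{Ad}_{U_\omega(a)^\dagger}$, in logical coordinates the same $\omega$-independent operation $\mathcal J_x^{\rm log}$ is contracted against $\cPT_\omega$. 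The contraction is thus a fixed bounded linear functional, and it commutes with $\E$. This step is the real content: without pointwise covariance the contracted object would depend on $\omega$, and averaging would not factor.

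The main obstacle I expect is making the interchange rigorous for continuous-time objects. The difficulty is greatest for Brownian motion, where paths are rough, and for the unbounded Gaussian bath fields. The interchange requires norm bounds on $\cPT_\omega$ uniform enough for dominated convergence. I would not prove these here, since they are exactly the assumption ``well defined and integrable''. Instead I would reduce everything to finite-sector coefficients and finite combs, where all maps are bounded. For the Poisson extension the same argument applies after conditioning on the event schedule. Within each conditioned schedule the process is a mixture over Pauli frames. Averaging over the independent Poisson clock is then one further convex combination, which again preserves positivity and causality.
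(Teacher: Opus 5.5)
Your overall architecture matches the paper's: pathwise physicality from the unitary $SE$ dilation, preservation of positivity and of the affine causal constraints under convex mixtures, and commutation of recovery and contraction with $\E_\omega$ by linearity. Your Poisson step, conditioning on the schedule and then mixing, also matches. The gap is in the step you yourself call the real content. You derive $\omega$-independence of the contracted object from the endpoint identity $\mathcal J_{x,\omega}^{\rm lab}=\operatorname{Ad}_{U_\omega(b)}\circ\mathcal J_x^{\rm log}\circ\operatorname{Ad}_{U_\omega(a)^\dagger}$, and then assert that $\mathcal J_x^{\rm log}$ is what gets contracted against $\Upsilon_\omega$. That identity gives the complete map of the \emph{isolated} instrument on $[a,b]$. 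A finite-duration instrument, however, is contracted with the cPT across its whole support. There it is interleaved with the process insertions and the rotated coupling $U_\omega^\dagger(t)SU_\omega(t)\otimes B$ at every interior time. The endpoint map does not determine that contraction. Two instruments sharing the same $\mathcal J_x^{\rm log}$ but with different internal timing give different probabilities against the same cPT; see the remark following Eq.~\eqref{eq:app-inst-two-endpoint-map} and the closed-loop examples after Proposition~\ref{prop:transparency}. Hence $\omega$-independence of $\mathcal J_x^{\rm log}$ does not yield an $\omega$-independent functional on $\Upsilon_\omega$. Replacing the instrument by $\mathcal J_x^{\rm log}$ at a slot would not reproduce how it interleaves with the coupling inside the window, which is exactly the situation the proposition addresses. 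The paper stresses that an endpoint relation is not a substitute here.

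The repair is to use Eq.~\eqref{eq:covariant-instrument} at every time of the support, as the paper does. Conjugating by $U_\omega^\dagger(t)$ returns the same logical pair $(\rho_{\rm log},\mathcal A_t)$ for every $\omega$, so the logical instrument generator is path independent. Equivalently, its field coefficients $c_\nu(t)$ in Eq.~\eqref{eq:app-inst-overlap} are path independent. Another equivalent form is the cancellation of adjacent frame pairs at every insertion in Eq.~\eqref{eq:app-inst-cmps-cov}, using the Stratonovich product rule for Brownian paths and keeping any instrument memory $M$ explicitly. The full-support instrument functional is then one fixed linear functional of $\Upsilon_\omega$, and commutation with $\E_\omega$ follows. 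A smaller point: point-time recovery of a finite comb is not a bounded map on the $L^2$ Fock sectors. The interchange should therefore rest on your coefficientwise definition of the average at fixed distinct times, where recovery is a finite linear combination of integrable coefficients, not on boundedness of $R$.
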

For each trajectory, unitary system--environment dilation and completely
positive instrument maps give positivity, while forward time ordering gives
the causal recursion for the open-output finite comb. Closing its final system
output with the terminal trace gives the corresponding tester normalization;
neither step requires time-independent $H_E$ or $H_{SE}$. Convex combinations
preserve these properties. Linearity of recovery and contraction maps then
gives the stated commutation; the full construction is in
Appendix~\ref{app:physicality}.

With the control processes and their implementation constraints specified, we next examine which error structures are removed by interval-wise averaging and which environmental correlations remain.

\section{Classicalization and operational decoupling}
\subsection{Independent interval classicalization}
\begin{proposition}[Classicalization at independent interval boundaries]\label{prop:interval-classicalization}
Assume the initial joint state factorizes as
$\rho_{SE}(0)=\rho_S\otimes\rho_E$ and is independent of the sampled
frames. Suppose each designated whole-interval joint propagator is conjugated
by an independently sampled uniform Pauli frame, with operation slots only at
the interval boundaries. After averaging, its finite comb is a mixture of
Pauli error histories with nonnegative normalized joint weights. The weights
may retain correlations across intervals.
\end{proposition}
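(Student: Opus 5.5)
The plan is to reduce the averaged finite comb to a product of independent single-interval Pauli twirls acting on the environment-dilated propagators, and then read off the weights from the resulting Pauli character expansion. For each designated interval $[t_j,t_{j+1}]$, write the joint propagator in the Pauli basis on $S$:
\begin{equation*}
 V_j=\sum_{a} P_a\otimes B_{j,a},
\end{equation*}
where $P_a$ ranges over the $d^2$ Pauli operators (with $d^2=4$ for a qubit) and the $B_{j,a}$ are environment operators. In each trajectory the comb is then a multilinear expression in $V_j(\cdot)V_j^\dagger$. Conjugating interval $j$ by a frame $G_j$ replaces $V_j$ by $(G_j^\dagger\otimes\id)V_j(G_j\otimes\id)$. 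Because the operation slots sit only at the interval boundaries, every $G_j$ and $G_j^\dagger$ is absorbed at those boundaries, as fixed by the covariant compilation of Eq.~\eqref{eq:covariant-instrument}. The frames of distinct intervals therefore enter as separate factors.

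Next, since the $G_j$ are independent and independent of $\rho_S\otimes\rho_E$, the average over the joint frame law factorizes into a product of single-interval averages. Each single-interval average is the standard Pauli character identity of Eq.~\eqref{eq:pauli-diagonal}:
\begin{equation*}
 \E_{G}\,(G^\dagger P_a G)\,X\,(G^\dagger P_b G)^\dagger=\delta_{ab}\,P_aXP_a^\dagger .
\end{equation*}
This identity follows because $G^\dagger P_aG=\chi_a(G)P_a$ up to sign, and the uniform average of $\chi_a\overline{\chi_b}$ is $\delta_{ab}$. Applying it slot by slot kills every off-diagonal pair $(a_j,b_j)$ with $a_j\neq b_j$. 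The averaged comb becomes
\begin{equation*}
 \sum_{\mathbf a}\Bigl(\bigotimes_j \mathcal P_{a_j}\Bigr)\, w(\mathbf a),\qquad
 w(\mathbf a)=\Tr_E\!\bigl[\mathcal B_{K,a_K}\circ\cdots\circ\mathcal B_{1,a_1}(\rho_E)\bigr],
\end{equation*}
where $\mathcal P_a=P_a(\cdot)P_a^\dagger$ and $\mathcal B_{j,a}=B_{j,a}(\cdot)B_{j,a}^\dagger$. The factorized initial state is what lets the environment factor be pulled out as a scalar depending only on the label history. Without it, $\rho_{SE}$ correlations would couple system and environment indices.

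It then remains to check the properties of $w(\mathbf a)$. Nonnegativity holds because each $\mathcal B_{j,a}$ is completely positive, so the composition applied to $\rho_E\ge0$ has nonnegative trace. Normalization follows from $\sum_a B_{j,a}^\dagger B_{j,a}\propto\Tr_S(V_j^\dagger V_j)/d=\id_E$, applied iteratively from the last interval backward; this gives $\sum_{\mathbf a}w(\mathbf a)=1$. I expect the main obstacle to be the bookkeeping in this step. The environment is never traced between intervals, so the weights are a sequential, nonfactorized composition, and one must verify that the diagonalization is done slot-by-slot without assuming any environment reset. The same structure shows the final claim: $w(\mathbf a)$ generally does not factor as $\prod_j w_j(a_j)$, because $\mathcal B_{j,a}$ act on a common environment. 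To exhibit this I would point to a static bath with $B_{j,a}$ depending on a shared environment observable, which is exactly the two-cell model treated later. There the label correlations are nonzero, so the weights may retain correlations across intervals as stated.
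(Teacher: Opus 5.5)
Your proposal is correct and follows essentially the same argument as the paper (Appendix~\ref{app:classicalization}). You Pauli-expand each whole-interval unitary, use frame independence and the character identity to remove unequal forward--backward labels, identify the surviving weight as $\Tr_E[B_{K,a_K}\cdots B_{1,a_1}\rho_E B_{1,a_1}^\dagger\cdots B_{K,a_K}^\dagger]\ge0$, normalize it by backward summation using $\sum_aB_{j,a}^\dagger B_{j,a}=\id_E$, and point to the two-cell static-bath model for the retained correlations. One cosmetic point: with $B_{j,a}=d^{-1}\Tr_S[(P_a^\dagger\otimes\id_E)V_j]$, your ``$\propto$'' in the normalization step is an exact equality, and the backward summation needs that equality.
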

Let $a_j,b_j$ denote the Pauli error labels on the two sides of an amplitude, and define the real Pauli character $\chi_a(G)\in\{+1,-1\}$ by $G^\dagger P_aG=\chi_a(G)P_a$ (global phases are irrelevant). If independently refreshed Pauli frames $G_j$ are assigned to the prescribed intervals, the character average factorizes:
\begin{equation}
 \E\!\left[\prod_j\chi_{a_j}(G_j)\chi_{b_j}(G_j)\right]
 =\prod_j\delta_{a_jb_j}.
 \label{eq:pauli-diagonal}
\end{equation}
The assumption behind Eq.~\eqref{eq:pauli-diagonal} is a conjugation of each \emph{whole} interval propagator by one constant frame, with operation slots at the interval boundaries. Prescribed $H_E(t)$ and $H_{SE}(t)$ are included because their time ordering is contained in the frame-independent whole-interval propagator $W_j$. To see why this produces a genuine probability distribution rather than merely a diagonal-looking expansion, write $W_j=\sum_aP_a\otimes B_{j,a}$, where $B_{j,a}=d^{-1}\Tr_S[(P_a^\dagger\otimes\id_E)W_j]$ is an operator on the environment and $d$ is the system dimension. Conjugation by $G_j$ multiplies its system term by $\chi_a(G_j)$. After the independent averages remove every unequal forward--backward pair, the surviving history $\boldsymbol a=(a_1,\ldots,a_m)$ has weight
\begin{equation}
 p(\boldsymbol a)=\Tr_E\!\left[B_{m,a_m}\cdots B_{1,a_1}\rho_E
 B_{1,a_1}^\dagger\cdots B_{m,a_m}^\dagger\right]\ge0.
 \label{eq:main-history-prob}
\end{equation}
Unitarity gives $\sum_a B_{j,a}^\dagger B_{j,a}=\id_E$, so summing the labels backwards yields $\sum_{\boldsymbol a}p(\boldsymbol a)=1$. The specified slots therefore have an exact mixture of Pauli error histories. The common environment state in Eq.~\eqref{eq:main-history-prob} can still correlate their classical labels. For a continuously varying controlled Hamiltonian, independent endpoint frames alone do not supply the required whole-interval conjugation. Equation~\eqref{eq:pauli-diagonal} applies to that interval construction, with its process-level proof in Appendix~\ref{app:classicalization}.

Fixed refresh gives Eq.~\eqref{eq:pauli-diagonal} at its prescribed nodes. Appendix~\ref{app:poisson} derives the event-conditioned version and the residual correlation across preassigned cuts for Poisson refresh.

Continuous frames do not obey Eq.~\eqref{eq:pauli-diagonal} at arbitrary time cuts either. For an isotropic Brownian frame, the adjoint rotation obeys
\begin{equation}
 \E[R_{ab}(U_t)R_{cd}(U_s)]
 =\frac13e^{-\kappa(t-s)}\delta_{ac}\delta_{bd}, \qquad t\ge s,
 \label{eq:brown-correlation}
\end{equation}
which is nonzero at finite separation. OU paths are smoother still at short times, with $a_{\rm OU}(\Delta)=1-O(\Delta^2)$. Refining a numerical grid resolves a correlated trajectory; it does not generate independent Haar frames. Thus exact interval classicalization and rapid operational decoupling are distinct statements.

\subsection{An exact two-cell memory benchmark}
This example makes the distinction between an exactly diagonal error history and operational decoupling explicit. Let the system and environment each be a qubit and take the static bath
\begin{equation}
 H_E=0,\qquad H_{SE}=gZ_S\otimes |1\rangle\langle1|_E,
 \qquad \rho_E=(1-\varpi)|0\rangle\langle0|_E
 +\varpi|1\rangle\langle1|_E,\quad 0<\varpi<1.
 \label{eq:two-cell-model}
\end{equation}
The system starts in $|+\rangle$. On two cells of duration $h$, independently sample uniform Pauli frames $G_j$ and write $y_j=\chi_Z(G_j)=\pm1$. In bath branch $b\in\{0,1\}$, the logical-frame propagation of cell $j$ is $U_{j,b,y}=e^{-igbhy_jZ}$. The bath branch is unchanged at the intermediate cut. Put $c=\cos(gh)$ and $s=\sin(gh)$. For $b=1$, the cell error operator is $cI-iy_jsZ$; for $b=0$, it is $I$. The character identity in Eq.~\eqref{eq:pauli-diagonal} removes forward--backward interference between different labels $a_j\in\{0,1\}$, where $0$ denotes $I$ and $1$ denotes $Z$. Direct multiplication gives
\begin{equation}
 p_{00}=1-\varpi+\varpi c^4,\qquad
 p_{01}=p_{10}=\varpi c^2s^2,\qquad p_{11}=\varpi s^4.
 \label{eq:two-cell-weights}
\end{equation}
These four weights are nonnegative and sum to one. Each marginal error probability is $q=\varpi s^2$, yet
\begin{equation}
 p_{11}-p_{1\cdot}p_{\cdot1}=\varpi(1-\varpi)s^4>0\qquad(s\ne0).
 \label{eq:two-cell-covariance}
\end{equation}
Independent Pauli frames therefore remove quantum interference between error labels, while the shared, unreset bath branch correlates their classical weights. If $H_2$ and $H_4$ denote binary and four-outcome Shannon entropies in bits, the classical error-label mutual information is
\begin{equation}
 I(A_1:A_2)=2H_2(\varpi s^2)
 -H_4(p_{00},p_{01},p_{10},p_{11}).
 \label{eq:two-cell-information}
\end{equation}
This is an entropy of the designated interval labels, not the continuous-field operator CMI introduced below.

To read the history operationally, measure $X$ projectively after each cell, retain both results, and implement these measurements covariantly. The first measurement does not alter the classical bath branch. Conditional on $b=1$, an $X$ outcome flips on one cell with probability $s^2$; the Pauli sign changes only its phase. With initial outcome convention $x_0=+$, the flip labels map to records as $(a_1,a_2)=(0,0)\mapsto(+,+)$, $(0,1)\mapsto(+,-)$, $(1,0)\mapsto(-,-)$, and $(1,1)\mapsto(-,+)$. The uncoupled ideal record is certainly $(+,+)$. Hence the complete-record total-variation distance is exactly
\begin{equation}
 \epsilon_{2\,{\rm cells}}=1-p_{00}
 =\varpi[1-\cos^4(gh)]
 =\varpi g^2Th+O(g^4h^2T^2),\qquad T=2h.
 \label{eq:two-cell-record}
\end{equation}
Corollary~\ref{cor:static-record} gives $\epsilon\le(3/2)g^2Th$ for the same interaction norm $g$. The exact example attains the $g^2Th$ order but not the conservative prefactor. It is a memory-bearing classicalization benchmark at specified constant-frame slots; a full-interval Cayley transition does not satisfy the same whole-cell conjugation premise.

The construction extends to $K=T/h$ independently sampled constant frames, with an $X$ measurement and retained result at each node. For an error string $\boldsymbol a\in\{0,1\}^K$, let $|\boldsymbol a|$ be its Hamming weight. Since the bath branch persists,
\begin{equation}
 p(\boldsymbol a)=(1-\varpi)\mathbf 1_{\boldsymbol a=\boldsymbol 0}
 +\varpi c^{2(K-|\boldsymbol a|)}s^{2|\boldsymbol a|},
 \qquad
 \epsilon_K=\varpi(1-c^{2K}).
 \label{eq:K-cell-TV}
\end{equation}
At fixed $T$ and small $h$, $\epsilon_K=\varpi g^2Th+O(g^4T^2h^2)$. These closed expressions provide an analytic benchmark for the rate and the persistence of classical memory.

\subsection{Process convergence and complete-record error}
\begin{proposition}[Bounded-environment fixed-sector limit]\label{prop:finite-sector}
Fix a cPT particle number, insertion times, and bounded insertion maps. Suppose the system--environment interaction is bounded, the control is autonomous and independent of the initial joint state, and the intended instrument is pointwise covariant. If every finite multi-time tensor moment of the rotated, centered system coupling vanishes at distinct fixed times in the fast-control limit, then each trajectory-averaged particle amplitude converges to its uncoupled value. This applies to a finite-dimensional, non-Gaussian environment without resetting its state. The single-qubit environment used below satisfies the boundedness premise.
\end{proposition}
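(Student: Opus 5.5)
We work in the logical frame throughout and reduce the fixed-sector amplitude to a Dyson series in the rotated coupling. Fix the particle number $n$, insertion times $s_1<\dots<s_n$ and bounded insertion maps. On a single trajectory $\omega$, the pointwise covariance rule of Eq.~\eqref{eq:instrument-hamiltonian-covariance} and the cancellation argument behind Proposition~\ref{prop:transparency} remove both the frame-generating term and the $\omega$-dependence of every logical instrument. The only remaining trajectory dependence is then the rotated coupling
\begin{equation}
 \widetilde H_{SE}^{\omega}(t)=(U_\omega^\dagger(t)\otimes\id_E)H_{SE}(t)(U_\omega(t)\otimes\id_E),
\end{equation}
as in Eq.~\eqref{eq:random-frame}. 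The particle amplitude is a multilinear contraction of the insertion maps with the forward and backward joint propagators, which are interlaced with the common environment propagation generated by $H_E(t)$. Expand each propagator in a Dyson series in $\widetilde H_{SE}^{\omega}$, on both the forward and backward sides. The zeroth-order term is the uncoupled amplitude, and it is the same for every $\omega$. The $k$-th order term is a time-ordered integral over $[0,T]^k$ of a fixed, $\omega$-independent multilinear functional applied to the tensor product $\widetilde H^{\omega}_{SE}(t_1)\otimes\cdots\otimes\widetilde H^{\omega}_{SE}(t_k)$. That functional includes environment propagators, the initial state $\rho_{SE}(0)$, the insertion maps and the logical instruments.

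Next we average over trajectories. Because the control is autonomous and independent of the initial joint state, the expectation passes onto the tensor products alone. We use Fubini, which is justified by the bound $\|\widetilde H^{\omega}_{SE}\|\le g_{\max}$ and by the fact that the insertion maps are bounded. The averaged $k$-th term is then the same fixed functional applied to the moment $\E[\widetilde H^{\omega}_{SE}(t_1)\otimes\cdots\otimes\widetilde H^{\omega}_{SE}(t_k)]$. Centering enters here. The system-identity part of $H_{SE}$ has already been absorbed into $H_E$, so these moments are exactly the moments of the rotated, centered coupling named in the hypothesis. By assumption, each such moment tends to zero at distinct times $t_1,\dots,t_k$ in the fast-control limit. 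The coincident-time set has Lebesgue measure zero. The integrand is bounded by $C^k g_{\max}^k$, where $C$ depends only on the fixed insertion maps and the fixed sector. Dominated convergence therefore sends each averaged order $k\ge1$ to zero.

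The remaining step, which I expect to be the main obstacle, is exchanging the fast-control limit with the infinite Dyson sum. I would apply dominated convergence for series. The $k$-th term, integrated over the ordered simplex, is bounded uniformly in $\omega$ and in the control parameter by $(2Cg_{\max}T)^k/k!$. The factor $2$ accounts for the forward and backward branches, and the series of these bounds is summable. This gives convergence of the full averaged amplitude to the uncoupled value. The same tail bound controls the terms that arise between insertion times, because the number of insertion slots $n$ is fixed.

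Finite dimensionality of $E$ enters only through boundedness of $H_{SE}$, which guarantees that the Dyson series and the trace are well defined. No Gaussianity is used, and nothing requires the environment to be reset. The environment state is carried through the fixed functional unchanged, so memory is allowed. For the qubit example, $\|0.8\,\sigma_z\otimes\sigma_x\|=0.8$, so the boundedness premise holds.
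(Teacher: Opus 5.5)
Your proposal is correct and follows the paper's proof essentially step for step. Both arguments use the same three ingredients:
\begin{itemize}
\item a Dyson expansion in the rotated coupling around the $\omega$-independent logical propagation;
\item the moment hypothesis, which makes each averaged positive order vanish at distinct times (coincident-time sets having measure zero);
\item an $\omega$- and rate-uniform $(2gT)^k/k!$-type majorant, which lets dominated convergence pass through the whole series.
\end{itemize}
The only difference is cosmetic. The paper expands once in Liouville space with $\mathcal L_r(t)X=-i[H_{{\rm int},r}(t),X]$, which gives the factor $2$ directly and handles non-unitary logical instruments without a dilation. Your separate forward/backward expansions recover the same bound through $\sum_{k_1+k_2=k}1/(k_1!\,k_2!)=2^k/k!$.
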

The proof expands the complete joint evolution in a bounded Dyson series. Finite-time control moments eliminate every nonzero order, and a uniform exponential majorant permits summation. Prescribed $H_E(t)$ and $H_{SE}(t)$ are allowed when the environment propagator exists and the coupling has a control-independent integrable norm envelope; Appendix~\ref{app:fixed-sector-general} gives the corresponding majorant. In the numerical qubit model this applies to every operation-label pair at the specified two insertion times. Fixed and Poisson refresh, node-matched full-cell Cayley paths, and Brownian control satisfy the moment condition in their fast limits. OU does so under \emph{overall time acceleration} and mixing of the joint frame--velocity process. Appendix~\ref{app:fixed-sector-general} proves the result and checks each protocol. This coefficient statement uses weaker mixing assumptions than a finite-rate record bound; its conclusion concerns fixed coefficients.

\begin{proposition}[Time-integrable Gaussian fixed-sector limit]\label{prop:fixed-sector}
Assume an initially factorized state $\rho_{SE}(0)=\rho_S\otimes\rho_E$,
where $\rho_E$ is centered Gaussian (or quasifree), and a coupling linear in
finitely many Hermitian bath fields with bounded Hermitian system factors and covariance matrix $C(t,s)$ integrable in the matrix spectral norm on
$[0,T]^2$. Assume the autonomous control is independent of this initial
state. At each cutoff, assume the purified Fock-field realization and integrable form-factor norm of Eq.~\eqref{eq:app-gaussian-cutoff-realization}; Appendix~\ref{app:gaussian-L1-sector} constructs its unitary propagator and identifies its insertion coefficients with the Dyson--Wick expansion. Let the covariance family converge to $C$ in that $L^1$ norm; define uncut fixed
coefficients by this cutoff limit. For fixed particle order, times, and bounded
control-independent logical insertions, assume
the same finite-time control-moment condition for every coupling channel.
Then each trajectory-averaged cPT coefficient tends to its uncoupled value.
The ultraviolet-cutoff and fast-control limits commute for this fixed sector.
In the stationary single-axis model, independent Pauli refresh gives a
conservative $O(\sqrt h)$ bound when $C\in L^2(0,T)$.
\end{proposition}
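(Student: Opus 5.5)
We prove Proposition~\ref{prop:fixed-sector} by a Dyson--Wick expansion with a dominating majorant, organized in four steps.

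\textbf{Step 1 (one cutoff, interaction picture).} We work at a fixed ultraviolet cutoff, where Eq.~\eqref{eq:app-gaussian-cutoff-realization} gives a unitary propagator on the purified Fock space. We pass to the interaction picture with respect to $H_E(t)$ and the logical-frame instrument Hamiltonian. Pointwise covariance, through Eq.~\eqref{eq:instrument-hamiltonian-covariance} and the argument behind Proposition~\ref{prop:transparency}, removes every frame dependence except that of the rotated system factors $\widetilde S_{k,\omega}(t)=U_\omega^\dagger(t)S_kU_\omega(t)$. These factors appear inside the control-independent logical instrument propagation.

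The fixed-sector coefficient is a contraction of the forward--backward Dyson series with the fixed bounded insertions. Since $\rho_E$ is quasifree, each order-$n$ term is a time-ordered integral over $[0,T]^n$. Its integrand is a sum over Wick pairings of products of covariances $C_{kl}(t_i,t_j)$, multiplied by a system expectation. That system expectation is a product of the $\widetilde S_{k,\omega}(t_i)$ interleaved with the fixed logical maps.

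Averaging over the control, which is independent of the initial state, acts only on the system factor. By boundedness, the control average of that factor is a finite control tensor moment, uniformly bounded by $\prod\|S_{k_i}\|$. The $n=0$ term is exactly the uncoupled coefficient.

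\textbf{Step 2 (majorant and dominated convergence).} We bound each pairing by $\prod_{\text{pairs}}\|C(t_i,t_j)\|$ times $\max_k\|S_k\|^{n}$. Integrating over the unordered simplex with at most $(n-1)!!$ pairings gives a bound of order $(\|C\|_{L^1([0,T]^2)}s^2)^{n/2}$, times a combinatorial factor. The time ordering together with the two contour branches supplies a compensating $1/(n/2)!$, or we use the standard linked-cluster resummation. Either way the series is dominated by an exponential $\exp(c\,\|C\|_{L^1}s^2)$, uniformly in $\omega$ and in the cutoff.

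For each fixed order and pairing, the control-averaged integrand tends pointwise (for a.e.\ distinct times) to zero, by the assumed finite-time moment condition. The centered coupling ensures that only nonzero orders appear, and there each term carries at least one rotated factor. The integrand is dominated by the integrable $\prod\|C\|$. Dominated convergence in the time variables, followed by Tannery's theorem over $n$, then gives convergence to the uncoupled value.

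\textbf{Step 3 (cutoff removal and commutation).} The majorant in Step 2 is multilinear in $C$. The difference of two order-$n$ terms with covariances $C$ and $C'$ is therefore bounded by $n\|C-C'\|_{L^1}\,M^{n-1}$, with $M$ a uniform $L^1$ bound along the convergent family. Hence coefficients converge uniformly in $\omega$ and in the control rate as the cutoff is removed. Uniform convergence lets us interchange the cutoff limit with the fast-control limit (Moore--Osgood), which yields the commutation claim.

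The main obstacle is the summability in Step 2 when only $L^1$ integrability holds. Here $C(0)$ may diverge, so we cannot bound the coupling operator norm. Everything must be routed through pair integrals, and the factorial control coming from time ordering must survive interleaving with the logical insertions. The construction of Appendix~\ref{app:gaussian-L1-sector} is what identifies the unitary propagator's coefficients with this Wick series.

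\textbf{Step 4 ($O(\sqrt h)$ rate).} In the stationary single-axis model, the leading nonzero term is the pair term $\int\!\!\int C(t-s)\,\E[y(t)y(s)]\,dt\,ds$, with $y=\chi_Z$ of the current frame. For independent Pauli refresh, $\E[y(t)y(s)]$ vanishes unless $t$ and $s$ share a holding interval, so the integral is supported on the band $|t-s|\le h$. Cauchy--Schwarz bounds it by $T\,\|C\|_{L^2}\sqrt{2h}$. The same band restriction on each pair, applied inside the majorant, bounds higher orders by powers of this quantity, giving the conservative $O(\sqrt h)$ bound.
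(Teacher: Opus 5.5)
Your Steps 1--3 are essentially the paper's proof of Proposition~\ref{prop:app-gaussian-sector}. It uses a pathwise Wick majorant $K_n(aA)^m/m!$, dominated convergence order by order via the finite-moment condition, and a telescoped $L^1$ estimate that makes the cutoff family Cauchy uniformly in the control, so the two limits commute. Two bookkeeping corrections: the $1/m!$ comes from the simplex volume and the pairing count, $(2m-1)!!/(2m)!=1/(2^m m!)$; the $2^{2m}$ branch choices and the channel sum only enlarge the constant. The channel sum needs $\sum_\alpha\|S_\alpha\|_\infty$ rather than $\max_k\|S_k\|$.

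The gap is in Step~4, which does not establish the $O(\sqrt h)$ rate for the full coefficient. For independent refresh signs, $\E[y(v_1)\cdots y(v_{2m})]$ equals one exactly when every holding cell contains an even number of coupling times, and vanishes otherwise. This constrains the cell-occupation pattern, not the Wick pairing of the bath covariances. At fourth order, put $v_1<v_2$ in $I_k$ and $v_3<v_4$ in $I_l$ with $k\neq l$. The sign moment is one, yet the pairings $\{(1,3),(2,4)\}$ and $\{(1,4),(2,3)\}$ join different cells and are not confined to $|t-s|\le h$. So ``the same band restriction on each pair'' is false, and nothing in your argument bounds the orders $m\ge2$.

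The paper avoids the Dyson series for this rate. With a fixed single-axis eigenbasis, each coefficient is a finite sum of bounded system weights times exact influence factors $e^{-X_{\boldsymbol y}}$, Eq.~\eqref{eq:app-discrete-influence}. Since $\operatorname{Re}X_{\boldsymbol y}\ge0$ (the positive symmetrized-correlation form of the forward-minus-backward source), $|1-e^{-X_{\boldsymbol y}}|\le|X_{\boldsymbol y}|$ controls all orders at once. Splitting $X_{\boldsymbol y}=A_h+Z_{\boldsymbol y}$:
\begin{itemize}
\item the same-cell part is bounded by a multiple of $T\int_0^h|C|\le T\sqrt h\,\|C\|_{L^2}$;
\item orthogonality of distinct sign products gives $\E|Z_{\boldsymbol y}|\le\bigl(\sum_{k>l}|D_{kl}|^2\bigr)^{1/2}=O(h)$ after Cauchy--Schwarz on each rectangle.
\end{itemize}
Together these give Eq.~\eqref{eq:app-L2-refresh}.

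If you prefer to keep the Wick expansion, use one constrained link, as the paper does for Brownian and OU controls in Eq.~\eqref{eq:app-control-Gaussian-rate}. Even occupation forces a designated coupling time to share its cell with some other coupling time. If that time is its Wick partner, the pair contributes $O(T\int_0^h|C|)$. Otherwise, integrating out the two Wick partners gives at most $w_*^2$, with $w_*=\sup_t\int_0^T|C(t-u)|\,du\le2\|C\|_{L^1(0,T)}$, times a set of measure $O(Th)$. The factor $2m-1$ from choosing the cell-mate is harmless against the $1/m!$ majorant.
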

The proof retains all bath pairings in a cutoff-uniform Dyson--Wick series. A time-dependent environment is included when it preserves the quasifree bath structure and the resulting nonstationary covariance satisfies the stated $L^1$ and cutoff-stability conditions; prescribed time-dependent coupling coefficients enter that covariance. In the independent-sign case, separating within-cell and cross-cell influence terms gives a direct estimate. The uncut Drude spectrum has $C(0)=\infty$ but $C\in L^2(0,T)$, so it satisfies the integrable-covariance premise. Its short-time logarithm sharpens the fixed-refresh rate to $O[h\log(\tau_0/h)]$. For Brownian and time-accelerated OU frames with exponential \emph{joint-state} mixing, the Drude sector has an $O(\log r/r)$ bound, where $r$ is the dimensionless control acceleration. A general $L^2$ kernel with integrable control mixing gives $O(r^{-1/2})$. Appendix~\ref{app:fixed-sector-general} proves these rates under stated cutoff and mixing hypotheses. The numerical Drude OU sweep instead holds the velocity correlation time fixed while changing its variance.
\begin{theorem}[Fock-norm convergence of the trajectory-averaged cPT]
\label{thm:fock-convergence}
Fix a finite-dimensional system, a finite interval $[0,T]$, and the
control-independent operation basis of Eq.~\eqref{eq:cpt-expansion}.
Consider the trace-closed logical-frame cPT before contraction with a
selected instrument, with normalized input and pointwise-covariant logical
operations. The autonomous control is independent of the initial state,
and the rotated, centered system factors satisfy the all-finite-moment
condition of Proposition~\ref{prop:finite-sector}. Assume either
(i) a finite-dimensional environment with the bounded interaction of
Proposition~\ref{prop:finite-sector}, or
(ii) the factorized centered Gaussian input, finite-channel linear coupling,
and $L^1$ covariance and cutoff-stability assumptions of
Proposition~\ref{prop:fixed-sector}.
Define the trajectory average by
$|\overline\Upsilon_{T,r}\rangle=\E_\omega|\Upsilon_{T,r,\omega}\rangle$,
first at each cutoff in case (ii).
Writing $|\Upsilon_{T,{\rm id}}\rangle$ for its uncoupled counterpart,
\begin{equation}
 \lim_{r\to\infty}
 \bigl\||\overline\Upsilon_{T,r}\rangle
       -|\Upsilon_{T,{\rm id}}\rangle\bigr\|_\Gamma=0.
 \label{eq:fock-convergence}
\end{equation}
In case (ii), the uncut averaged process is defined by the Fock-norm limit
of its regularized processes. The ultraviolet-cutoff and fast-control
limits commute in this norm. This includes the uncut Drude model under
the covariance assumptions of Proposition~\ref{prop:fixed-sector}.
\end{theorem}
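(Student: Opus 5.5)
The plan is a dominated-convergence argument over the Fock decomposition. Expand the trace-closed logical-frame cPT as in Eq.~\eqref{eq:cpt-expansion}:
\[
|\Upsilon_{T,r,\omega}\rangle=\bigoplus_{n\ge0}|\Upsilon^{(n)}_{T,r,\omega}\rangle .
\]
Here the $n$-particle sector is a function of the ordered insertion times $0\le t_1<\cdots<t_n\le T$, valued in the finite operation basis. By definition,
\[
\|\cdot\|_\Gamma^2=\sum_n\int_{\Delta_n}\|\cdot\|^2 .
\]
Because averaging is linear and the basis is control independent, the trajectory average acts sectorwise and pointwise in the times. The squared error therefore equals
\[
\sum_n\int_{\Delta_n}\bigl\|\overline\Upsilon^{(n)}_{T,r}(\mathbf t)-\Upsilon^{(n)}_{T,\rm id}(\mathbf t)\bigr\|^2 d\mathbf t .
\]
Pointwise convergence of the integrand at distinct times is exactly Proposition~\ref{prop:finite-sector} in case (i) and Proposition~\ref{prop:fixed-sector} in case (ii). Coincident times form a null set. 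What remains is an $r$-uniform, summable majorant.

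\textbf{Majorant.} I would first bound each pathwise sector coefficient uniformly in $\omega$ and $r$ by a function $F_n(\mathbf t)$ with $\sum_n\int_{\Delta_n}F_n^2<\infty$. Jensen's inequality then transfers this bound to the average.

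In case (i) the coefficient is a contraction of the unitary joint dilation with $n$ bounded insertions. I would take its norm to be at most $c^n$ for a constant $c$ depending only on the fixed operation basis and system dimension. The constant is uniform because the frames $U_\omega$ are unitary and covariance only conjugates the insertions. Since $\mathrm{vol}(\Delta_n)=T^n/n!$, the tail $\sum_n c^{2n}T^n/n!$ converges. Normalized input and trace closure keep the $n=0$ term bounded.

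In case (ii) I would use the cutoff-uniform Dyson--Wick series from the proof of Proposition~\ref{prop:fixed-sector}. Each insertion coefficient is a sum over pairings of products of covariances, and the system factors are bounded by unitarity of the frames. Integrating over the remaining Dyson times with $\|C\|_{L^1}$ gives an exponential majorant. The step I expect to be hardest is getting a bound that is square-integrable in the insertion times themselves and uniform in the cutoff. The Drude covariance is only $L^2$ and has $C(0)=\infty$, so a pointwise bound at coincident insertion times fails. I would therefore bound the sector norm $\int_{\Delta_n}\|\cdot\|^2$ directly rather than the coefficient pointwise. To do this I would Cauchy--Schwarz the Wick pairings so that each paired insertion contributes a factor of $\|C\|_{L^2}^2$ or $\|C\|_{L^1}$, aiming for a bound of the form $K^n/\sqrt{n!}$ or better. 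Failing that, I would fall back on the purified Fock realization of Eq.~\eqref{eq:app-gaussian-cutoff-realization}. There the pathwise cPT is a normalized cMPS-type state whose $\Gamma$-norm is fixed by a trace identity independent of the control path. That gives a uniform bound on the full norm, and then a uniform tail on $\sum_{n>N}$ from integrable form factors.

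\textbf{Conclusion.} Given the uniform tail, fix $\varepsilon>0$ and choose $N$ so that the sectors $n>N$ contribute less than $\varepsilon$ for all $r$, both for the average and for the ideal process. The finitely many remaining sectors converge by dominated convergence on $\Delta_n$. This proves Eq.~\eqref{eq:fock-convergence}.

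\textbf{Case (ii) at the cutoff level.} The same majorant is cutoff-uniform and the covariances converge in $L^1$. Repeating the argument shows that the regularized averaged processes form a Cauchy family in $\|\cdot\|_\Gamma$ uniformly in $r$. This defines the uncut average as their limit. Uniform convergence in one limit parameter together with pointwise convergence in the other gives commutation of the cutoff and fast-control limits by the standard Moore--Osgood argument. The Drude model is covered because it satisfies the $L^1$ and $L^2$ premises noted after Proposition~\ref{prop:fixed-sector}.
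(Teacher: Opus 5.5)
Your route is the paper's: pointwise sector convergence from Propositions~\ref{prop:finite-sector} and \ref{prop:fixed-sector}, an $r$-uniform coefficient majorant $MR^n$, dominated convergence on each $\Delta_n$, and the uniform tail $4M^2\sum_{n>N}(qR^2T)^n/n!$ from $q^n$ label strings and $|\Delta_n|=T^n/n!$. The Gaussian case then uses a uniform-in-$r$ Cauchy estimate for cutoff removal followed by the triangle inequality.

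The proposal goes astray at the step you single out as hardest, which is not actually an obstacle. In the Dyson--Wick expansion the covariance is evaluated only at pairs of \emph{integrated} interaction vertices $v_a<v_b$. The insertion times $t_j$ enter only through the time ordering of bounded system maps and never as arguments of $C$. So the exponential majorant you already wrote down is pointwise in $\boldsymbol t$: $|\overline{\mathcal C}^{(n),\mathrm{tr}}_{\boldsymbol\nu,\Lambda,r}(\boldsymbol t)|\le R^ne^{aA_*}$ with $A_*=\sup_\Lambda\|C_\Lambda\|_{L^1([0,T]^2)}$. This holds uniformly in $\boldsymbol t$, $r$, and the cutoff, and $C(0)=\infty$ affects only a null set of the vertex integral. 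This is Eq.~\eqref{eq:app-fock-gaussian-majorant}; no $L^2$ Cauchy--Schwarz over insertion times is needed.

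You should discard your fallback rather than keep it in reserve. The form-factor quantity $L_\Lambda$ of the cutoff realization is not cutoff-uniform: for Drude, $\|f_\Omega(t)\|^2=C_\Omega(0)$ grows without bound as the cutoff is removed. A tail built from it therefore does not survive cutoff removal, and a uniform bound on the full norm alone would not give a uniform tail anyway.

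For the uniform-in-$r$ Cauchy property, the uniform majorant plus $L^1$ convergence of $C_\Lambda$ is not enough. Telescope one pair covariance at a time to get $\sup_r|\overline{\mathcal C}_{\Lambda,r}-\overline{\mathcal C}_{\Lambda',r}|\le R^nae^{aA_*}\|C_\Lambda-C_{\Lambda'}\|_{L^1}$, again pointwise in $\boldsymbol t$. Summing over sectors gives the Fock-norm bound with factor $e^{qR^2T/2}$ in Eq.~\eqref{eq:app-fock-cutoff-stability}. With that estimate, your Moore--Osgood step is exactly the paper's triangle-inequality argument.
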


The ordered-simplex Fock norm and factorial sector estimates are developed
in Ref.~\cite{fuksa2026dilation}. Appendix~\ref{app:fock-convergence}
proves the uniform coefficient majorant and summable tail needed here,
then combines them with Propositions~\ref{prop:finite-sector}
and \ref{prop:fixed-sector}. Thus the conclusion concerns the entire
trajectory-averaged process vector, not an individual control trajectory.
For any fixed continuous instrument $\mathcal I$ with finite induced norm,
the continuous Born rule gives
\begin{equation}
 \|p_r-p_{\rm id}\|_{\rm TV}
 \le\frac12\|\mathcal I\|_{\Gamma\to L^1}
 \bigl\||\overline\Upsilon_{T,r}\rangle
       -|\Upsilon_{T,{\rm id}}\rangle\bigr\|_\Gamma,
 \label{eq:fock-instrument-bound}
\end{equation}
as in Eq.~(8) of Ref.~\cite{fuksa2026dilation}.
This also gives convergence uniformly over instruments with a common
finite norm bound. It supplies no general convergence rate or entropy
continuity statement. The following theorem instead gives an explicit
complete-record bound under bounded coupling without imposing a common
Fock-space instrument norm.

Consider a family $\mathfrak I_{n_{\rm inst}}$ of adaptive experiments with at most $n_{\rm inst}$ finite instruments. Let $x$ denote the complete logical record and define
\begin{equation}
 \epsilon_{n_{\rm inst}}(T)=\sup_{\mathcal I\in\mathfrak I_{n_{\rm inst}}}
 \frac12\sum_x|p_{\rm comp}(x|\mathcal I)-p_{\rm id}(x|\mathcal I)|.
 \label{eq:operational-error}
\end{equation}
\begin{theorem}[Complete-experiment operational error bound]\label{thm:complete-experiment}
Let the full Markov control state $Z_t$ be initially independent of the joint state. Its transition law is autonomous and is not altered by feedback from the system, instruments, or previously generated measurement records; the control may be statistically correlated with the records it generates. Its time-dependent laws $\pi_t$ and transition kernels $P_{s,t}$ satisfy $\pi_sP_{s,t}=\pi_t$. Let prescribed $H_E(t)$ generate a common environment propagator, and let the rotated system--environment interaction be centered at each time, with $\int\pi_t(dz)H_{{\rm int},z}(t)=0$ and $\|H_{{\rm int},z}(t)\|_\infty\le g(t)$ for a nonnegative $g\in L^1[0,T]$. Every finite-duration instrument is pointwise covariant as in Eq.~\eqref{eq:covariant-instrument}; in the logical frame, the complete record-preserving instruments and terminal readout are independent of $z$. With
$b(s,t)=\int\pi_s(dz)\|P_{s,t}(z,\cdot)-\pi_t\|_{\rm TV}$, the complete adaptive-record error obeys, uniformly over this experiment family,
\begin{equation}
 \epsilon_{n_{\rm inst}}(T)\le\min\!\left\{1,4\int_0^Tdt\int_0^t ds\,g(t)g(s)b(s,t)\right\}.
 \label{eq:general-bound}
\end{equation}
The ideal experiment retains the same time-ordered environment propagation and removes only the centered interaction.
\end{theorem}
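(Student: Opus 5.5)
Pass to logical coordinates along each trajectory, as in the proof of Proposition~\ref{prop:transparency}. Pointwise covariance, Eq.~\eqref{eq:covariant-instrument}, makes every instrument and the terminal readout independent of $z$. The frame-generating term cancels. The only $\omega$-dependence left in the joint evolution is the rotated interaction $H_{{\rm int},Z_t}(t)$, which sits on top of the common environment propagator generated by $H_E(t)$. For a fixed adaptive experiment $\mathcal I$, write the complete-record probability as $p_{\rm comp}(x|\mathcal I)=\E_\omega\Tr[\Pi_x\,\mathcal T_\omega\rho]$. Here $\mathcal T_\omega$ is the time-ordered composition of the interaction-picture propagation and the record-preserving instrument maps. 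The ideal $p_{\rm id}$ uses the same composition with the interaction removed.

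Expand $\mathcal T_\omega$ in a Duhamel/Dyson series in the interaction superoperator $\mathcal L_{z}(t)=-i[H_{{\rm int},z}(t),\cdot\,]$, which has induced norm at most $2g(t)$. The instruments are interleaved, but they are CP trace-nonincreasing maps summing over $x$ to a trace-preserving map. Summed over records, each intermediate step is therefore contractive in trace norm, so the TV distance is controlled by the trace norm of the difference of the record-resolved output states. Rather than expanding fully, I would use a second-order Duhamel identity with exact remainder:
\begin{equation*}
 \mathcal T_\omega-\mathcal T_{\rm id}
 =\int_0^T\!dt\,\mathcal T_{\rm id}(T,t)\mathcal L_{Z_t}(t)\mathcal T_{\rm id}(t,0)
 +\int_0^T\!dt\int_0^t\!ds\,\mathcal T_\omega(T,t)\mathcal L_{Z_t}(t)\mathcal T_{\rm id}(t,s)\mathcal L_{Z_s}(s)\mathcal T_{\rm id}(s,0).
\end{equation*}
The initial control state is independent of the joint state, and $\mathcal T_{\rm id}$ is $z$-independent. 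Hence, after averaging, the first-order term is linear in $\int\pi_t(dz)H_{{\rm int},z}(t)=0$ and vanishes, by centering. This is why the $\min$ with $1$ is needed only as the trivial TV bound.

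For the second-order term, condition on $Z_s$ and use the Markov property. Everything after $t$, namely $\mathcal T_\omega(T,t)$ together with the record probabilities, depends on the control path only through $Z_t$ and the future control increments. This is because the control is autonomous with no feedback: records may correlate with $Z$, but cannot steer it. So the average factorizes as $\int\pi_s(dz_s)\int P_{s,t}(z_s,dz_t)\,F(z_t)\,\mathcal L_{z_s}(s)$. Here $F(z_t)$ collects $\E[\mathcal T_\omega(T,t)\mid Z_t=z_t]\,\mathcal L_{z_t}(t)\mathcal T_{\rm id}(t,s)$, whose induced norm is at most $2g(t)$.

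Now subtract the stationary-slice term $\int\pi_t(dz_t)F(z_t)$. That term multiplies $\int\pi_s(dz_s)\mathcal L_{z_s}(s)=0$, so subtracting it is free. What remains is bounded by $\int\pi_s(dz_s)\,\|P_{s,t}(z_s,\cdot)-\pi_t\|_{\rm TV}\cdot 2\sup\|F\|\cdot\|\mathcal L_{z_s}(s)\|$. The factor $2$ converts TV to an $L^1$ difference of measures. This gives $2\cdot2g(t)\cdot2g(s)\,b(s,t)=8g(t)g(s)b(s,t)$ in trace norm, and the final $\tfrac12$ in TV yields the coefficient $4$. The supremum over $\mathfrak I_{n_{\rm inst}}$ is uniform because no bound used $n_{\rm inst}$. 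The same argument works with time-dependent $\pi_t$ thanks to $\pi_sP_{s,t}=\pi_t$.

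The main obstacle is making the conditioning step rigorous in the presence of adaptive records correlated with the control. The future map $\mathcal T_\omega(T,t)$ depends on the whole future path, not just on $Z_t$, so the factorization must go through the full Markov state $Z_t$, the joint quantum-classical state at time $t$, and the record-resolved CP maps summed over $x$. I would first formalize the combined object as a single random CP map on $\mathcal B(\mathcal H_{SE})\otimes\ell^1(\text{records})$ adapted to the control filtration. I would then check that its conditional expectation given $\mathcal F_t$ is a function of $Z_t$ alone, with norm at most $1$. Measurability, and interchanging the integrals via the $L^1$ envelope $g$, would be routine once that structure is in place.
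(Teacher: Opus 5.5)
Your proof is correct. It reaches the same constant by a route that is essentially the time-reversed, or dual, version of the paper's argument.

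\textbf{The paper's route.} The paper does not use a second-order propagator identity. It keeps the exact interacting dynamics in the conditional state $\rho_t(z)=\E[\hat\rho_t\mid Z_t=z]$, which carries system, bath, instrument memory and record register. It then writes a first-order Duhamel formula for the control-dependent fluctuation $q_t=\rho_t-\overline\rho_t$, with source $f_s(z)=\mathcal V_z(s)[\rho_s(z)]$. The key lemma bounds the backward conditional expectation $\mathcal K_{t,s}$ minus the constant projection, as a map $L^\infty(\pi_s)\to L^1(\pi_t)$, by $2b(s,t)$. This gives $\|q_t\|_{\pi_t,1}\le4\int_0^tg(s)b(s,t)\,ds$. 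Centering is applied at the \emph{later} time $t$ to remove $\int\pi_t(dz)\mathcal V_z(t)[\overline\rho_t]$, and only the ideal $\Phi_0(T,t)$ acts afterwards.

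\textbf{Your route.} You place the exact evolution on the late side. The Markov property then lets you replace $\mathcal T_\omega(T,t)$ by $\E[\mathcal T_\omega(T,t)\mid Z_t]$. You apply the forward kernel $P_{s,t}$ to the bounded map-valued function $F(Z_t)$, and use centering at the \emph{earlier} time $s$, where $\mathcal T_{\rm id}(s,0)\rho$ is still deterministic. Your estimate via $\int\pi_s(dz_s)\|P_{s,t}(z_s,\cdot)-\pi_t\|_{\rm TV}$ is exactly the adjoint, with respect to $\pi_s(dz)P_{s,t}(z,dz')$, of the paper's operator-kernel lemma. The constants $2\cdot2g(t)\cdot2g(s)b(s,t)$, halved for total variation, match.

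\textbf{What each buys.} Your version is closer to a standard random-decoupling Dyson argument and makes the exact cancellation of the first-order term transparent. Its cost is the step you flag: conditioning an entire future random CPTP map on $Z_t$. This is best done through the regular conditional law of the future control path, acting on the $\mathcal F_t$-measurable trace-class vector, rather than through a Bochner expectation in the non-separable space of superoperators. On the standard Borel control spaces assumed here this is available.

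The paper's conditional-state formulation stays in the separable trace class throughout and needs only the two-time joint law. As a byproduct it yields a quantitative bound on the control--state correlation $\|q_t\|_{\pi_t,1}$ at every intermediate time.
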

The bounded-interaction hypothesis is distinct from the time-integrable covariance hypothesis of Proposition~\ref{prop:fixed-sector}. A Gaussian field may be unbounded as an operator even when its finite-time covariance is integrable; conversely, an operator-norm bound does not supply the covariance and cutoff assumptions of the Gaussian proposition.

The proof keeps the system, environment, instrument memory, and complete classical record in one conditional state $\rho_t(z)$. Its control-dependent part is $q_t(z)=\rho_t(z)-\overline\rho_t$, with $\overline\rho_t=\int\pi_t(dz)\rho_t(z)$ and $q_0=0$. The exact Duhamel source $f_s(z)=-i[H_{{\rm int},z}(s),\rho_s(z)]$ obeys $\|f_s\|_{\infty,1}\le2g(s)$ because each conditional joint-record state is normalized. The time-dependent operator-kernel lemma then gives $\|q_t\|_{\pi_t,1}\le4\int_0^t g(s)b(s,t)ds$.

Pointwise centering cancels the interaction acting on $\overline\rho_t$. A second commutator estimate bounds the averaged-state trace distance by $8\int_0^Tdt\int_0^t ds\,g(t)g(s)b(s,t)$. Tracing the bath and reading the complete record divides this bound by two. The number of instruments does not enter because each complete covariant record-preserving map contracts the trace norm. Appendix~\ref{app:bounds} gives the operator domains and full Duhamel derivation.

\begin{corollary}[Uniform coupling envelope]\label{cor:static-record}
Under Theorem~\ref{thm:complete-experiment}, if $g(t)\le g_*$ almost everywhere, then
\begin{equation}
 \epsilon_{n_{\rm inst}}(T)\le\min\!\left\{1,4g_*^2\int_0^Tdt\int_0^t ds\,b(s,t)\right\}.
 \label{eq:static-record-bound}
\end{equation}
If the control is stationary, $\pi_t=\pi$ and $b(s,t)=b(t-s)$; fixed refresh and Cayley instead use their time-dependent laws $\pi_t$. A static coupling of norm at most $g_*$ is one instance of this uniform envelope.
\end{corollary}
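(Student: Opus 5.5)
The corollary follows from Theorem~\ref{thm:complete-experiment} by direct specialization, so the plan is to substitute the uniform envelope into Eq.~\eqref{eq:general-bound} and then check that the stationary and static cases meet the theorem's hypotheses.

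First, a constant envelope is admissible in the theorem. If $g(t)\le g_*$ almost everywhere, then the constant function $\tilde g(t)=g_*$ on $[0,T]$ is nonnegative and lies in $L^1[0,T]$. It also satisfies $\|H_{{\rm int},z}(t)\|_\infty\le g(t)\le\tilde g(t)$ for almost every $t$. The null set where the bound might fail is irrelevant, because the Duhamel source in the theorem's proof enters only through time integrals. All other hypotheses of Theorem~\ref{thm:complete-experiment} are unchanged, so Eq.~\eqref{eq:general-bound} holds with $\tilde g$ in place of $g$.

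Second, bound the integrand. Since $b(s,t)\ge0$ (it is an average of total-variation norms) and $g(s)g(t)\le g_*^2$ almost everywhere on the triangle $0\le s\le t\le T$, monotonicity of the Lebesgue integral gives
\begin{equation*}
\int_0^Tdt\int_0^tds\,g(t)g(s)b(s,t)\le g_*^2\int_0^Tdt\int_0^tds\,b(s,t).
\end{equation*}
The map $y\mapsto\min\{1,4y\}$ is nondecreasing, so Eq.~\eqref{eq:static-record-bound} follows. One point needs a brief check: $b$ must be measurable on the triangle for the integral to make sense. This is already implicit in the statement of Eq.~\eqref{eq:general-bound}, and in any case $0\le b\le1$.

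Third, handle the stationary case. If $\pi_t=\pi$ for all $t$ and the kernel is time-homogeneous, $P_{s,t}=P_{t-s}$, then
\begin{equation*}
b(s,t)=\int\pi(dz)\|P_{t-s}(z,\cdot)-\pi\|_{\rm TV},
\end{equation*}
which depends only on $t-s$. The consistency condition $\pi_sP_{s,t}=\pi_t$ reduces to invariance of $\pi$. Fixed refresh and Cayley paths are not time-homogeneous because their nodes are prescribed. For these protocols I will simply keep the general $\pi_t$ and $P_{s,t}$, which the theorem already allows.

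Finally, a static coupling with $\|H_{SE}\|\le g_*$ gives a rotated interaction $(U^\dagger\otimes\id)H_{SE}(U\otimes\id)$ of the same norm, since conjugation by a unitary preserves the operator norm. This is therefore an instance of the uniform envelope. There is no real obstacle in this argument. The only step needing care is the reduction from a constant bound to the almost-everywhere envelope, which holds because the theorem's proof uses $g$ only under integrals.
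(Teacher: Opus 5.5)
Your argument is correct and is essentially the paper's own proof: the paper just observes that $g(t)g(s)\le g_*^2$ turns Eq.~\eqref{eq:general-bound} into Eq.~\eqref{eq:static-record-bound}, and reads the stationary, fixed-refresh/Cayley, and static-coupling remarks directly off the theorem's time-dependent laws. Your first step (re-applying the theorem with the constant envelope $g_*$) and your second step (monotonicity of the integral and of $y\mapsto\min\{1,4y\}$) each prove the bound on its own, so one of them is redundant, but nothing is missing.
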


For the four main-text protocols, put $a_j=\int_{I_j}g(t)dt$, where $I_j$ is a prescribed fixed-refresh interval or a Cayley cell. If $g\in L^2[0,T]$, Eq.~\eqref{eq:general-bound} gives
\begin{align}
 \epsilon^{\rm fixed}&\le\min\!\left\{1,\frac32\sum_j a_j^2\right\},\\
 \epsilon^{\rm Cayley}&\le\min\!\left\{1,4\left[\frac{15}{32}\sum_{j=0}^{K-2}a_j^2
 +\frac38a_{K-1}^2+\frac34\sum_{j=0}^{K-2}a_ja_{j+1}\right]\right\},\\
 \epsilon^{\rm Br}&\le\min\!\left\{1,\frac{4C_B}{\kappa}\|g\|_2^2\right\},\\
 \epsilon^{\rm OU}&\le\min\!\left\{1,4t_{\rm mix}^{\rm OU}\|g\|_2^2\right\}.
 \label{eq:weighted-protocol-bounds}
\end{align}
Here $C_B$ is the integrated mixing envelope at unit Brownian diffusion strength, and $t_{\rm mix}^{\rm OU}$ belongs to the full frame--velocity process. The OU line uses the corresponding joint-state mixing integral. The fixed and Cayley lines themselves require only $g\in L^1$; $L^2$ is used to express the continuous-drive lines through $\|g\|_2^2$. For Cayley, $I_{K-1}$ is the terminal holding cell and the adjacent-cell term accounts for its shared target frame. Appendix~\ref{app:poisson} gives the event-time weighted counterpart.

For the uniform envelope of Corollary~\ref{cor:static-record}, bounding $a_j\le g_*|I_j|$ and $\|g\|_2^2\le g_*^2T$ gives the static-coupling values used elsewhere in this paper:
\begin{align}
 \epsilon^{\rm fixed}&\le \min\{1,\tfrac32g_*^2T\Delta_r\},\\
 \epsilon^{\rm Cayley}&\le \min\{1,\tfrac{39}{8}g_*^2Th-\tfrac{27}{8}g_*^2h^2\},\\
 \epsilon^{\rm Br}&\le \min\{1,4C_Bg_*^2T/\kappa\},\\
 \epsilon^{\rm OU}&\le \min\{1,4g_*^2Tt_{\rm mix}^{\rm OU}\}.
 \label{eq:protocol-bounds}
\end{align}
For fixed refreshes, $\Delta_r$ is the maximum holding interval. The Cayley expression assumes equal intervals $h=T/K$ and independent uniform Pauli targets. Each shortest-word path depends only on its endpoint frames, and the conjugated interaction is centered at every time. The bound includes the terminal holding interval and follows from a two-frame kernel (Appendix~\ref{app:cayley}). It bounds complete-record total variation, but does not assert exact interval classicalization along a Cayley path. Both scheduled-control rates scale as $O(g_*^2Th)$, consistent with random-decoupling precedents \cite{viola2005}. The present result applies this rate to complete records of finite-duration covariant instruments. The Poisson bound is in Appendix~\ref{app:poisson}; the uncut Drude model is treated separately through Theorem~\ref{thm:fock-convergence} and the fixed-sector rates of Appendix~\ref{app:fixed-sector-general}.

The bound is a sufficient upper bound obtained from the triangle-inequality estimate. Its fixed-refresh scaling is attained in a simple experiment. With $H_{{\rm int},G}=g\chi_z(G)\sigma_z$, initial state $|+\rangle$, and final $\sigma_x$ measurement, $K$ independent equal-length intervals give
\begin{equation}
 \epsilon_{\rm example}(T)=\frac{1-[\cos(2gh)]^K}{2}
 =g^2Th+O(g^4T^2h^2+g^4Th^3),\qquad h=T/K.
 \label{eq:fixed-bound-example}
\end{equation}
The expansion holds when $gh\ll1$ and $g^2Th\ll1$. Equation~\eqref{eq:protocol-bounds} gives $3g^2Th/2$ for the same setting. For illustration, $g=0.8$, $T=1.2$, and $K=24$ give exact error $0.0370$ and bound $0.0576$. The bound has the correct leading rate here, while Appendix~\ref{app:bounds} derives the exact expression.

For Brownian control, the bound uses group-diffusion mixing and scales with $1/\kappa$. For OU control, $t_{\rm mix}^{\rm OU}$ integrates the total-variation mixing of the \emph{joint} frame--velocity process. Under the stated time acceleration $U^{(r)}(t)=U^{(1)}(rt)$ and $\boldsymbol\xi^{(r)}(t)=r\boldsymbol\xi^{(1)}(rt)$, it obeys $t_{{\rm mix},r}^{\rm OU}=t_{{\rm mix},1}^{\rm OU}/r$, provided the baseline integral is finite. This scaling comes with $\tau_r=\tau_1/r$ and $\sigma_r^2=r^2\sigma_1^2$. At fixed $\tau$ and $\sigma$, the velocity correlation time alone gives no bound of this form: slow frame modes can outlive it. The joint generator, rather than $\tau$ alone, must therefore determine a numerical mixing estimate.

The full-interval Cayley bound is independent of the pulse-width comparison: a sequence of paths can decorrelate even when the union of its transition supports has length of order $T$. Conversely, the direct bound assumes the sampled target frames remain independent and the path rule introduces no longer-lived controller memory. It does not cover a fixed hardware speed cap as $K\to\infty$.

\paragraph{Finite-width refreshes and the error budget.}
The fixed-refresh bound in Eq.~\eqref{eq:protocol-bounds} concerns ideal instantaneous frame updates; the Cayley bound separately treats an idealized finite-time path. For a physical update of width at most $\delta$, factor out its intended frame rotation and denote by $H_{\rm err}(t)$ the difference between the remaining joint Hamiltonian and its ideal-jump counterpart. Assume $\|H_{\rm err}(t)\|\le g_{\rm p}$ uniformly over allowed instruments and control records. This $g_{\rm p}$ includes the mismatch in system--environment coupling and any instrument-implementation error. It is distinct from $g_*$: even when the interaction mismatch is the only residual, comparing two conjugates of an interaction of norm at most $g_*$ gives only the general estimate $g_{\rm p}\le 2g_*$. If $N_{\rm p}$ nonoverlapping pulse supports occur in a fixed-refresh experiment, the Duhamel estimate and the triangle inequality give
\begin{equation}
 \epsilon_{n_{\rm inst}}^{\rm fixed,pulse}(T)
 \le\min\!\left\{1,\frac32g_*^2T\Delta_r
 +2g_{\rm p}N_{\rm p}\delta\right\}.
 \label{eq:fixed-finite-pulse-bound}
\end{equation}
For a Poisson schedule the pulse count is random and possible pulse overlap requires a separate term. The resulting ensemble bound and its interpretation are given in Appendix~\ref{app:poisson}.

For an allowed additional pulse error $\eta>0$, a sufficient fixed-refresh condition is $2g_{\rm p}N_{\rm p}\delta\le\eta$ in addition to the ideal-jump bound. The condition $g_{\rm p}\delta\ll1$ for one pulse is not enough when $N_{\rm p}$ is large. With $h=T/K$ and $N_{\rm p}=O(T/h)$, suppressing the accumulated pulse term requires $g_{\rm p}T\delta/h\to0$; making it smaller than the ideal $O(g_*^2Th)$ term requires the stronger sufficient scaling $g_{\rm p}\delta/h=o(g_*^2h)$. These sufficient conditions quantify the pulse-width scaling needed to keep the accumulated correction controlled. The Poisson counterpart is in Appendix~\ref{app:poisson}.

\begin{corollary}[Finite-width optimum of the conservative fixed-refresh bound]\label{cor:finite-width-optimum}
Let $K=T/h$ and let $C_{\rm end}$ count boundary pulses. For at most $K+C_{\rm end}$ pulses,
\begin{equation}
 \epsilon^{\rm fixed,pulse}\le\min\{1,F(h)\},\quad
 F(h)=\frac32g_*^2Th+\frac{2g_{\rm p}T\delta}{h}
       +2g_{\rm p}C_{\rm end}\delta.
 \label{eq:fixed-pulse-envelope}
\end{equation}
It follows from Eqs.~\eqref{eq:fixed-finite-pulse-bound} and \eqref{eq:protocol-bounds}.
If $g_*,g_{\rm p}>0$ are independent of $h$, the continuous envelope has its interior minimum at
\begin{equation}
 h^*=\sqrt{\frac{4g_{\rm p}\delta}{3g_*^2}},\qquad
 F(h^*)=2\sqrt3\,Tg_*\sqrt{g_{\rm p}\delta}
          +2g_{\rm p}C_{\rm end}\delta.
 \label{eq:fixed-pulse-optimum}
\end{equation}
\end{corollary}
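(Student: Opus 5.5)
The statement to prove is Corollary~\ref{cor:finite-width-optimum}. The plan has two parts: first assemble the envelope $F(h)$ from earlier bounds, then minimize it as a one-variable function. Everything reduces to elementary calculus once the pulse count is controlled.

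\emph{Envelope.} Start from Eq.~\eqref{eq:fixed-finite-pulse-bound}, which already contains the ideal-jump term $\frac32 g_*^2T\Delta_r$ from Eq.~\eqref{eq:protocol-bounds} and the accumulated pulse term $2g_{\rm p}N_{\rm p}\delta$. With equal holding intervals $h=T/K$, the maximum holding interval is $\Delta_r=h$, so the first term is $\frac32 g_*^2Th$. The pulse count satisfies $N_{\rm p}\le K+C_{\rm end}=T/h+C_{\rm end}$. Since the pulse term is monotone in $N_{\rm p}$, it is at most $2g_{\rm p}T\delta/h+2g_{\rm p}C_{\rm end}\delta$. Finally, $\min\{1,\cdot\}$ is monotone, so the larger argument can be substituted inside it, which gives Eq.~\eqref{eq:fixed-pulse-envelope}.

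The one point to check is that Eq.~\eqref{eq:fixed-finite-pulse-bound} needs nonoverlapping pulse supports. I will assume $\delta\le h$, the regime in which the fixed-refresh schedule with finite-width pulses is meaningful, and note this explicitly. If it fails, the corollary should be read as a statement about the continuous envelope only.

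\emph{Optimization.} Treat $F$ as a function of a continuous variable $h>0$ with $g_*,g_{\rm p},\delta,T,C_{\rm end}$ fixed, and write $A=\frac32 g_*^2T>0$ and $B=2g_{\rm p}T\delta>0$. Then $F(h)=Ah+B/h+{\rm const}$. This function is strictly convex on $(0,\infty)$ because $F''=2B/h^3>0$, and it tends to $+\infty$ both as $h\to0^+$ and as $h\to\infty$. It therefore has a unique interior minimizer, given by $F'(h)=A-B/h^2=0$, that is, $h^*=\sqrt{B/A}=\sqrt{4g_{\rm p}\delta/(3g_*^2)}$.

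Substituting gives $Ah^*+B/h^*=2\sqrt{AB}=2\sqrt{3g_*^2T^2g_{\rm p}\delta}=2\sqrt3\,Tg_*\sqrt{g_{\rm p}\delta}$. Adding back the constant $2g_{\rm p}C_{\rm end}\delta$ yields Eq.~\eqref{eq:fixed-pulse-optimum}. An alternative route is AM--GM, $Ah+B/h\ge2\sqrt{AB}$ with equality exactly at $h^*$, which confirms that $h^*$ is a minimum without computing a second derivative.

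\emph{Main obstacle.} The only nontrivial part is the bookkeeping in the first step: matching $\Delta_r$ to $h$, bounding $N_{\rm p}$, and keeping the nonoverlap premise. The minimum should be presented as that of the continuous envelope. Admissible $h$ must satisfy $T/h\in\mathbb N$, and the $\min\{1,\cdot\}$ truncation may be active, so I will remark that $h^*$ optimizes the conservative bound $F$ rather than the true error.
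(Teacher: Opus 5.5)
Your proposal is correct and takes essentially the same route as the paper. You substitute $\Delta_r=h$ and $N_{\rm p}\le T/h+C_{\rm end}$ into Eq.~\eqref{eq:fixed-finite-pulse-bound}, then locate the unique minimizer of the strictly convex envelope from $F'(h)=\frac32g_*^2T-2g_{\rm p}T\delta/h^2=0$ and $F''(h)=4g_{\rm p}T\delta/h^3>0$, which is exactly the argument in Appendix~\ref{app:pulse-width}; your caveats about nonoverlapping pulses, integrality of $K$, and the $\min\{1,\cdot\}$ cap match the paper's own qualifications.
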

This gives the minimizing point of the sufficient envelope in its feasible continuous range. An admissible experiment also requires $h^*\le T$, pulses that fit within the intervals (typically $\delta\ll h^*$), and an integer $K$ chosen near $T/h^*$; if these conditions fail, the envelope is minimized on the feasible discrete set. At fixed $g_*,g_{\rm p},T$, the minimized envelope scales as $\sqrt\delta$ up to the endpoint term. The random-event optimization is discussed with the overlap model in Appendix~\ref{app:poisson}.

The preceding sections establish the structural and convergence results analytically; we now evaluate their operational consequences and coefficient-level behavior in finite-dimensional and Gaussian-bath models.

\section{Computational construction}
The calculations use two equivalent descriptions of the same complete environmental history. The fixed logical basis makes particle insertions simple, while the instantaneous eigenbasis diagonalizes the coupling during each frozen interval. Appendix~\ref{app:amplitudes} derives the fixed-basis coefficient, its moving-basis chain, and the bounded-model grid limit. Appendix~\ref{app:controlalg} develops fixed-refresh averages, Brownian group blocks and OU Hermite blocks; Appendix~\ref{app:poisson} gives the complete Poisson event-time average; Appendix~\ref{app:weak} gives analytic weak-coupling and static-frame checks. The finite-temperature Gaussian and Drude bath kernels are in Appendix~\ref{app:bath}. Appendix~\ref{app:physicality} constructs the continuous Choi image, recovers finite combs, and proves the interval-classicalization statement under its precise protocol assumptions.
\subsection{Finite experiment contraction}
For a trajectory $\omega$, we propagate the finite auxiliary state using the same Hamiltonian evolution that defines the cPT and insert each finite-duration instrument at its physical support. A final fixed POVM is contracted after the inverse-frame correction. The trajectory result is then averaged:
\begin{equation}
 p_{\rm comp}(x)=\E_\omega p_\omega(x).
\end{equation}
This is the numerical realization of the cPT--instrument contraction. Appendix~\ref{app:instrument} gives the operation-basis coefficients, the explicit all-sector expansion, and the measured boundary used in both qubit experiments.

\subsection{Particle amplitudes}
For a fixed pair of times $t_1,t_2$, the two-particle amplitude is obtained from the full time-ordered influence functional. For $H_{SE}=g\sigma_z\otimes B$,
\begin{equation}
 \mathcal C_{\nu_2\nu_1,\omega}^{(2)}(t_2,t_1)=
 \Tr_{SE}\!\left[
 \mathcal U_\omega(T,t_2)\mathcal P_{\nu_2}
 \mathcal U_\omega(t_2,t_1)\mathcal P_{\nu_1}
 \mathcal U_\omega(t_1,0)(\rho_S\otimes\rho_E)
 \right],
 \label{eq:amplitude}
\end{equation}
where $\mathcal U_\omega(t,s)[X]=V_\omega(t,s)X V_\omega^\dagger(t,s)$ is the joint $SE$ propagation for one complete control path and $\mathcal P_{(\mu,\mu')}[X]=(P_\mu\otimes\id_E)X(P_{\mu'}^\dagger\otimes\id_E)$ is the unscaled left/right Pauli insertion. The random-frame rotations and all cross-time environment propagators are contained in the same ordered chain. The environment is traced only after the complete history has been assembled. We compare
\begin{equation}
 \Delta\mathcal C_{\boldsymbol\nu}^{(n)}(\boldsymbol t)=
 \E_\omega[\mathcal C_{\boldsymbol\nu,\omega}^{(n)}(\boldsymbol t)]
 -\mathcal C_{\boldsymbol\nu,{\rm ideal}}^{(n)}(\boldsymbol t).
 \label{eq:amplitude-deviation}
\end{equation}
Fixed and Poisson refresh calculations use the exact holding boundaries and event times. For a finite environment qubit we propagate the joint system--environment unitary directly; for the Drude bath, Brownian and OU calculations use full-memory matrix-product compression on sampled continuous paths. Appendix~\ref{alg:particle-history} gives the step-by-step complex-amplitude contraction, Appendix~\ref{alg:control-paths} specifies each of the four control averages, and Appendix~\ref{alg:representation-blocks} gives the alternative SU(2) block procedure. The reported Drude calculations use full-memory contraction, not representation blocks. Errors are monitored through time-grid refinement, control substeps, bath truncation, MPS discarded weight, and finite trajectory sampling.

\section{Numerical results}
\label{sec:numerical}
\label{sec:numerics}
\subsection{Finite-duration logical experiments}
We test two logical tasks against a common environment-qubit construction, using $T=1.2$, $H_E=(1.7/2)\sigma_z^{(E)}$, and $g=0.8$ per coupled system qubit. The first task starts in $|0\rangle_S|+\rangle_E$ with $H_{SE}=g\sigma_z^{(S)}\otimes\sigma_x^{(E)}$. Its target instrument applies $H$, idles, and then applies $R_y(\pi/4)$ in three intervals of length $0.4$. A fixed logical $Z$ measurement has ideal probability $p_{0,\mathrm{id}}=(2-\sqrt2)/4$.

The second task prepares an entangled state of two system qubits $S_1S_2$ sharing one environment qubit. It starts in $|00\rangle_{S_1S_2}|+\rangle_E$ with $H_{SE}=g(\sigma_z^{(S_1)}+\sigma_z^{(S_2)})\otimes\sigma_x^{(E)}$. A duration-$0.6$ Hadamard on $S_1$ is followed by a duration-$0.6$ controlled-NOT (CNOT) gate from $S_1$ to $S_2$. The final logical two-outcome POVM has $E_0=|\Phi^+\rangle\langle\Phi^+|$, where $|\Phi^+\rangle=(|00\rangle+|11\rangle)/\sqrt2$, and $E_1=\id_{S_1S_2}-E_0$. Thus $p_{0,\mathrm{id}}=1$, while the uncompiled value is $0.52844$. The physical gate Hamiltonians are covariant throughout each window; the local compiling frames are independent on $S_1$ and $S_2$, although both qubits retain the same environment. Appendix~\ref{app:instrument} gives the generators and the $d_S=4$ version of the instrument contraction.

Figure~\ref{fig:operational} compares fixed refresh, bounded Cayley paths, Brownian driving, and OU driving. The upper row shows the single-qubit $p_0$ deviation; the lower row shows the Bell-projection deviation $|\hat p_0-1|$. Fixed and Cayley values for the upper task use exact frame-ensemble contractions. The Bell and continuous-drive curves average 500 complete trajectories per point. At 768 fixed intervals, the separate Bell scan gives $0.001281\pm0.000059$ (one sampling standard error); the node-matched fixed and Cayley values are $0.001299$ and $0.001532$. Doubling the Cayley edge substeps from 8 to 16 changes the Bell infidelity by $2.95\times10^{-5}$ at $K=24$ and $9.95\times10^{-7}$ at $K=768$. The strongest Brownian and OU Bell points give $0.01822\pm0.00043$ and $0.01285\pm0.00029$, respectively, after the grid refinement in Appendix~\ref{app:numerics}. The 768-interval fixed-refresh point in the upper row has exact-ensemble deviation $6.36\times10^{-4}$. The Poisson scan is shown separately in Fig.~\ref{fig:poisson-operational}. These experiments evaluate selected logical probabilities on the respective control-parameter axes.

\begin{figure}[t]
\includegraphics[width=\linewidth]{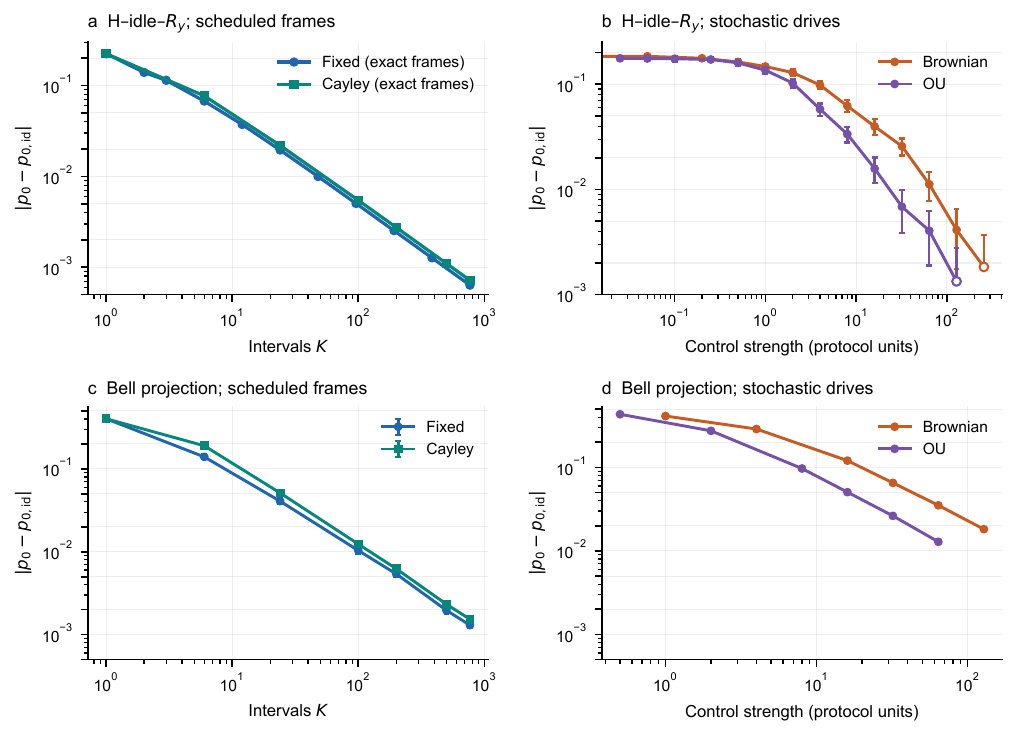}
\caption{Finite-duration logical recovery. Top: $|p_0-p_{0,\mathrm{id}}|$ for the $H$--idle--$R_y$ task. Bottom: Bell-projection infidelity with a shared environment qubit. Left: fixed refresh and its bounded Cayley-path realization at scheduled nodes. Right: Brownian and OU scans on their respective control-strength axes. The upper-left curves use exact frame-ensemble contractions; all other points are means of 500 trajectories with pointwise 95\% sampling intervals at the reported grids. Open markers in the upper-right panel indicate that the interval for the signed probability difference includes zero; only its upper side can be shown on the logarithmic axis. Grid and control-substep comparisons are summarized in Appendix~\ref{app:numerics}. The separate Poisson scan is in Fig.~\ref{fig:poisson-operational}.}
\label{fig:operational}
\end{figure}

\subsection{CMI and multi-time structure}
\label{sec:cmi}
Following the continuous-process construction of Ref.~\cite{wassnerCPT}, we cut the cMPS at time $t$ into past ($P$), present system interfaces ($S$), and future ($F$), and evaluate the operator CMI
\begin{equation}
 \mathcal N_{\rm op}(t)=S(P S)+S(S F)-S(S)-S(P S F),
 \label{eq:operator-cmi}
\end{equation}
with logarithms to base 2. This quantity measures conditional past--future structure carried by the continuous process representation and can contain both quantum and classically mediated correlations.

We report two averaging orders,
\begin{equation}
 \mathcal N_{\rm traj}(t)=\E_\omega[\mathcal N_{\rm op}(\cPT_\omega;t)],\qquad
 \mathcal N_{\rm avg}(t)=\mathcal N_{\rm op}(\E_\omega[\cPT_\omega];t).
 \label{eq:cmi-orders}
\end{equation}
The first evaluates the diagnostic for each logged control path before averaging; the second averages the cPT amplitudes in a common field basis before applying the nonlinear entropy functional. The two quantities therefore describe path-resolved and path-unresolved process ensembles, respectively. The physical-interface contraction, the two averaging procedures, and the entropy reductions are given in Appendix~\ref{app:cmi}. Repeating the contraction after a central Schmidt coordinate transformation gives the same CMI, providing a coordinate-invariance check.

For each protocol and control strength we generated 500 complete trajectories and nine cut positions. The strongest fixed-refresh point gives $\mathcal N_{\rm traj}=5.46\times10^{-6}$ bits and $\mathcal N_{\rm avg}=1.75\times10^{-8}$ bits at $t/T=1/2$ on the 80-block, rank-128 computation. Doubling the blocks changes these values by $2.62\times10^{-7}$ and $8.22\times10^{-10}$ bit, respectively; increasing the common field rank to 192 changes the latter by $1.46\times10^{-10}$ bit. Table~\ref{tab:cmi-resolution} reports grid, rank, and sampling checks for each strong endpoint. Both diagnostics decrease across the sampled Brownian and OU settings in Fig.~\ref{fig:cmi-main}. The grid, rank, and sampling checks summarized in Appendix~\ref{app:numerics} characterize the numerical resolution of the reported values.

The dependence on the cut position is shown separately in Fig.~\ref{fig:cmi-cuts-main}.

\begin{figure}[t]
\includegraphics[width=\linewidth]{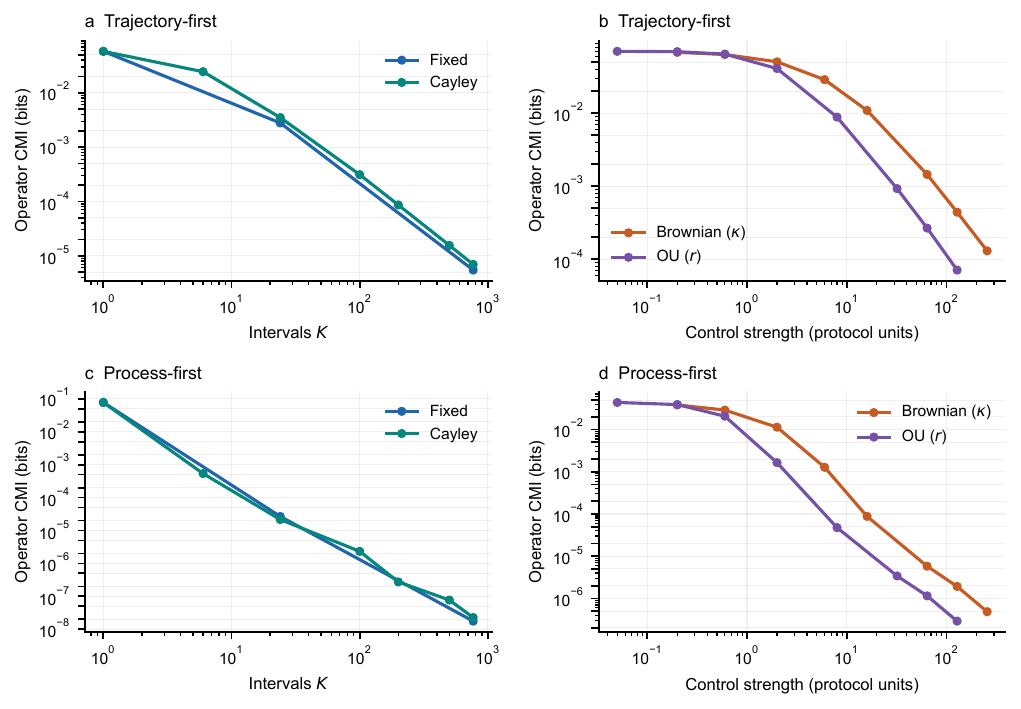}
\caption{Midpoint operator CMI under trajectory-first (top) and process-first (bottom) averaging. The left column compares fixed refresh and full-interval Cayley paths; the right column compares Brownian diffusion strength $\kappa$ and OU time-acceleration factor $r$. The Brownian and OU curves each contain eight sampled settings; the fixed-refresh and Cayley curves use their respective prescribed-node grids. Each point uses 500 complete paths, 80 time blocks, and field rank 128. Lines connect sampled settings; resolution checks for the smallest values are given in Appendix~\ref{app:numerics}. The operator CMI uses the physical interface defined in the text.}
\label{fig:cmi-main}
\end{figure}

\begin{figure}[t]
\includegraphics[width=\linewidth]{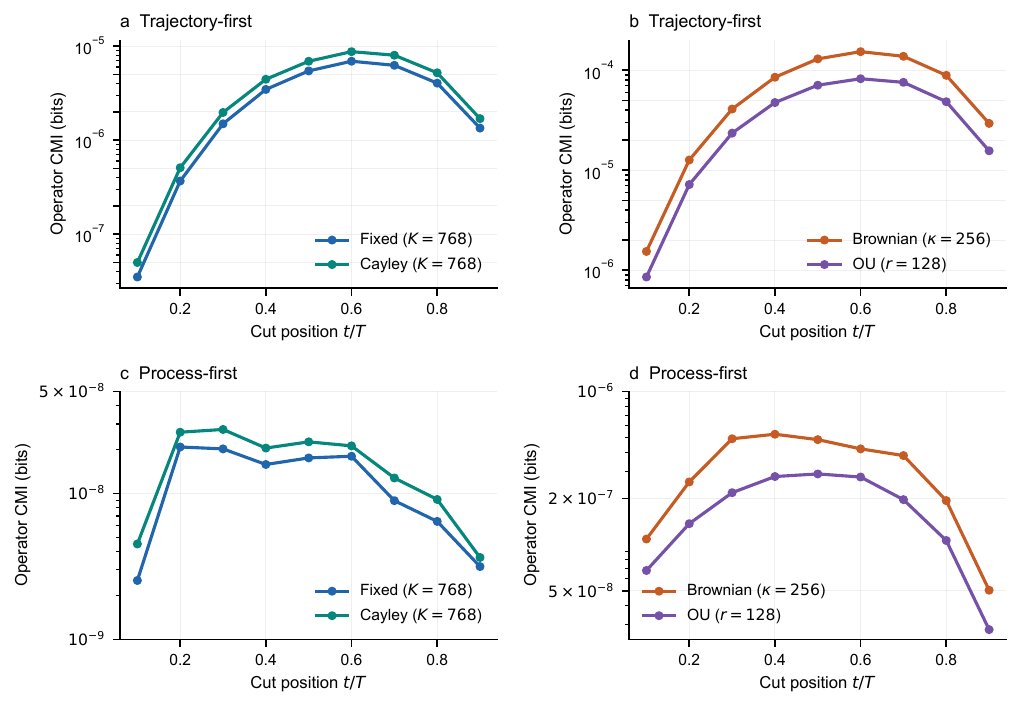}
\caption{Operator CMI across nine cut positions at fixed and Cayley $K=768$, Brownian $\kappa=256$, and OU $r=128$. The upper row averages the diagnostic over logged trajectories; the lower row evaluates it after averaging the cPT over unrecorded paths. These settings do not represent matched control costs. The Poisson cut scan is in Fig.~\ref{fig:poisson-cmi}. The same physical interface is used throughout.}
\label{fig:cmi-cuts-main}
\end{figure}

\subsection{Two-particle amplitudes with a qubit environment}
The finite environment model allows a direct test of all operation-label coefficients at a prescribed pair of insertion times in the two-particle sector, without a bath-kernel expansion or history compression. We use one system qubit and the same environment qubit as in the first probability experiment: $T=1.2$, $H_E=(1.7/2)\sigma_z^{(E)}$, $H_{SE}=0.8\sigma_z^{(S)}\otimes\sigma_x^{(E)}$, and $\rho_{SE}(0)=|0\rangle\langle0|_S\otimes|+\rangle\langle+|_E$. No instrument Hamiltonian acts in this coefficient calculation. At $t_1=T/3$ and $t_2=2T/3$, we evaluate all $15^2=225$ ordered pairs of nonvacuum Pauli left/right labels from Eq.~\eqref{eq:amplitude}. For each label pair, the complex amplitudes of 500 complete control paths are averaged before subtraction of the $g=0$ coefficient. The mean absolute and root-mean-square (RMS) deviations are defined in Eq.~\eqref{eq:app-numeric-metrics}.

The four main-text scans and strong-point heat maps are in Figs.~\ref{fig:qubit-particles-main} and \ref{fig:qubit-heatmaps-main}. At the strongest displayed points, the mean absolute deviations are $2.75\times10^{-4}$ for 768 fixed intervals, $4.82\times10^{-4}$ for 768 full-interval Cayley paths, $5.73\times10^{-3}$ for Brownian strength $\kappa=128$, and $3.65\times10^{-3}$ for OU speed $r=64$. Each protocol is shown against its own control parameter. Heat-map color denotes the modulus of a complex coefficient deviation, not a probability. The environment retains its state through both insertion intervals. Fixed events use their actual times, Cayley paths use finite edge substeps, and Brownian and OU paths use a control mesh. Appendix~\ref{app:numerics} reports the 225-label contraction check and paired strong-end grid comparisons. The Poisson scan is in Fig.~\ref{fig:poisson-qubit-particles}.

\begin{figure}[t]
\includegraphics[width=\linewidth]{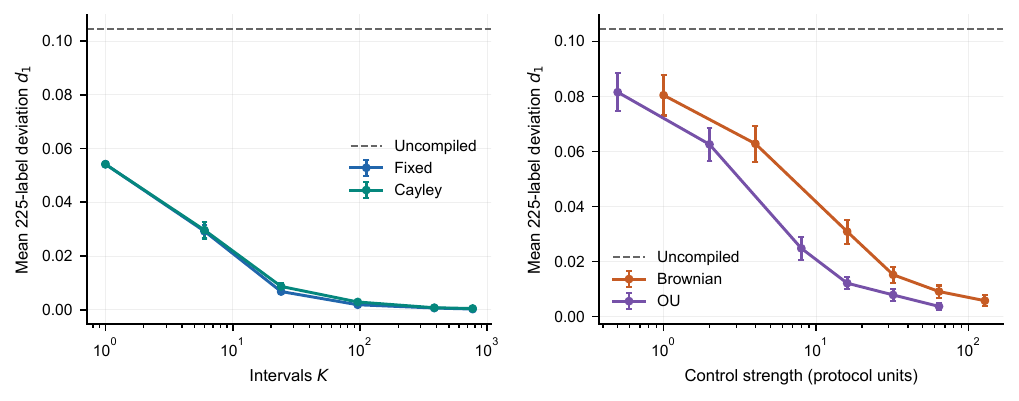}
\caption{Qubit-environment two-particle coefficients at the specified insertion-time pair $(t_1,t_2)=(T/3,2T/3)$. The ordinate is the mean of $|\mathbb E_\omega C^{(2)}_\omega-C^{(2)}_{g=0}|$ over all 225 operation-label pairs at those times; this is not a numerical integration over the continuous two-particle sector. Fixed and bounded Cayley paths share the interval-count axis; Brownian and OU use their respective strength parameters. The dashed line is the deterministic uncompiled $U(t)=\id$, $g=0.8$ baseline. Error bars are pointwise 95\% bootstrap sampling intervals from 500 paths at the reported control grids; paired-grid comparisons are given in Appendix~\ref{app:numerics}.}
\label{fig:qubit-particles-main}
\label{fig:qubit-particles}
\end{figure}

\begin{figure}[t]
\includegraphics[width=\linewidth]{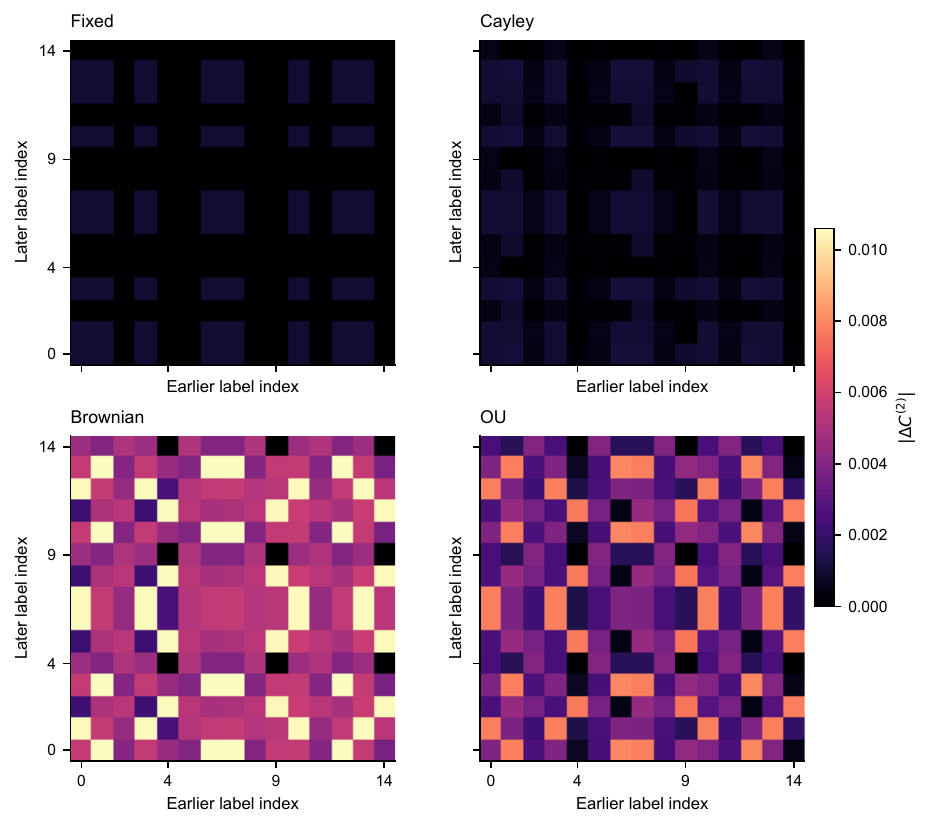}
\caption{Absolute complex-amplitude deviations over all 225 operation-label pairs at $(t_1,t_2)=(T/3,2T/3)$ in the qubit-environment two-particle sector. The displayed table is complete in its operation labels at this time pair, not over the continuous insertion-time domain. Panels show fixed refresh ($K=768$), full-interval Cayley paths ($K=768$), Brownian ($\kappa=128$), and OU ($r=64$), in reading order, on a common color scale. Earlier labels run horizontally and later labels vertically.}
\label{fig:qubit-heatmaps-main}
\end{figure}
For a genuine uncompiled comparator we also set $U(t)=\id$ at every time while retaining $g=0.8$. Its 225-label mean absolute and RMS deviations from the zero-coupling table are $0.10449$ and $0.19591$, respectively. This deterministic identity-frame result is distinct from the one-interval random-frame point, whose mean absolute deviation is $0.05418$. The dashed baseline in Fig.~\ref{fig:qubit-particles-main} gives this uncompiled comparator; the control points below it therefore show improvement relative to an actual uncompiled process, not only proximity to the ideal.

\subsection{Two-particle amplitudes with a Drude bath}
The complementary calculation uses the uncut Drude Gaussian bath of Eq.~\eqref{eq:app-drude} with reorganization energy $\Lambda=1$ and bath parameter $\gamma_B=1$, $g=1.5$, inverse temperature $\beta=5$, $\rho_S=(\id+0.6\sigma_x)/2$, $T=1$, and $(t_1,t_2)=(T/3,2T/3)$; no additional system Hamiltonian is included. We enumerate all 225 two-particle label pairs and average each complex coefficient over 500 trajectories before comparison with its uncoupled ideal. Figure~\ref{fig:drude-particles-main} shows the scans, and Fig.~\ref{fig:drude-heatmaps-main} resolves representative strong-point labels. The strongest mean absolute deviations are $0.003477$ for fixed refresh (768 intervals), $0.003997$ for full-interval Cayley paths ($K=768$, with the displayed system-leg resolution specified for reproducibility), $0.0234$ for Brownian driving ($\kappa=128$), and $0.0725$ for OU driving ($\kappa_{\rm eff}=128$). The Poisson scan is in Fig.~\ref{fig:poisson-drude-particles} of Appendix~\ref{app:poisson}. This continuum bath has finite-temperature correlations and is not replaced by the finite-dimensional qubit calculation. The Drude OU scan fixes $\tau=0.1$ and varies the integrated variance, rather than using the time-accelerated OU family of the qubit calculations.

\begin{figure}[t]
\includegraphics[width=\linewidth]{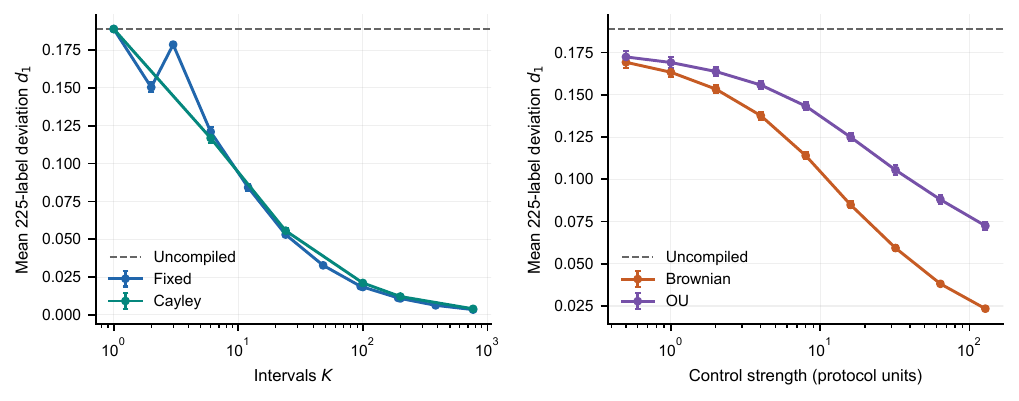}
\caption{Drude-bath two-particle mean absolute deviations over all 225 complex operation-label coefficients at $(t_1,t_2)=(T/3,2T/3)$. This is a complete label table at the specified times, not a numerical integration over the continuous two-particle sector. Fixed/Cayley use prescribed-node counts; Brownian and OU use different strength parameters. The dashed line is the deterministic uncompiled $U(t)=\id$, $g=1.5$ baseline. Error bars are pointwise 95\% bootstrap sampling intervals for 500 paths. The Cayley $K=200$ and $768$ points use two system-leg quadrature levels; the displayed resolution is specified for reproducibility, while bootstrap bars quantify trajectory sampling.}
\label{fig:drude-particles-main}
\end{figure}

\begin{figure}[t]
\includegraphics[width=\linewidth]{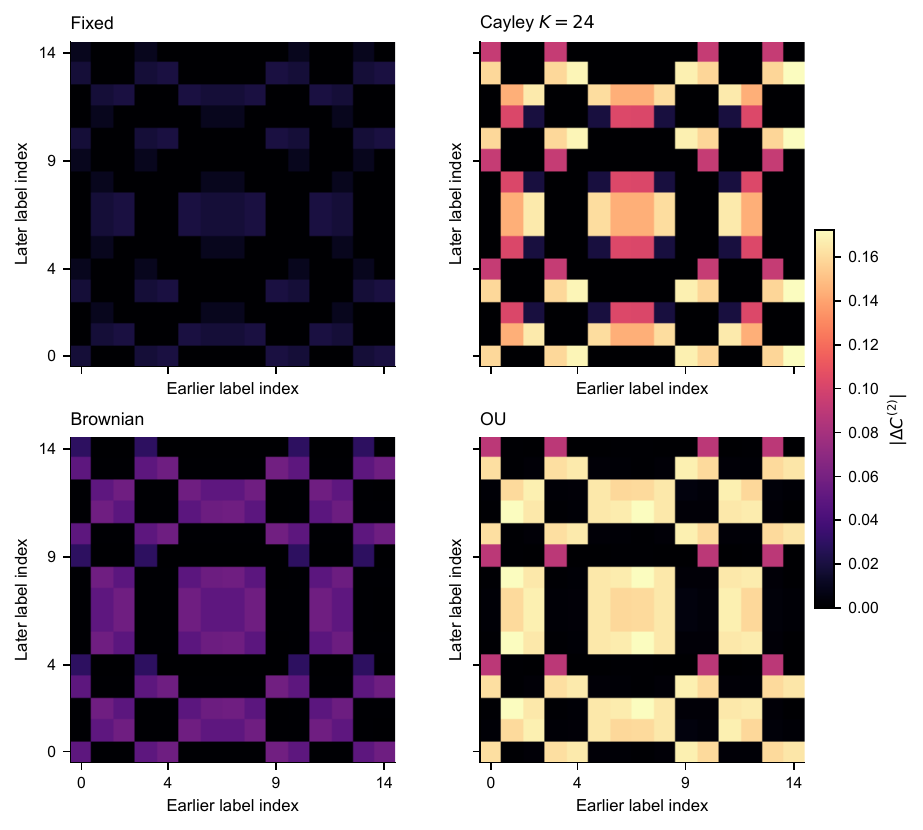}
\caption{Label-resolved Drude-bath two-particle deviations at representative controls: fixed refresh ($K=384$), full-interval Cayley ($K=24$), Brownian ($\kappa=128$), and OU ($\kappa_{\rm eff}=128$). The Cayley panel uses system-leg resolution $P=4$. Color denotes $|\Delta C^{(2)}|$ on one common scale.}
\label{fig:drude-heatmaps-main}
\end{figure}
With $U(t)=\id$ and the same nonzero coupling, the complete 225-label Drude table has mean absolute deviation $0.18893$ and RMS deviation $0.32277$ from the zero-coupling table. In this centered single-axis Gaussian model, changing the sign of the \emph{entire} static coupling history leaves the influence functional invariant, so the one-interval random-Pauli result happens to coincide with this deterministic baseline coefficient by coefficient. We verified that equality directly rather than treating zero refresh as an uncompiled protocol in general. The dashed line in Fig.~\ref{fig:drude-particles-main} gives the deterministic uncompiled value.

\section{Summary and outlook}
\subsection{Methodological position and main conclusions}
Process tensors, or quantum combs, provide the standard operational framework for multi-time quantum processes and for temporal correlations generated by environmental memory \cite{combref009,combref017,combref019,combref029}. Existing work has developed process tomography, dynamical characterization, and efficient numerical simulation, while randomized compiling has mainly been formulated for discrete gate or cycle boundaries. This paper establishes continuous-process randomized compiling (CPRC) within the continuous process tensor (cPT) framework, extending randomized compilation to finite-duration instruments on a continuous physical time axis.

Discrete gate-level randomized compiling uses random unitary frame changes at logical-gate boundaries to tailor coherent gate errors into stochastic Pauli errors \cite{wallman2016,hashim2021}. Random dynamical decoupling addresses the same control problem at the Hamiltonian level and provides the fixed-refresh scaling for bounded interactions \cite{viola2005}; bounded continuous paths and protected-gate constructions show how intended operations can be preserved under finite-strength control \cite{violaKnill2003,chen2006,khodjasteh2005}. CPRC combines these ideas with a pointwise-covariant implementation rule for finite-duration instruments. Consequently, in the absence of system--environment coupling, every random control trajectory reproduces the target logical experiment, while a closed control-frame path shows why endpoint compensation alone does not specify a correct continuous implementation.

Within the cPT description, the Pauli character average diagonalizes error histories at the designated boundaries of independent complete intervals. The exactly solvable two-cell static-bath benchmark shows that the resulting classical labels can still carry environment-induced correlations. For bounded centered interactions and controls satisfying the stated mixing hypotheses, we derive a total-variation bound for complete adaptive output records. The bound applies to fixed Pauli refresh, bounded Cayley paths, unitary Brownian motion, and OU driving through their respective mixing envelopes, with Poisson refresh treated as an event-time extension. These two conclusions concern different observables: the former controls complete adaptive records directly, whereas the latter controls the trajectory-averaged process vector in the cPT representation. We also establish Fock-norm convergence of the entire trajectory-averaged, trace-closed cPT on a fixed finite time interval for finite-dimensional non-Gaussian environments and Gaussian baths satisfying the stated physical-cutoff realization, time-integrable covariance, and cutoff-stability conditions, including the uncut Drude spectrum. The proof combines fixed-sector convergence with a uniform summable particle-sector tail. Qubit-environment and thermal-Drude calculations evaluate finite-duration logical probabilities, operator conditional mutual information, and all two-particle operation-label coefficients at a prescribed pair of insertion times; the reported observables decrease along the sampled control-strength scans in agreement with the theoretical picture.

CPRC therefore retains the multi-time environmental history of a cPT while providing control protocols and verification benchmarks for shaping its error structure. Its main contribution is the joint treatment of pathwise logical implementation, boundary-restricted classicalization, and complete-record error control on one continuous time axis.

\subsection{Scope and outlook}
The results apply under the following conditions, which specify the physical regimes addressed by the theory:
\begin{enumerate}
\item Exact classicalization is formulated for the boundaries of independent, whole-interval Pauli conjugations and assumes the factorized, frame-independent initial-state condition used in the proof. Brownian and OU frames retain temporal correlations at finite separations, so their classicalization is described by the corresponding continuous-path process.
\item The complete-record total-variation bound assumes a bounded, centered system--environment interaction and an autonomous control process with the stated mixing envelope. For OU driving, the relevant condition concerns the integrated mixing of the joint frame--driving state, as specified in Appendix~\ref{app:fixed-sector-general}.
\item Theorem~\ref{thm:fock-convergence} gives Fock-norm convergence of the entire trajectory-averaged, trace-closed cPT for a finite-dimensional system, a fixed finite time interval, and a fixed operation basis. It covers finite-dimensional non-Gaussian environments and Gaussian baths under the stated physical-cutoff realization, $L^1$ covariance, and cutoff-stability conditions, including the uncut Drude model. Its operational consequence applies to continuous instruments with a common finite induced-norm bound; it does not by itself imply infinite-field-space entropy continuity or replace the bounded-interaction complete-record theorem.
\item Finite-width control pulses contribute an additional error budget. Increasing the nominal refresh rate is useful when the accumulated pulse correction remains controlled, as quantified by Eqs.~\eqref{eq:fixed-finite-pulse-bound} and \eqref{eq:poisson-finite-pulse-bound}.
\end{enumerate}

Two directions are especially natural. Experimentally, programmable pulse envelopes and multi-time readout on superconducting, trapped-ion, or solid-state-spin platforms can test the continuous compilation of finite-duration instruments. Theoretically, one can seek energy-constrained complete-record bounds for unbounded bosonic fields and study cascaded error suppression obtained by combining continuous compilation with quantum error-correction codes.

\appendix
\input{appendix_amplitudes}
\input{appendix_instrument}
\input{appendix_bath}
\input{appendix_control}
\input{appendix_weak}
\input{appendix_physicality}
\input{appendix_bounds}
\input{appendix_fixed_sector}
\input{appendix_cayley}
\input{appendix_poisson}
\input{appendix_cmi}
\input{appendix_numerics}

\bibliographystyle{apsrev4-2}
\bibliography{references}
\end{document}

%% file: background.tex
\section{Background: from discrete process tensors to continuous combs}
\label{sec:background}

\subsection{Discrete process tensors and their physical constraints}
Let $S$ be a system of dimension $d_S$, $E$ an environment, and
$0=t_0<t_1<\cdots<t_k=T$ a finite set of laboratory times. An intervention
$\mathcal A_j^{(x_j)}$ is a completely positive (CP), trace-nonincreasing map
associated with outcome $x_j$; summing over the outcomes gives a
trace-preserving map. With $\mathcal U_{j+1:j}$ denoting the joint $SE$
propagation between two interventions, a process tensor maps a chosen
sequence of operations to the resulting system state,
\begin{equation}
 \Upsilon_{k:0}[\mathcal A_{k-1:0}^{(\boldsymbol x)}]
 =\Tr_E\!\left[
 \mathcal U_{k:k-1}\circ\mathcal A_{k-1}^{(x_{k-1})}
 \circ\cdots\circ\mathcal U_{1:0}\circ\mathcal A_0^{(x_0)}
 [\rho_{SE}(0)]\right].
 \label{eq:background-discrete-map}
\end{equation}
Each $\mathcal A_j^{(x_j)}$ acts as the identity on $E$. The same environment
passes through every factor, so Eq.~\eqref{eq:background-discrete-map} does not
assume independent channels or an initially factorized $SE$ state. A final
positive operator-valued measure (POVM) $\{M_y\}$ gives the joint probability
$p(\boldsymbol x,y)=\Tr_S[M_y\Upsilon_{k:0}
[\mathcal A_{k-1:0}^{(\boldsymbol x)}]]$. This is the operational content
that must be preserved when a discrete comb is represented by a Choi operator
\cite{combref009,combref017,combref019}.

If $\widehat{\mathbf A}^{(\boldsymbol x,y)}$ denotes the Choi operator of
the complete tester, including the final effect, the same probability is
the generalized Born-rule contraction
\begin{equation}
 p(\boldsymbol x,y)
 =\Tr\!\left[\widehat\Upsilon_{k:0}
 \bigl(\widehat{\mathbf A}^{(\boldsymbol x,y)}\bigr)^{\mathsf T}\right].
 \label{eq:background-born-rule}
\end{equation}
The transpose is fixed by the Choi convention used here; changing the
vectorization convention changes its placement, not the probability
\cite{combref009,wassnerCPT}.

The process operator $\widehat\Upsilon_{k:0}$ in this contraction is the
open-output comb: its final system output is retained until the tester's
terminal effect is applied. If that output is closed with the deterministic
trace effect, the resulting operator is a trace-closed tester. Its
normalization is inherited from the open comb, but it has no terminal output
leg on which to impose the open-comb recursion.

In a fixed, unnormalized Choi convention, write the operator for the first
$j$ slots as $\widehat\Upsilon_{j:0}$. A physical comb obeys
\begin{equation}
 \widehat\Upsilon_{j:0}\succeq0,\qquad
 \Tr_{o_j}\widehat\Upsilon_{j:0}
 =\id_{i_j}\otimes\widehat\Upsilon_{j-1:0},
 \quad j=1,\ldots,k.
 \label{eq:background-comb-constraints}
\end{equation}
Here $i_j$ is the process input returned by $\mathcal A_{j-1}$ at
$t_{j-1}$, and $o_j$ is the process output at $t_j$. The intervention
$\mathcal A_{j-1}$ thus maps $o_{j-1}$ to $i_j$. The open comb acts on the
terminal output $o_j$ and the operation pairs $(i_r,o_{r-1})$,
$r=1,\ldots,j$. The initial boundary is
$\widehat\Upsilon_{0:0}=\Tr_E\rho_{SE}(0)$ with unit trace.
Tracing $o_j$ leaves the identity on $i_j$ and the shorter open comb
whose terminal output is $o_{j-1}$.
Positivity makes contractions with CP instruments
nonnegative. The trace recursion is the no-future-to-past condition: a
deterministic operation at a later slot cannot change earlier outcome
probabilities. A coarser comb is recovered by linking an intermediate slot
with the identity channel. Filling a slot by a
measurement and forgetting its outcome is generally a different operation.
Thus a process has multiple operational marginals, and state partial traces
alone do not specify which intervention was performed
\cite{combref009,combref019,combref029,combref038}.

These conditions constrain the \emph{whole} Choi operator, not each of its
matrix elements separately. They also distinguish temporal memory from a
failure of complete positivity or causality. A correlated environment can
produce a positive, causal comb whose later output still depends on earlier
interventions. Finite-slot combs describe this dependence exactly at their
chosen slots, but those slots do not specify how an instrument acts during a
nonzero time interval \cite{combref009,combref017,wassnerCPT}. This loss is
concrete: two bounded joint Hamiltonian paths can have the same identity
propagator, and hence the same comb at their chosen endpoints, while the same
Hamiltonian instrument acting between those endpoints gives different output
probabilities. The closed-loop construction following Proposition~\ref{prop:transparency}
gives an explicit qubit example without an independent $H_S$. It establishes
noninjectivity of that endpoint coarse graining, not an inequivalence of complete cPTs.

\subsection{Continuous matrix product states}
A cMPS provides a finite-auxiliary-space
description of a bosonic field on $[0,T]$. For $q$ field species, let
$\psi_\nu^\dagger(t)$ create a species-$\nu$ excitation and let
$|\Omega\rangle$ be the vacuum. With auxiliary matrices $Q(t)$,
$R_\nu(t)$ and a boundary matrix $B$, its defining form is
\begin{equation}
 |\Phi[Q,R,B]\rangle=
 \Tr_{\rm aux}\!\left[B\,\mathcal T
 \exp\!\left(\int_0^T\!dt\,
 \left[Q(t)\otimes\id+
 \sum_{\nu=1}^{q}R_\nu(t)\otimes\psi_\nu^\dagger(t)\right]
 \right)\right]|\Omega\rangle.
 \label{eq:background-cmps}
\end{equation}
The auxiliary propagator
$V(t,s)=\mathcal T\exp[\int_s^t Q(u)\,du]$ makes the ordered
$n$-particle coefficient explicit: for $0<t_1<\cdots<t_n<T$ it is
\begin{equation}
 \phi_{\boldsymbol\nu}^{(n)}(\boldsymbol t)
 =\Tr_{\rm aux}\!\left[
 B V(T,t_n)R_{\nu_n}(t_n)V(t_n,t_{n-1})\cdots
 R_{\nu_1}(t_1)V(t_1,0)\right].
 \label{eq:background-cmps-amplitude}
\end{equation}
Thus $Q$ propagates the auxiliary memory between insertions and $R_\nu$
records an insertion. These are representation matrices; $Q$ is not the
stochastic control generator introduced later. In a lattice approximation of
spacing $\epsilon$, the standard scaling
$Q_j=\epsilon Q(j\epsilon)$ and
$R_{\nu,j}=\sqrt{\epsilon}R_\nu(j\epsilon)$ yields a finite particle
density in the continuum limit \cite{combref024,combref025,combref039}.

A cMPS is a compact representation of a field vector, not automatically a
physical quantum process. The extra process constraints are positivity and
causality. In the construction used below, the auxiliary space contains the
system Liouville space and the environment memory, while the field labels
record possible system interventions \cite{wassnerCPT}.

\subsection{Operational continuum limit and finite-duration interventions}
The continuum construction of Ref.~\cite{wassnerCPT} singles out the identity
operation as the process vacuum. An informationally complete basis of system
superoperators has $d_S^4$ elements; after removing its identity element,
there are $q=d_S^4-1$ nonvacuum field labels $\nu=(\mu,\mu')$. The
corresponding cPT has the expansion
\begin{equation}
 |\cPT_T\rangle=\sum_{n=0}^{\infty}\sum_{\boldsymbol\nu}
 \int_{0<t_1<\cdots<t_n<T}\!
 d\boldsymbol t\; \mathcal C_{\boldsymbol\nu}^{(n)}(\boldsymbol t)
 |\boldsymbol\nu;\boldsymbol t\rangle .
 \label{eq:cpt-expansion}
\end{equation}
Here $n$ counts nontrivial process-field insertions; it is neither the number
of physical particles in $S$ nor the number of instruments or numerical time
steps. A zero insertion at a time means doing nothing there. Accordingly,
identity labels $(0,0)$ recover lower-time marginals. For a trace-closed,
normalized process the zero-particle amplitude is one; after replacing the
trace boundary by a selected POVM outcome, the corresponding vacuum
coefficient need not be one. Individual higher-particle amplitudes are
complex response coefficients, not probabilities
\cite{wassnerCPT,combref038}.

The advantage of Eq.~\eqref{eq:cpt-expansion} is operational rather than a
claim that every mathematical limit of discrete Choi states exists. For a
sufficiently regular finite-duration instrument, a short slice has a local
generator, schematically
$\mathcal A_{[t,t+\epsilon]}=\mathrm{id}+
\epsilon\mathcal G(t)+O(\!\epsilon^2)$ on a sufficiently regular interval.
Its field functional contracts with the cPT across the \emph{whole} support
of the instrument. The resulting continuous Born rule can describe a
measurement or control that acts while the system and environment evolve;
it is not restricted to instantaneous gates at prescribed slots. A finite
auxiliary-space contraction can sum all particle sectors without truncating
Eq.~\eqref{eq:cpt-expansion} one order at a time
\cite{wassnerCPT}.

The regularity assumption matters. In a naive refinement of an ordinary
Choi-state construction, a fresh maximally entangled pair is introduced at
every interval of length $\epsilon$. Taking $\epsilon\to0$ while retaining
arbitrary finite operations at every slot does not give the finite-excitation
continuum limit above. Assigning the do-nothing operation to the vacuum and
scaling instrument generators with their action time instead keeps the
instrument's field excitations finite under the conditions of the continuum
construction. This is the sense in which the cPT avoids the formal
``infinite-energy'' or ultraviolet problem of the naive slot-by-slot limit;
point-time insertions used to recover ideal finite combs may still be
distributional, rather than finite-energy laboratory operations. The cPT
construction does not by itself regularize an unbounded bath Hamiltonian or
an unbounded physical control pulse \cite{wassnerCPT}.

A second pitfall concerns temporal correlations. For bounded joint
Hamiltonians on a fixed interval, Ref.~\cite{wassnerCPT} proves that a
particular relative-entropy non-Markovianity measure of the \emph{normalized
Choi states} of successively finer discrete combs tends to zero. Its proof
bounds each short-step entropy contribution by
$O(\epsilon^2\log(1/\epsilon))$ and sums $O(T/\epsilon)$ steps, producing a
vanishing bound. This is a Zeno-type artifact of that dense Choi-state
construction and that measure, not a theorem that the physical environment
loses its memory. The cPT preserves the identity-operation coarse-graining
and admits correlation measures conditioned on the system at a chosen cut.
It therefore offers a route to studying temporal correlations without
identifying the zero of the naive dense-slot measure with memory loss.

For the compiled processes studied here, physicality is checked on the
continuous Choi image, rather than on signs of the complex coefficients in
Eq.~\eqref{eq:cpt-expansion}. Its open-output finite combs must be positive
and satisfy Eq.~\eqref{eq:background-comb-constraints}; the corresponding
closed testers are positive and normalized on deterministic operations.
Appendix~\ref{app:physicality}
establishes these properties for each control trajectory and its average,
including compatibility with finite-comb recovery.

%% file: appendix_amplitudes.tex
\section{From the fixed-basis formula to a complete moving-basis history}
\label{app:amplitudes}

\subsection{The closed particle amplitude and its boundaries}
Choose a system operator basis $P_0=\id_S,P_1,\ldots,P_{d^2-1}$ with $\Tr(P_\mu^\dagger P_\nu)=d\delta_{\mu\nu}$. The identity insertion is the identity superoperator; a nonvacuum label $\nu=(\mu,\mu')\ne(0,0)$ inserts
\begin{equation}
 \mathcal P_\nu[X]=(P_\mu\otimes\id_E)X(P_{\mu'}^\dagger\otimes\id_E).
 \label{eq:app-insertion}
\end{equation}
This linear operation basis is not a set of physical instruments: an individual off-diagonal $\mathcal P_\nu$ need not be completely positive. If $\mathcal U_\omega(t,s)$ is joint conjugation by the full $SE$ propagator, set $t_0=0$, $t_{n+1}=T$, and $\mathcal U_{\omega,j}=\mathcal U_\omega(t_{j+1},t_j)$ for $0\le j\le n$. For $0<t_1<\cdots<t_n<T$ the trace-boundary coefficient is
\begin{equation}
 \mathcal C^{(n),\mathrm{tr}}_{\boldsymbol\nu,\omega}(\boldsymbol t)
 =\Tr_{SE}\!\left[\overleftarrow{\prod}_{j=1}^{n}(\mathcal U_{\omega,j}\mathcal P_{\nu_j})
 \mathcal U_{\omega,0}[\rho_S\otimes\rho_E]\right].
 \label{eq:app-general-amplitude}
\end{equation}
The ordered product places the factor with the largest $j$ on the left.
Here $n$ counts operational field insertions. It is unrelated to the number of physical particles, the number of interventions in a selected experiment, and the number of numerical time slices. For $n=0$, Eq.~\eqref{eq:app-general-amplitude} is defined as $\Tr_{SE}\mathcal U_\omega(T,0)[\rho_S\otimes\rho_E]=1$. If every insertion is $\mathcal P_{(a,a)}$ with unitary $P_a$, the trace is also one, since each interval propagation and each such insertion preserves trace. No analogous statement holds for general unequal left and right labels. The ideal coefficient is precisely Eq.~\eqref{eq:app-general-amplitude} with $H_{SE}=0$ and the same logical input, instrument coordinates, and trace boundary. This fixes the subtraction in Eq.~\eqref{eq:amplitude-deviation} without setting every ideal nonvacuum coefficient to zero.

For a fixed coupling $gS\otimes B$ with $S|s\rangle=\lambda_s|s\rangle$, insert $\sum_s|s\rangle\langle s|$ between propagators on each Keldysh branch. On an interval $[u_k,u_{k+1})$ the conditional environmental propagator is
\begin{equation}
 V_{s,k}=\mathcal T\exp\!\left[-i\int_{u_k}^{u_{k+1}}
 (H_E+g\lambda_s B)dt\right].
\end{equation}
For two complete eigenvalue histories $\boldsymbol s,\boldsymbol r$, the environmental factor is one trace over the full history,
\begin{equation}
 \mathcal F_{B,\Pi}[\boldsymbol s,\boldsymbol r]
 =\Tr_E[V_{s,N-1}\cdots V_{s,0}\rho_E
 V_{r,0}^\dagger\cdots V_{r,N-1}^\dagger].
 \label{eq:app-full-bath-trace}
\end{equation}
This is the environmental part of the fixed-basis construction leading to Eq.~(76) of Ref.~\cite{wassnerCPT}. The factor is evaluated as one trace over the complete history; factorizing it into interval traces would reset the bath at every grid point.

For one insertion at $t_1$, expand $\rho_S$ in the fixed eigenbasis, with $\rho_{s_0r_0}=\langle s_0|\rho_S|r_0\rangle$. The first interval carries $|s_0\rangle\langle r_0|\otimes\rho_E$ to $|s_0\rangle\langle r_0|\otimes V_{s_0,0}\rho_EV_{r_0,0}^\dagger$. Define $M_\nu(s',r';s,r)=\langle s'|P_\mu|s\rangle\langle r|P_{\mu'}^\dagger|r'\rangle$ for $\nu=(\mu,\mu')$. The insertion changes the system factor according to
\begin{equation}
 P_\mu|s_0\rangle\langle r_0|P_{\mu'}^\dagger
 =\sum_{s_1,r_1}M_\nu(s_1,r_1;s_0,r_0)|s_1\rangle\langle r_1|.
 \label{eq:app-one-insertion-matrix}
\end{equation}
Propagating the second interval, tracing the environment once, and then tracing the system sets $r_1=s_1$ and gives
\begin{equation}
 \mathcal C_{\nu}^{(1),\mathrm{tr}}(t_1)
 =\sum_{s_0,r_0,s_1}\rho_{s_0r_0}M_\nu(s_1,s_1;s_0,r_0)
 \mathcal F_{B,\Pi}[(s_0,s_1),(r_0,s_1)].
 \label{eq:app-one-insertion}
\end{equation}
Repeating this step at $n$ insertions gives
\begin{equation}
 \mathcal C_{\boldsymbol\nu}^{(n),\mathrm{tr}}
 =\sum_{\boldsymbol s,\boldsymbol r}\delta_{s_nr_n}\rho_{s_0r_0}
 \mathcal F_{B,\Pi}[\boldsymbol s,\boldsymbol r]
 \prod_{j=1}^{n}M_{\nu_j}(s_j,r_j;s_{j-1},r_{j-1}).
 \label{eq:app-fixed-basis-amplitude}
\end{equation}
The two labels per interval are forward and backward operator-branch indices, not two physical systems. In a fixed basis these labels do not change inside a particle-event interval. A rotating basis requires further subdivisions because successive basis changes also mix them.

\subsection{What changes under a rotating coupling}
After entering the random-unitary frame, the coupling becomes $S_\omega(t)=U_\omega^\dagger(t)S U_\omega(t)$. The operation basis $P_\mu$ remains fixed in logical coordinates. At each grid point define
\begin{equation}
 \Pi_s(U)=U^\dagger|s\rangle\langle s|U,\qquad
 \mathcal E_{sr}(U)[X]=\Pi_s(U)X\Pi_r(U).
 \label{eq:app-moving-projector}
\end{equation}
Take a grid containing all particle times and freeze $U$ on each half-open interval. Let $\mathcal P_k$ equal the required insertion at $u_k$, or the identity if there is none. For a fixed double history, first form
\begin{equation}
 X_{\boldsymbol s\boldsymbol r}(\boldsymbol U)
 =\mathcal E_{s_{N-1}r_{N-1}}(U_{N-1})\mathcal P_{N-1}
 \cdots\mathcal E_{s_0r_0}(U_0)\mathcal P_0[\rho_S].
 \label{eq:app-projector-chain}
\end{equation}
The grid amplitude is then
\begin{equation}
 \mathcal C_{\boldsymbol\nu,\omega,\Pi}^{(n),\mathrm{tr}}
 =\sum_{\boldsymbol s,\boldsymbol r}
 \mathcal F_{B,\Pi}[\boldsymbol s,\boldsymbol r]
 \Tr_S X_{\boldsymbol s\boldsymbol r}(\boldsymbol U_\omega).
 \label{eq:app-history-sum}
\end{equation}
Equation~\eqref{eq:app-projector-chain} contains the initial-state overlap, every successive basis overlap, and every particle insertion. For example, one left-branch factor between adjacent cells is $\langle s_k|U_kU_{k-1}^\dagger|s_{k-1}\rangle$; replacing it by $\delta_{s_ks_{k-1}}$ is only justified for a fixed basis. The right branch carries the conjugate overlap. The final trace closes the system output only after all operations. If no terminal operation depends on $U_N$, its last transition density integrates to one; internal densities cannot be removed this way.

The Gaussian factor in Eq.~\eqref{eq:app-full-bath-trace} depends on the complete pair of labels, so it must multiply the complete system chain before the label sum and path average. A finite-duration logical instrument is represented by its time-local covariant generator in the chain; an endpoint gate is insufficient. In the fixed-basis limit, adjacent projectors force unchanged labels between particle events, and Eq.~\eqref{eq:app-history-sum} reduces to the finite-interval formula of Ref.~\cite{wassnerCPT}. For a moving basis, this collapse is generally false. The continuum coefficient is the limit of Eq.~\eqref{eq:app-history-sum} under the stated Hamiltonian and bath-kernel convergence conditions, followed by one expectation over complete control histories.

\subsection{Moving-basis connection and the frozen-grid limit}
\label{app:moving-limit}
For a differentiable frame, the overlap between successive instantaneous eigenbases is
\begin{equation}
 U(t+dt)U^\dagger(t)=\id+\dot U(t)U^\dagger(t)dt+o(dt).
 \label{eq:app-basis-overlap}
\end{equation}
Equivalently, transform the full logical operator to the fixed eigenbasis by $X_{\rm eig}=UX_{\rm log}U^\dagger$. Between particle insertions, in the interaction picture of $H_E$,
\begin{equation}
 \dot X_{\rm eig}=-i[h(t)\otimes\id_E,X_{\rm eig}]
 -ig[S\otimes B(t),X_{\rm eig}],\qquad \dot U=-ihU.
 \label{eq:app-moving-connection}
\end{equation}
The first commutator is a coordinate-connection term. Equation~\eqref{eq:app-projector-chain} already contains it through finite adjacent-basis overlaps; adding the commutator while retaining those overlaps would count the same motion twice. For a Brownian frame the corresponding differential equation is Stratonovich. Its It\^o form adds $-(\kappa/2)\sum_c[F_c,[F_c,X_{\rm eig}]]dt$. The finite-overlap calculation does not require a derivative of a Brownian path.

On a fixed grid, Eq.~\eqref{eq:app-history-sum} is exact for the left-endpoint-frozen coupling model. Suppose $B(t)$ is bounded and the finite set of insertions has bounded operator norms. Duhamel comparison of joint propagators and trace-norm contraction give the sufficient pathwise estimate
\begin{multline}
 |\mathcal C_{\boldsymbol\nu,\omega}^{(n),\mathrm{tr}}
 -\mathcal C_{\boldsymbol\nu,\omega,\Pi}^{(n),\mathrm{tr}}|
 \le 2|g|\,\|\rho_{SE}\|_1
 \prod_{j=1}^n\|P_{\mu_j}\|\|P_{\mu'_j}\|\\
 \times\int_0^Tdt\,\|B(t)\|
 \|S_\omega(t)-S_\omega(u_{k(t)})\|.
 \label{eq:app-freezing-error}
\end{multline}
Here $k(t)$ is the cell containing $t$, and $\rho_{SE}$ includes the bath state with its statistical weight; for a normalized physical input $\|\rho_{SE}\|_1=1$. The factor $2$ occurs once, from
$\|[\Delta H(t),X]\|_1\le2\|\Delta H(t)\|_\infty\|X\|_1$ with
$\Delta H(t)=g[S_\omega(t)-S_\omega(u_{k(t)})]\otimes B(t)$.
 For one insertion $\mathcal P_{(\mu,\mu')}[X]=(P_\mu\otimes\id_E)X(P_{\mu'}^\dagger\otimes\id_E)$, the induced trace norm is at most $\|P_\mu\|_\infty\|P_{\mu'}\|_\infty$. Thus $n$ insertions give the displayed product of $n$ left/right pairs, with no extra square or factor $2^n$. Unit-norm Pauli insertions contribute one. Joint unitary propagation preserves the trace norm between insertions, including for non-Hermitian coefficients, and the terminal partial trace is contractive. Duhamel comparison then telescopes the propagation intervals and gives the single integral in Eq.~\eqref{eq:app-freezing-error}. Brownian and OU paths are uniformly continuous on each finite realization, so the right side vanishes under grid refinement when the interaction is bounded. For an unbounded bosonic field, the corresponding check uses the integrated kernel and final-amplitude limits, or a separate cutoff or energy-constrained estimate. Paired-grid agreement provides an independent numerical check alongside this argument.

\subsection{An independent continuous expansion and numerical checks}
The same coefficient can be expanded without instantaneous eigenvectors. In the
interaction picture of $H_E$, expand the joint Liouville propagator between
the fixed particle insertions in a Dyson series. For a factorized initial
state and a centered Gaussian bath coupled through a linear field $B(t)$,
introduce the system-only left and right
branch maps
\begin{equation*}
 \mathcal S_{\omega,L}(t)[X]=-ig\,S_\omega(t)X,\qquad
 \mathcal S_{\omega,R}(t)[X]=+ig\,XS_\omega(t).
\end{equation*}
The system-only insertions are
$\mathcal P_\nu^S[X]=P_\mu X P_{\mu'}^\dagger$ for $\nu=(\mu,\mu')$.
Define bath maps $\mathcal B_L(t)[Y]=B(t)Y$ and
$\mathcal B_R(t)[Y]=YB(t)$. For $t>s$, the pair contraction inherited
from the joint Dyson word is
\begin{equation}
 C^{\epsilon_b\epsilon_a}(t,s)
 =\Tr_E\!\left[\mathcal B_{\epsilon_b}(t)
 \mathcal B_{\epsilon_a}(s)[\rho_E]\right]
 =\begin{cases}
 C(t,s),&\epsilon_a=L,\\
 C(s,t),&\epsilon_a=R,
 \end{cases}
 \label{eq:app-wick-branch-covariance}
\end{equation}
where $C(t,s)=\Tr_E[B(t)B(s)\rho_E]$. For a Hermitian field,
$C(s,t)=C(t,s)^*$. Wick's theorem applies to the complete bath word:
its right-branch factors precede its left-branch factors after cyclically
moving $\rho_E$ to the end of the trace. The fixed coefficient is therefore
\begin{equation}
\begin{aligned}
 \mathcal C_{\boldsymbol\nu,\omega}^{(n),\mathrm{tr}}(\boldsymbol t)
 ={}&\sum_{m\ge0}\int_{0<v_1<\cdots<v_{2m}<T}d\boldsymbol v
 \sum_{\boldsymbol\epsilon\in\{L,R\}^{2m}}
 \sum_{\text{pairings }\pi}
 \prod_{\substack{(a,b)\in\pi\\a<b}}
 C^{\epsilon_b\epsilon_a}(v_b,v_a)\\
 &{}\times\Tr_S\!\left[
 \mathcal T\!\left\{
 \prod_{a=1}^{2m}\mathcal S_{\omega,\epsilon_a}(v_a)
 \prod_{j=1}^n\mathcal P^S_{\nu_j,t_j}
 \right\}[\rho_S]\right].
\end{aligned}
 \label{eq:app-wick}
\end{equation}
Here $\mathcal T$ interleaves all system maps by their actual times, with the
latest map on the left; it does not move a paired block across a particle
insertion. The branch signs are already included in
$\mathcal S_{\omega,L/R}$. At $m=0$ the empty products give the uncoupled
coefficient. For unbounded Gaussian fields the formula is first interpreted
with the cutoff and convergence assumptions of
Appendix~\ref{app:gaussian-L1-sector}. It retains the complex cross-branch
orderings and provides a grid-independent check on the full-history
calculation. A finite-order truncation provides a weak-coupling expansion,
whereas Eq.~\eqref{eq:app-history-sum} evaluates the full arbitrary-coupling history.

 For arbitrary coupling, cache the bath triangle $\eta_{kl}$ and append one double label $(s_k,r_k)$ at each grid step. Attach its new $\eta_{kl}$ links to every earlier cell, apply the current projector and insertion, and then compress the history tensor. Fixed and Poisson paths use their actual holding intervals and event times. Brownian and OU paths retain one complete sampled frame history. Grid size, control substeps, bath quadrature, retained rank, and trajectory count require independent convergence checks. The discarded tensor weight is tracked as a compression diagnostic, while entropy estimates are evaluated separately. The two-time weak-coupling kernel in Appendix~\ref{app:weak} provides a check on the full influence functional at arbitrary $g$.

\subsection{Algorithm for the complete particle-amplitude tensor}
\label{alg:particle-history}
The common full-memory contraction below evaluates every protocol. Appendix~\ref{alg:control-paths} specifies the fixed-refresh, Poisson-refresh, Brownian, and OU path generators in that order. The output is a complex coefficient for each requested operation-label history; no modulus is taken during propagation or control averaging.
\begin{enumerate}
\item \textbf{Input.} Specify $\rho_S$, $g$, $S$ and its eigenvalues $\lambda_s$, the bath correlation $C(t)$ (or its spectral and Matsubara data), $T$, the ordered insertion times $\boldsymbol t$, the operation basis $P_\mu$, the control protocol and strength, the number $N_{\rm tr}$ of complete trajectories, and independent grid, bath, compression, and sampling tolerances. Choose a grid $0=u_0<\cdots<u_M=T$ containing every $t_j$. An index $k$ labels a numerical interval; $n$ labels insertions and need not equal $M$.
\item \textbf{Bath preprocessing.} Integrate $C$ over each ordered pair of grid cells to obtain the complex triangular coefficients $\eta_{kl}$, $0\le l\le k<M$, as in Eq.~\eqref{eq:app-discrete-influence}. Include the diagonal cell integrals and the specified high-frequency tail correction. Cache this triangle once for all trajectories and all operation labels on the same grid. The backward branch uses $\eta_{kl}^*$ in the influence exponent; replacing the triangle by a real symmetrized two-time correlation changes the amplitude.
\item \textbf{Full-memory influence tensor.} Start with the four double eigenlabels $a_0=(s_0,r_0)$. At step $k$, append $a_k=(s_k,r_k)$, multiply its diagonal factor and its pair factor with \emph{every} earlier $a_l$, $l<k$, from Eq.~\eqref{eq:app-discrete-influence}. This constructs $\mathcal F_{B,\Pi}[\boldsymbol s,\boldsymbol r]$ as one history tensor. After attaching all links from the new label, apply a QR factorization sweep ($Q$ has orthonormal columns and $R$ is upper triangular), followed by singular value decomposition (SVD) and truncation. Record the largest retained rank, whether the rank cap was reached, and the discarded squared singular weight. The optional transition reference used for tensor conditioning is inserted and divided out at each system transition, including insertion nodes; it is not a bath approximation.
\item \textbf{One complete control path.} Generate $U_\omega$ from time zero to $T$ by the appropriate procedure in Appendix~\ref{alg:control-paths}. Retain all frames on the grid, including the frame at every insertion. For each requested $\boldsymbol\nu$, contract the cached influence tensor with the \emph{same} complete system chain in Eq.~\eqref{eq:app-projector-chain}. At a grid node apply $\mathcal P_{\nu_j}$ in fixed logical coordinates and the current-frame projectors $\mathcal E_{s_kr_k}(U_{\omega,k})$. Sum the double labels only as the history tensor is contracted, and take the system trace at the closed boundary. The environment has already been eliminated once through the global influence tensor, not interval by interval. A trace-preserving final propagation after the last insertion may be removed only when there is no later operation or measured output.
\item \textbf{Average and compare.} Repeat the preceding contraction for $\omega=1,\ldots,N_{\rm tr}$. For every label history form $\overline{\mathcal C}_{\boldsymbol\nu}=N_{\rm tr}^{-1}\sum_\omega\mathcal C_{\boldsymbol\nu,\omega}$ as a \emph{complex} average. Compute the ideal coefficient with $g=0$ and the same input and insertions. Only then form $\Delta\mathcal C_{\boldsymbol\nu}=\overline{\mathcal C}_{\boldsymbol\nu}-\mathcal C_{\boldsymbol\nu,{\rm ideal}}$ and, if desired, average $|\Delta\mathcal C_{\boldsymbol\nu}|$ or $|\Delta\mathcal C_{\boldsymbol\nu}|^2$ over labels. For the two-insertion qubit calculation, enumerate the 15 nonvacuum labels at each time and return the $15\times15$ matrix indexed by (later label, earlier label).
\item \textbf{Independent checks.} Refine $M$, the control integration substep, the Matsubara cutoff and tail treatment, the MPS tolerance and rank cap, and $N_{\rm tr}$ separately. Compare representative small grids with an uncompressed history sum, and compare final complex amplitudes across refinements. These checks complement the full-history arbitrary-coupling contraction.
\end{enumerate}

%% file: appendix_instrument.tex
\section{Finite-duration instrument contraction and its full particle sum}
\label{app:instrument}

The cPT contains coefficients for all nonvacuum insertions, but an experiment specifies one time-dependent logical instrument. The probability experiments use $m=1$ or $2$ system qubits coupled to one environment qubit, so $d_S=2^m$. Put $Z_j=\sigma_z$ on system qubit $j$ and identity on the other system qubits, and let $U_\omega(t)=\bigotimes_{j=1}^m U_{j,\omega}(t)$. Their common bath model is $H_E=(\omega_E/2)\sigma_z^{(E)}$ and $H_{SE}=g\sum_{j=1}^m Z_j\otimes\sigma_x^{(E)}$. For $m=1$ this is the original single-system-qubit model; for $m=2$ the two qubits share the same $E$. No independent system Hamiltonian is present: $H_{\rm inst}(t)$ implements the target gates. Along a complete frame trajectory the logical-picture process generator is
\begin{equation}
 Q_\omega(t)[X]=-i\left[\id_S\otimes H_E+
 g\sum_{j=1}^m U_\omega^\dagger(t)Z_jU_\omega(t)\otimes\sigma_x^{(E)},X\right],
 \label{eq:app-inst-process-Q}
\end{equation}
and the target instrument generator is $\mathbb L_{\rm inst}(t)[X]=-i[H_{\rm inst}(t),X]$. It is fixed in logical coordinates. The laboratory instrument is its pointwise frame conjugate, as derived in Eqs.~\eqref{eq:app-covariant-map} and \eqref{eq:app-transparent-ode}.

For the numerical Pauli convention, $P_\mu$ runs over the $4^m$ tensor products of $I,X,Y,Z$, with $\Tr(P_\mu^\dagger P_{\mu'})=d_S\delta_{\mu\mu'}$. Define $\mathbb P_{\mu\mu'}[X]=P_\mu XP_{\mu'}^\dagger$. The $d_S^4$ resulting superoperators are orthogonal in the $d_S^2$-dimensional system Liouville space:
\begin{equation}
 \Tr_{\rm super}(\mathbb P_\nu^\dagger\mathbb P_{\nu'})
 =d_S^2\delta_{\nu\nu'}.
 \label{eq:app-inst-orthogonality}
\end{equation}
Thus the instrument expansion coefficients, including the factor from this unscaled basis, are
\begin{equation}
 g_\nu(t)=d_S^{-2}\Tr_{\rm super}
 [\mathbb P_\nu^\dagger\mathbb L_{\rm inst}(t)],\qquad
 \mathbb L_{\rm inst}(t)=\sum_\nu g_\nu(t)\mathbb P_\nu.
 \label{eq:app-inst-coefficients}
\end{equation}
For a commutator Hamiltonian the coefficient of the identity superoperator vanishes. Let $R_\nu=\mathbb P_\nu\otimes\mathrm{id}_E$ and let $c_\nu=g_\nu^*$ be the matching instrument-field coefficient. The measured version of the cPT contraction corresponding to Eq.~(77) of Ref.~\cite{wassnerCPT} is
\begin{equation}
 p_\omega(x)=\langle\!\langle E_x\otimes\id_E|
 \mathcal T\exp\!\left\{\int_0^Tdt
 \left[Q_\omega(t)+\sum_{\nu\ne(0,0)}c_\nu(t)^*R_\nu\right]\right\}
 |\rho_0^{SE}\rangle\!\rangle.
 \label{eq:app-inst-overlap}
\end{equation}
The left boundary means $\langle\!\langle E_x\otimes\id_E|X\rangle\!\rangle=\Tr_{SE}[(E_x\otimes\id_E)X]$. The fixed logical POVM $E_x$ is realized in the laboratory with the final-frame correction of Eq.~\eqref{eq:app-transparent-readout}. Summing all $x$ restores the trace boundary; keeping one $x$ gives a nontrivial experimental probability.

To see how Eq.~\eqref{eq:app-inst-overlap} contains every particle number, let $\mathcal V_\omega(t,s)$ denote propagation by $Q_\omega$ alone and expand the instrument part in a Dyson series:
\begin{multline}
 p_\omega(x)=\sum_{n=0}^\infty
 \sum_{\nu_1,\ldots,\nu_n\ne(0,0)}
 \int_{0<t_1<\cdots<t_n<T}d^n\boldsymbol t\,
 \prod_{j=1}^n c_{\nu_j}(t_j)^*\\
 \times\langle\!\langle E_x\otimes\id_E|
 \mathcal V_\omega(T,t_n)R_{\nu_n}
 \mathcal V_\omega(t_n,t_{n-1})\cdots
 R_{\nu_1}\mathcal V_\omega(t_1,0)
 |\rho_0^{SE}\rangle\!\rangle.
 \label{eq:app-inst-dyson}
\end{multline}
For $n=0$, the term in Eq.~\eqref{eq:app-inst-dyson} means $\langle\!\langle E_x\otimes\id_E|\mathcal V_\omega(T,0)|\rho_0^{SE}\rangle\!\rangle$, with no $t_n$ or insertion label. Every other term contains the corresponding particle kernel with a measured rather than trace-only endpoint. The finite auxiliary-space exponential resums all $n$, so the probability calculations do not truncate the particle expansion. In the code one first verifies $\sum_\nu c_\nu^*R_\nu=-i[H_{\rm inst}\otimes\id_E,\cdot]$ before propagating each trajectory. This algebraic check prevents a factor-of-two error from mixing the unscaled Pauli convention here with the normalized one in Appendix~\ref{app:cmi}.

For the single-system-qubit experiment, put $\ell=T/3=0.4$, $H_{\rm Had}=(\sigma_x+\sigma_z)/\sqrt2$, and $\theta=\pi/4$, with $R_y(\theta)=e^{-i\theta\sigma_y/2}$. The actual piecewise-constant logical Hamiltonian is
\begin{equation}
 H_{\rm inst}(t)=
 \begin{cases}
 \dfrac{\pi}{2\ell}H_{\rm Had},&0\le t<\ell,\\
 0,&\ell\le t<2\ell,\\
 \dfrac{\theta}{2\ell}\sigma_y,&2\ell\le t\le T.
 \end{cases}
 \label{eq:app-single-qubit-generators}
\end{equation}
The first interval gives $-iH_{\rm Had}$ and the last gives $R_y(\pi/4)$; the scalar phase does not affect the logical $Z$-basis readout $E_0=|0\rangle\langle0|$, $E_1=|1\rangle\langle1|$. For the stated initial system state, $p_{0,\mathrm{id}}=(1-\sin\theta)/2$. The idle middle interval has $g_\nu=0$ but the process $Q_\omega$ and its environment continue to evolve. For the Bell experiment, let $P_1=(\id-\sigma_z^{(S_1)})/2$ and $h=T/2$. Its two target Hamiltonians are
\begin{equation}
 H_{\rm inst}^{(1)}=\frac{\pi}{2h}H_{S_1}\otimes\id_{S_2},\qquad
 H_{\rm inst}^{(2)}=\frac{\pi}{2h}P_1\otimes(\id_{S_2}-\sigma_x^{(S_2)}),
 \label{eq:app-bell-generators}
\end{equation}
where $H_{S_1}=(\sigma_x^{(S_1)}+\sigma_z^{(S_1)})/\sqrt2$. Evolution for time $h$ yields $-iH_{S_1}$ in the first interval and exactly $\mathrm{CNOT}_{S_1\to S_2}$ in the second. The Bell readout is $E_0=|\Phi^+\rangle\langle\Phi^+|$ and $E_1=\id_{S_1S_2}-E_0$, so $p_{0,\mathrm{id}}=1$ and $p_0=\langle\Phi^+|\rho_{S_1S_2}(T)|\Phi^+\rangle$. In fixed and Poisson refreshes both local Pauli frames are independently redrawn at each common refresh event; Brownian and OU controls use two independent local paths. In either experiment one complete control history is retained through all instrument windows, and the 500 trajectory probabilities are averaged only after evaluating $p_\omega(x)$. This contraction tests the chosen finite-duration experiments, not the supremum over all allowed instruments in Eq.~\eqref{eq:operational-error}.

\subsection{A finite instrument with an auxiliary system and records}
\label{app:instrument-covariance}
The Hamiltonian examples above are not the definition of a finite-duration instrument. Let $M$ contain its memory, detector ancilla, and any coherent record. A logical interaction $H_{SM}^{\rm inst}(t)$ is implemented, during its entire support, as
\begin{equation}
 H_{SM,\omega}^{\rm lab}(t)
 =(U_\omega(t)\otimes\id_M)H_{SM}^{\rm inst}(t)
 (U_\omega^\dagger(t)\otimes\id_M).
 \label{eq:app-inst-aux-cov}
\end{equation}
The control generator appears separately in the laboratory Hamiltonian and is not included a second time in Eq.~\eqref{eq:app-inst-aux-cov}. With $H_{SE}=0$, let $V_{\rm inst}^{\rm log}(t,a)$ be the joint $SM$ propagator of the target interaction. On a differentiable frame, differentiation of
\begin{equation}
 V_{{\rm inst},\omega}^{\rm lab}(b,a)
 =(U_\omega(b)\otimes\id_M)V_{\rm inst}^{\rm log}(b,a)
 (U_\omega^\dagger(a)\otimes\id_M)
 \label{eq:app-inst-endpoints}
\end{equation}
produces $H_{\rm ctrl}^\omega\otimes\id_M+H_{SM,\omega}^{\rm lab}$ and the identity initial value. Uniqueness of the propagator proves the endpoint formula. Its derivation used the pointwise rule, so an endpoint relation for an isolated black-box channel is not a substitute for specifying what happens while that channel overlaps with a memory-bearing environment.

To separate implementation from its final description, initialize $M$ in the same state for the logical and laboratory devices, and let outcome $x$ correspond to the same effect on $M$. Reducing Eq.~\eqref{eq:app-inst-endpoints} on $M$ gives the \emph{complete instrument map} for that record,
\begin{equation}
 \begin{aligned}
 \mathcal J_{x,\omega}^{\rm lab}(b,a)
 &=\operatorname{Ad}_{U_\omega(b)}\circ
 \mathcal J_x^{\rm log}(b,a)\circ
 \operatorname{Ad}_{U_\omega(a)^\dagger},\\
 \operatorname{Ad}_{V}[X]&=VXV^\dagger.
 \end{aligned}
 \label{eq:app-inst-two-endpoint-map}
\end{equation}
Only the two endpoint unitaries appear in this \emph{resulting map}, although Eq.~\eqref{eq:app-inst-aux-cov} must hold at every intermediate time to derive it. If $U_\omega(t)$ changes during the instrument, applying physical frame rotations only at $a$ and $b$ to an otherwise unmodified laboratory interaction generally does not produce Eq.~\eqref{eq:app-inst-two-endpoint-map}. Moreover, two instruments can share this isolated endpoint channel while differing in their internal timing; a cPT that couples to the system during $[a,b]$ distinguishes them.

If the instrument includes monitored jumps, each logical jump $L_\ell^{\rm log}(t)$ on $SM$ is transformed by the same conjugation as Eq.~\eqref{eq:app-inst-aux-cov}. In a given record branch, the no-jump propagators and jump insertions then have adjacent $U_\omega^\dagger(t)U_\omega(t)$ factors that cancel. Tracing or reading $M$ therefore returns the same logical branch kernel trajectory by trajectory. An instrument with internal non-Markovian memory can be treated by retaining its $M$ explicitly and transforming every system interface; one need not assume that the reduced instrument on $S$ alone has a time-local generator. Fixed and Poisson refresh use the jump identity, Brownian driving uses Stratonovich differentiation, and OU driving uses ordinary differentiation. With $H_{SE}\ne0$, these identities keep the instrument fixed in the logical frame; they do not remove the transformed interaction $gU_\omega^\dagger S U_\omega\otimes B$ that acts during the instrument.

The same point is visible in the process cMPS. Write $\mathbb U_\omega(t)=\operatorname{Ad}_{U_\omega(t)}\otimes\mathrm{id}_E$ for the coordinate change on the joint operator space. This superoperator is distinct from the Pauli target frame $G_j$. For a smooth path the process generators satisfy
\begin{equation}
 Q_{{\rm lab},\omega}=
 \dot{\mathbb U}_\omega\mathbb U_\omega^{-1}
 +\mathbb U_\omega Q_{{\rm log},\omega}\mathbb U_\omega^{-1},
 \qquad R_{\nu,\omega}^{\rm lab,enc}
 =\mathbb U_\omega R_\nu\mathbb U_\omega^{-1}.
 \label{eq:app-inst-cmps-cov}
\end{equation}
The first term is interpreted through the Stratonovich product rule for Brownian control. The initial state is encoded by $\mathbb U_\omega(0)$ and the closed trace boundary satisfies $\ell\mathbb U_\omega(T)=\ell$. Substituting Eq.~\eqref{eq:app-inst-cmps-cov} into any ordered particle chain cancels every internal frame pair, proving equality between its laboratory and logical coefficient at arbitrary coupling. This coordinate identity is separate from the stronger no-coupling transparency proved in Appendix~\ref{app:physicality}.

%% file: appendix_bath.tex
\section{Gaussian influence functional and the Drude bath}
\label{app:bath}

\subsection{Thermal correlation and a common convention}
Let $B=\sum_q(c_qb_q+c_q^*b_q^\dagger)$ and $H_E=\sum_q\omega_qb_q^\dagger b_q$, with $\rho_E\propto e^{-\beta H_E}$. We use $J_B(\Omega)=\sum_q|c_q|^2\delta(\Omega-\omega_q)$, without an implicit factor of $\pi$, and $n_\beta(\Omega)=(e^{\beta\Omega}-1)^{-1}$. Direct evaluation of the thermal two-point function gives
\begin{equation}
 C(v)=\Tr_E[B(v)B(0)\rho_E]
 =\int_0^\infty\!d\Omega\,J_B(\Omega)
 \left[(n_\beta+1)e^{-i\Omega v}+n_\beta e^{i\Omega v}\right].
 \label{eq:app-bath-correlation}
\end{equation}
The identities $C(-v)=C(v)^*$ and the positive-type property follow from the thermal trace. The full Gaussian cumulant expansion of Eq.~\eqref{eq:app-full-bath-trace}, for real forward and backward eigenvalue paths $\lambda_s(t),\lambda_r(t)$, is
\begin{multline}
 \mathcal F_B[s,r]=\exp\!\biggl\{-g^2\int_0^Tdt\int_0^tdu\,
 [\lambda_s(t)-\lambda_r(t)]\\
 \times\bigl[C(t-u)\lambda_s(u)-C(t-u)^*\lambda_r(u)\bigr]\biggr\}.
 \label{eq:app-influence}
\end{multline}
There is no factorization across neighboring cells. For a Gaussian bath, Eq.~\eqref{eq:app-influence} is exact for the frozen scalar-source history and every value of $g$ for which the integrated kernel exists. The $g^2$ exponent resums all connected Gaussian cumulants, so the expression is the exact Gaussian result rather than a second-order truncation. This convention is fixed by the environmental trace, not by a coefficient transcribed from the extracted text of Ref.~\cite{wassnerCPT}; the single-mode check below removes possible half-factor ambiguities.

On a grid $I_k=[u_k,u_{k+1})$, put
\begin{equation}
 \eta_{kl}=\int_{I_k}dt\int_{I_l\cap[0,t]}du\,C(t-u),\quad k\ge l.
 \label{eq:app-eta}
\end{equation}
For $k>l$ the domain is a rectangle, whereas for $k=l$ it is a triangle. Thus
\begin{equation}
 \mathcal F_{B,\Pi}[s,r]=\exp\!\left[-g^2\sum_{k\ge l}
 (\lambda_{s_k}-\lambda_{r_k})
 (\eta_{kl}\lambda_{s_l}-\eta_{kl}^*\lambda_{r_l})\right].
 \label{eq:app-discrete-influence}
\end{equation}
Replacing the diagonal triangle by a square doubles its leading short-time area and generally corrupts the phase as well. The same $\eta_{kl}$ is reused for every history; the label-dependent exponent is assembled only after all cross-time links are retained.

For $\operatorname{Re}z>0$, integration of $e^{-z(t-u)}$ over these domains gives
\begin{equation}
 F_{kl}(z)=\begin{cases}
 e^{-z(u_k-u_{l+1})}
 \dfrac{(1-e^{-z\Delta_k})(1-e^{-z\Delta_l})}{z^2},& k>l,\\[2mm]
 \dfrac{z\Delta_k-1+e^{-z\Delta_k}}{z^2},&k=l.
 \end{cases}
 \label{eq:app-Fkl}
\end{equation}
At $z=0$, the limits are $\Delta_k\Delta_l$ and $\Delta_k^2/2$, respectively. Substituting Eq.~\eqref{eq:app-bath-correlation} gives a direct spectral quadrature,
\begin{equation}
 \eta_{kl}=\int_0^\infty\!d\Omega\,J_B(\Omega)
 \bigl[(n_\beta+1)F_{kl}(i\Omega)+n_\beta F_{kl}(-i\Omega)\bigr].
 \label{eq:app-spectral-eta}
\end{equation}
The integrated formula must satisfy $\mathcal F_B[s,s]=1$, $\mathcal F_B[r,s]=\mathcal F_B[s,r]^*$, and $|\mathcal F_B[s,r]|\le1$. The last inequality is for the complete environmental overlap, not for each cross-cell factor separately. With a finite bath variance, a constant-path short-time expansion is
\begin{equation}
 \mathcal F_B=1-\tfrac12g^2(\lambda_s-\lambda_r)^2
 \langle B^2\rangle_E T^2+o(T^2).
\end{equation}
For one mode of frequency $\Omega$ and coefficient $c$, direct displacement-operator evaluation instead gives the nonperturbative check
\begin{multline}
 \log\mathcal F_B=-g^2(\lambda_s-\lambda_r)^2|c|^2
 \coth(\beta\Omega/2)\frac{1-\cos\Omega T}{\Omega^2}\\
 +ig^2(\lambda_s^2-\lambda_r^2)|c|^2
 \frac{\Omega T-\sin\Omega T}{\Omega^2}.
 \label{eq:app-single-mode}
\end{multline}

\subsection{Drude--Lorentz spectral density and Matsubara integration}
For the finite-temperature example use
\begin{equation}
 J_D(\Omega)=\frac{2\Lambda\gamma_B}{\pi}
 \frac{\Omega}{\Omega^2+\gamma_B^2},\qquad
 \int_0^\infty\frac{J_D(\Omega)}{\Omega}d\Omega=\Lambda.
 \label{eq:app-drude}
\end{equation}
Here $\Lambda$ is the reorganization energy and $\gamma_B^{-1}$ is the bath time scale. Since $J_D(\Omega)\sim\Omega^{-1}$, the equal-time $C(0)$ is logarithmically divergent for the ideal uncut spectrum. The integrated $\eta_{kl}$ can nevertheless have a defined limit because Eq.~\eqref{eq:app-Fkl} supplies additional high-frequency decay. A microscopic ultraviolet regularization $J_D(\Omega)e^{-\Omega/\Omega_{\rm UV}}$ is a distinct bath model; its cutoff must be stated and converged separately. The finite-variance $T^2$ check above applies to the regularized model, whereas the uncut ideal Drude spectrum is treated through its integrated kernels.

For $v>0$, contour integration of Eq.~\eqref{eq:app-bath-correlation} yields the Matsubara representation
\begin{align}
 C(v)&=\sum_{m=0}^\infty c_m e^{-\nu_m v},\\
 \nu_0&=\gamma_B,\qquad
 c_0=\Lambda\gamma_B\left[\cot\left(\frac{\beta\gamma_B}{2}\right)-i\right],\\
 \nu_m&=2\pi m/\beta,\quad
 c_m=\frac{4\Lambda\gamma_B}{\beta}
 \frac{\nu_m}{\nu_m^2-\gamma_B^2}\quad(m\ge1).
 \label{eq:app-matsubara}
\end{align}
Linearity of Eq.~\eqref{eq:app-eta} gives the analytic interval kernel
\begin{equation}
 \eta_{kl}=\sum_{m=0}^\infty c_m F_{kl}(\nu_m).
 \label{eq:app-drude-eta}
\end{equation}
The Matsubara terms represent the thermal correlation as a combined sum rather than as independent positive classical noises; individual $c_m$ may have either sign. If $\nu_m=\gamma_B$, the apparent poles in the separate terms cancel only after their contributions are combined and the limit is taken. In computation we truncate the sum only after checking the integrated kernels and final amplitudes against larger cutoffs, or use a separately verified rational approximation.

Substituting Eq.~\eqref{eq:app-drude-eta} into Eq.~\eqref{eq:app-discrete-influence}, and then into Eq.~\eqref{eq:app-history-sum}, gives the arbitrary-coupling Drude algorithm. Qubit system histories have four double labels per time cell, so uncompressed enumeration grows exponentially with grid size. Fixed and Poisson refresh calculations use actual holding intervals and event-time signs. Brownian and OU calculations use full-memory history-tensor compression on sampled continuous paths. The weak-coupling two-time kernel is used as a diagnostic, while the exact arbitrary-$g$ calculation uses the full influence functional.

%% file: appendix_control.tex
\section{Complete-path control averages and executable propagators}
\label{app:controlalg}

\subsection{Fixed Pauli refresh}
For a single-axis coupling and a constant Pauli frame on a holding interval, $G_j^\dagger\sigma_zG_j=y_j\sigma_z$ with independent, equiprobable $y_j\in\{\pm1\}$. The $\sigma_z$ eigenbasis is now fixed and the system chain $A_{\boldsymbol\nu}[s,r]$ in Eq.~\eqref{eq:app-history-sum} is independent of the signs. The bath factor is Eq.~\eqref{eq:app-discrete-influence} with every $\lambda_{s_k},\lambda_{r_k}$ multiplied by its holding-interval sign. For $K$ fixed holding intervals its exact control average is
\begin{equation}
 \overline{\mathcal F}^{\rm fixed}_B[s,r]
 =2^{-K}\sum_{\boldsymbol y\in\{\pm1\}^K}
 \mathcal F_B[y\lambda_s,y\lambda_r].
 \label{eq:app-fixed-signs}
\end{equation}
Writing the exponent as $-\sum_kD_{kk}-\sum_{k>l}D_{kl}y_ky_l$ gives, for two holding intervals, $\overline{\mathcal F}_B=e^{-D_{00}-D_{11}}\cosh(D_{10})$. The $D_{kl}$ may be complex; this identity is an algebraic check for the combined influence factor, whose physicality is assessed from the complete environmental construction. Enumeration costs $2^K$ before compression, and a transfer or history tensor can retain cross-interval bath links when $K$ is large.

\subsection{Poisson Pauli refresh}
For Poisson rate $\lambda_r$, conditional on exactly $m$ events the ordered
times $0<\tau_1<\cdots<\tau_m<T$ have density $m!/T^m$ on the ordered
simplex. Including the Poisson probability gives the joint ordered-event
(Janossy) density $e^{-\lambda_rT}\lambda_r^m\,d^m\boldsymbol\tau$; it
contains no further $m!$. Including the initial frame and every independent
redraw gives
\begin{equation}
 \overline{\mathcal F}^{\rm Pois}_B[s,r]
 =e^{-\lambda_rT}\sum_{m=0}^\infty\lambda_r^m
 \int_{0<\tau_1<\cdots<\tau_m<T}\!d^m\boldsymbol\tau\,
 2^{-(m+1)}\sum_{y_0,\ldots,y_m=\pm1}
 \mathcal F_B[y\lambda_s,y\lambda_r].
 \label{eq:app-poisson-events}
\end{equation}
The particle times stay fixed when the event times are integrated. Every sampled schedule is split at the union of its events and particle times, and its bath kernel is recomputed for those actual interval lengths. For an independent redraw, even when the same Pauli is sampled again, a nontrivial character obeys $\E[y(t)y(s)]=e^{-\lambda_r|t-s|}$. This provides a weak-coupling two-time check, while the exact nonlinear influence-functional average is evaluated from Eq.~\eqref{eq:app-poisson-events}.

\subsection{Shared auxiliary representation for refresh processes}
The refresh average also has a finite classical-control auxiliary representation \emph{before} the environment is eliminated. Let $\mathcal M_{b,a}(G)$ denote complete cMPS propagation with the same frame $G$ held on $[a,b)$, including the joint $SE$ drift $Q_G$ and all operation-field emission maps $R_\nu$. Independence of the redraws implies the ordered identity
\begin{equation}
 |\overline\Upsilon_T^{\rm tr}\rangle
 =(\Tr_{SE}\otimes\id_F)
 \left[\frac1M\sum_G\mathcal M_{b_K,b_{K-1}}(G)\right]
 \cdots\left[\frac1M\sum_G\mathcal M_{b_1,b_0}(G)\right]
 [\rho_{SE}\otimes|\Omega\rangle],
 \label{eq:app-refresh-cmps-product}
\end{equation}
with $M=4$ for uniform Pauli frames. This is an expectation of \emph{complete interval propagators}; it neither averages their generators before exponentiating nor traces $E$ after each interval. Equivalently, retain a frame register and insert the redraw matrix $J_{GG'}=1/M$ at each fixed boundary, leaving the $SE$ operator untouched. For Poisson redraws the joint register--process generator is
\begin{equation}
 (\mathbb Q_{\rm Pois})_{GG'}
 =\delta_{GG'}Q_G+\lambda_r(M^{-1}-\delta_{GG'})\id_{SE},
 \qquad \mathbb R_\nu=\id_M\otimes R_\nu.
 \label{eq:app-poisson-register}
\end{equation}
Expanding the register jumps in its time-ordered exponential recovers the ordered-event integral in Eq.~\eqref{eq:app-poisson-events}. The register is classical and finite; the environmental memory need not be finite or Markovian after eliminating $E$.

If the Poisson event sum is evaluated deterministically only through $m_{\max}$, the omitted \emph{unnormalized} tail of a closed Pauli-basis particle amplitude obeys
\begin{equation}
 |\overline{\mathcal C}-\overline{\mathcal C}_{m\le m_{\max}}|
 \le\Pr\{\operatorname{Pois}(\lambda_rT)>m_{\max}\},
 \label{eq:app-poisson-tail}
\end{equation}
because complete unitary propagation and each Pauli left/right insertion preserve the relevant trace-norm bound. Renormalizing the retained event probabilities would instead compute a conditional protocol. Equation~\eqref{eq:app-poisson-tail} does not include event-time quadrature, bath-kernel, or Monte Carlo error. The register construction also makes physicality transparent: every finite schedule is a valid causal process and the schedule average is convex, while a finite-width laboratory pulse requires the separate estimate in Appendix~\ref{app:pulse-width}.

\subsection{Brownian conditional density and the two coordinate choices}
Let $U_0$ be Haar distributed, independently of the subsequent noises, and let
\begin{equation}
 dU_t=-i\sqrt\kappa\sum_{c=x,y,z}F_cU_t\circ dW_c(t),
 \qquad F_c=\sigma_c/2.
\end{equation}
The stochastic equation is left invariant. On $[u_k,u_{k+1}]$ its solution is $U_{k+1}=G_kU_k$, where $G_k$ starts at the identity and is driven only by the new Brownian increments. Thus $G_k=U_{k+1}U_k^\dagger$ is independent of the pre-$u_k$ filtration and has Haar density $q_{\Delta_k}$. For a measurable set $A$ and a fixed old frame $V$,
\begin{equation}
 \Pr(U_{k+1}\in A\mid U_k=V)
 =\int_{AV^\dagger}q_{\Delta_k}(G)d\mu(G)
 =\int_Aq_{\Delta_k}(U'V^\dagger)d\mu(U').
 \label{eq:app-conditional-density}
\end{equation}
The last equality is right invariance of Haar measure. Multiplication of these conditional measures gives Eq.~\eqref{eq:path-density}; it does not make the endpoint frames independent.

Fix one double eigenvalue history in Eq.~\eqref{eq:app-history-sum}. In fixed logical coordinates let $f_{k,+}^{\rm log}(U)$ be the accumulated system operator density immediately after the node $u_k$. Its recursion is
\begin{equation}
 f_{k,+}^{\rm log}(U')=\mathcal E_{s_kr_k}(U')\mathcal P_k
 \left[\int d\mu(U)q_{\Delta_{k-1}}(U'U^\dagger)
 f_{k-1,+}^{\rm log}(U)\right].
 \label{eq:app-logic-recursion}
\end{equation}
Start with $f_{0,+}^{\rm log}(U)=\mathcal E_{s_0r_0}(U)\mathcal P_0[\rho_S]$; after the last node take $\int d\mu(U)\Tr_Sf_{N-1,+}^{\rm log}(U)$ and then sum complete label histories with their bath factors. If the bath weight has already been attached incrementally, the corresponding history index must remain in the state. The integral in Eq.~\eqref{eq:app-logic-recursion} is simply a sequential evaluation of the joint integral and retains the current-frame/system-weight correlation.

Equivalently define $f(U)=Uf^{\rm log}(U)U^\dagger$. In this instantaneous eigenbasis the projector is the fixed map $X\mapsto|s\rangle\langle s|X|r\rangle\langle r|$, while the interval propagation becomes
\begin{equation}
 (\mathbb T_\Delta^{\rm B}f)(U')
 =\int d\mu(G)q_\Delta(G)
 Gf(G^\dagger U')G^\dagger.
 \label{eq:app-brown-kernel}
\end{equation}
The particle insertion at a node must also be transformed: $X\mapsto U\mathcal P_k[U^\dagger XU]U^\dagger$. This is why a simpler fixed projector does not make every node independent of $U$. The two recursions are the same change of coordinates. A closed trace boundary needs no transformation back, since $\Tr(U^\dagger f(U)U)=\Tr f(U)$; an open logical output or fixed logical POVM does require the appropriate inverse-frame conversion.

\subsection{SU(2) representation blocks and a closed Brownian interval}
Expand the function of the frame in matrix elements of irreducible representations,
\begin{equation}
 f(U)=\sum_{\Lambda,a,b}D^{(\Lambda)}_{ab}(U)
 f^{(\Lambda)}_{ab},\qquad
 D^{(\Lambda)}(e^{-i\theta F_c})=e^{-i\theta J_c^{(\Lambda)}}.
\end{equation}
The first index $a$ transforms under left multiplication, whereas $b$ is the multiplicity/right index and is unchanged by that multiplication. The representation identity $M(\theta+\eta)=M(\theta)M(\eta)$, differentiated in $\eta$ at zero, gives $M'(\theta)=M(\theta)M'(0)$; with $J_c^{(\Lambda)}=iM'(0)$ and $M(0)=I$ the stated matrix exponential follows. The star in the next equation denotes elementwise complex conjugation, not Hermitian adjoint. For normalized Haar measure, Schur orthogonality gives $\int d\mu(U)D_{ab}^{(\Lambda)}(U)^*D_{cd}^{(\Lambda')}(U)=\delta_{\Lambda\Lambda'}\delta_{ac}\delta_{bd}/d_\Lambda$, so coefficient extraction is $f_{ab}^{(\Lambda)}=d_\Lambda\int d\mu(U)D_{ab}^{(\Lambda)}(U)^*f(U)$. The dimension factor cancels between the coefficient definitions before and after the frame update, giving Eq.~\eqref{eq:app-brown-kernel}:
\begin{equation}
 f_{lb}^{(\Lambda),\rm new}
 =\sum_a\int d\mu(G)q_\Delta(G)
 D_{la}^{(\Lambda)}(G)^*Gf_{ab}^{(\Lambda)}G^\dagger.
 \label{eq:app-block-integral}
\end{equation}
The same $a$ is summed because the old left representation index is contracted with the new one $l$; the right index $b$ stays fixed. For a one-dimensional representation, $d_\Lambda=1$, Eq.~\eqref{eq:app-block-integral} reduces to the averaged adjoint action on a system operator. The density $q_\Delta$ is not discarded: it is the probability law in the integral, and integration of its moments is exactly what produces the finite-dimensional matrix exponential.

With $\mathcal L_{F_c}[X]=[F_c,X]$, define the joint Hermitian generator on the tensor product of representation index and operator space,
\begin{equation}
 \mathsf T_c^{(\Lambda)}=-J_c^{(\Lambda)*}\otimes I
 +I\otimes\mathcal L_{F_c}.
 \label{eq:app-joint-generator}
\end{equation}
For optional deterministic left drift $d_c$, the exact interval average in Eq.~\eqref{eq:app-block-integral} is
\begin{equation}
 \mathsf M_\Delta^{(\Lambda)}
 =\exp\!\left[\Delta\left(-i\sum_cd_c\mathsf T_c^{(\Lambda)}
 -\frac\kappa2\sum_c(\mathsf T_c^{(\Lambda)})^2\right)\right].
 \label{eq:app-brown-exponential}
\end{equation}
The drift is optional; our reported stochastic calculations use $d_c=0$. A constant drift changes the oscillatory phase of control modes. In the isotropic case its generator commutes with the Casimir, so the interval can still be resolved by joint total-spin and drift-axis magnetic-number projectors; a time-dependent or anisotropic generator need not admit that simplification. In the isotropic drift-free case, coupling a spin-$j$ representation to the scalar plus spin-one system-operator adjoint representation yields projectors $\Pi_\ell$ onto the allowed total spins $\ell\in\{|j-1|,j,j+1\}$, with the usual removal of nonexistent sectors. Thus
\begin{equation}
 \mathsf M_\Delta^{(j)}
 =\sum_\ell e^{-\kappa\ell(\ell+1)\Delta/2}\Pi_\ell.
 \label{eq:app-casimir}
\end{equation}
For $j=0$ the blocks are $\ell=0,1$; for $j=1$ their decay factors are $1,e^{-\kappa\Delta},e^{-3\kappa\Delta}$. This is the analytic solution for one Brownian interval. Node projectors and transformed insertions mix $j$ blocks, so applying only the trivial block after every interval would erase needed correlations. One precomputes the finite Haar node matrices, alternates them with Eq.~\eqref{eq:app-brown-exponential}, and reads the trivial representation only at the final Haar integral. A representation cutoff is checked by increasing it until the desired amplitudes converge; there is no universal finite maximum $j$ for every history.

For a constant drift of magnitude $h$ along one fixed axis, let $\Pi_{\ell m}^{(j)}$ project simultaneously onto total spin $\ell$ and magnetic number $m$ of $\mathsf T_z^{(j)}$. The corresponding analytic interval is
\begin{equation}
 \mathsf M_\Delta^{(j)}
 =\sum_{\ell}\sum_{m=-\ell}^{\ell}
 e^{-\kappa\ell(\ell+1)\Delta/2-ihm\Delta}
 \Pi_{\ell m}^{(j)}.
 \label{eq:app-brown-drift-spectral}
\end{equation}
The projectors can be cached and checked against direct exponentiation of Eq.~\eqref{eq:app-brown-exponential}. They solve a control interval, not the entire particle-amplitude contraction: node mixing and the full environmental history remain. In practice, for each desired representation cutoff one computes node matrices by Haar integrating the matrix elements of $D^{(j)}(U)^*$ against the current-frame projector and $U\mathcal P_k[U^\dagger(\cdot)U]U^\dagger$. An SU(2) Euler-angle rule may use periodic quadrature for the two azimuthal angles and Gauss--Legendre quadrature for the cosine of the polar angle. Increasing this quadrature order is a separate check from increasing the representation cutoff.

\subsection{OU: the joint Markov state and Hermite ladder}
For OU driving, $\dot U=-i\sum_c\xi_cF_cU$ and
\begin{equation}
 d\xi_c=-\xi_cdt/\tau+\sqrt{2\sigma^2/\tau}\,dW_c.
\end{equation}
The stationary initial law is Haar in $U$ and independent centered Gaussian of variance $\sigma^2$ in each $\xi_c$. The pair $(U,\boldsymbol\xi)$ is Markov; $U$ alone is not. Relative to the stationary Gaussian measure, the OU differential operator and group-block generator are
\begin{align}
 L_{\rm OU}&=\tau^{-1}\sum_c(-\xi_c\partial_{\xi_c}
 +\sigma^2\partial_{\xi_c}^2),\\
 \mathsf L_{\rm OU}^{(\Lambda)}
 &=L_{\rm OU}\otimes I-i\sum_c\xi_c\otimes\mathsf T_c^{(\Lambda)}.
 \label{eq:app-ou-generator}
\end{align}
An optional drift adds $-i\sum_cd_c\mathsf T_c^{(\Lambda)}$ but is absent from the calculations. In normalized probabilists' Hermite polynomials $\phi_{\boldsymbol q}(\boldsymbol\xi)=\prod_c\mathrm{He}_{q_c}(\xi_c/\sigma)/\sqrt{q_c!}$,
\begin{align}
 L_{\rm OU}\phi_{\boldsymbol q}&=-|\boldsymbol q|\phi_{\boldsymbol q}/\tau,\\
 \xi_c\phi_{\boldsymbol q}&=\sigma\left(\sqrt{q_c+1}\phi_{\boldsymbol q+\boldsymbol e_c}
 +\sqrt{q_c}\phi_{\boldsymbol q-\boldsymbol e_c}\right).
 \label{eq:app-hermite-ladder}
\end{align}
Equations~\eqref{eq:app-ou-generator}--\eqref{eq:app-hermite-ladder} define a sparse matrix: diagonal decay at $-|\boldsymbol q|/\tau$ and neighboring Hermite-level couplings $-i\sigma\mathsf T_c^{(\Lambda)}\sqrt{q_c+1}$ or $-i\sigma\mathsf T_c^{(\Lambda)}\sqrt{q_c}$. For each holding time apply its matrix exponential, then the same current-frame node map as above. The initial stationary noise has only the zero Hermite component; higher levels created during propagation cannot be removed at an intermediate node. At the final integration only the zero Hermite and trivial group modes survive. Increasing both Hermite and group cutoffs checks the result. Eliminating high Hermite levels creates a memory kernel, so the finite-$\tau$ three-axis OU propagator is generally not one scalar exponential. The short-time adjoint correlation starts as $1-\sigma^2\Delta^2+o(\Delta^2)$, in contrast to $1-\kappa\Delta+O(\Delta^2)$ for Brownian motion. A white-noise limit requires a specified scaling of $\sigma^2\tau$; merely refining a time grid does not produce it.

The first two Hermite levels show explicitly what must be retained. Order the levels as $\boldsymbol0,\boldsymbol e_x,\boldsymbol e_y,\boldsymbol e_z$ and put $\mathsf T_d=\sum_cd_c\mathsf T_c^{(j)}$. Truncation at total Hermite order one gives the block matrix
\begin{equation}
 \left.\mathsf L_{\rm OU}^{(j)}\right|_{q_{\max}=1}
 =\begin{pmatrix}
 -i\mathsf T_d&-i\sigma\mathsf T_x&-i\sigma\mathsf T_y&-i\sigma\mathsf T_z\\
 -i\sigma\mathsf T_x&-\tau^{-1}I-i\mathsf T_d&0&0\\
 -i\sigma\mathsf T_y&0&-\tau^{-1}I-i\mathsf T_d&0\\
 -i\sigma\mathsf T_z&0&0&-\tau^{-1}I-i\mathsf T_d
 \end{pmatrix}.
 \label{eq:app-ou-first-block}
\end{equation}
Each entry is a $4(2j+1)$-dimensional joint block for a fixed right representation index. A total-order cutoff $q_{\max}$ contains $\binom{q_{\max}+3}{3}$ Hermite indices for three noise channels. Keeping only the zero level deletes every stochastic coupling; Eq.~\eqref{eq:app-ou-first-block} displays the first nontrivial coupling block. The result is checked by increasing both Hermite and group cutoffs.

For a more precise account of finite-$\tau$ memory, partition the full Hermite generator into zero and nonzero levels, $L_{00},L_{0>},L_{>0},L_{>>}$. Eliminating the latter from the exact linear equations gives
\begin{equation}
 \dot f_0(t)=L_{00}f_0(t)+L_{0>}e^{tL_{>>}}f_{>}(0)
 +\int_0^tds\,L_{0>}e^{(t-s)L_{>>}}L_{>0}f_0(s).
 \label{eq:app-ou-memory}
\end{equation}
The initially stationary state has $f_{>}(0)=0$, but particle nodes generally create nonzero higher levels later. The zero-level corner of $e^{t\mathsf L_{\rm OU}}$ therefore cannot be multiplied across nodes as if it were an independent scalar semigroup. As a single-axis commuting benchmark, a mode of generator eigenvalue $a$ has the exact factor
\begin{equation}
 K_a(t)=e^{-iad_zt}\exp\!\{-a^2\sigma^2[\tau t-\tau^2(1-e^{-t/\tau})]\},
 \label{eq:app-ou-commuting}
\end{equation}
because its integrated OU phase is Gaussian with variance $2\sigma^2[\tau t-\tau^2(1-e^{-t/\tau})]$. For $t\ll\tau$, its envelope begins as $\exp(-a^2\sigma^2t^2/2)$, with a decay scale of order $1/(|a|\sigma)$. For $t\gg\tau$, it has exponential rate $a^2\sigma^2\tau$, giving a scale of order $1/(a^2\sigma^2\tau)$ when that long-time regime is reached. These are coherence scales for one commuting mode, not a total-variation mixing time for the noncommuting joint frame--velocity process. The factor checks the Hermite calculation but does not solve that joint mixing problem. With $\tau\to0$ and $\mathcal D_{\rm OU}:=\sigma^2\tau$ fixed, the integrated velocity covariance tends to $2\mathcal D_{\rm OU}\delta(t)$ and the limiting Stratonovich Brownian strength is $\kappa=2\mathcal D_{\rm OU}$. Holding $\sigma$ fixed instead makes the integrated noise vanish. These are limits of physical models, not consequences of refining the numerical grid.

\subsection{Protocol-specific amplitude algorithms}
\label{alg:control-paths}
Each protocol supplies the \emph{entire} control history to the common contraction in Appendix~\ref{alg:particle-history}. The following steps specify its additional state, its random draw, and the averaging operation. In all four cases the particle times are fixed before drawing a trajectory.
\begin{enumerate}
\item \textbf{Fixed refresh.} Specify the deterministic holding boundaries and draw an independent uniform Pauli at the start of each holding interval, allowing consecutive redraws to coincide. For $S=\sigma_z$ retain its sign $y_j=\pm1$. Split the time axis at the union of holding boundaries and particle times. Apply that sign to both branch coupling eigenvalues on every resulting cell; keep every cross-cell bath link. Average the resulting complete complex amplitudes over sign strings, by exact enumeration when feasible or by sampled strings. Equation~\eqref{eq:app-fixed-signs} is the exact ensemble formula; one must not multiply independently averaged bath factors from different holding intervals.
\item \textbf{Poisson refresh.} Draw the event number with mean $\lambda_r T$ and, conditional on that number, draw and sort uniform event times in $(0,T)$. Independently draw the initial and every replacement Pauli, again permitting identical consecutive frames. Split at the union of actual event times and fixed particle times, integrate each bath-kernel segment at its actual duration, and contract all cross-segment links before averaging complete complex amplitudes. Alternatively, sum the ordered event-time integrals in Eq.~\eqref{eq:app-poisson-events}; their density is $e^{-\lambda_rT}\lambda_r^m$ on the ordered simplex and carries no additional $m!$. Sampling only the event count but replacing actual event times by a regular grid computes a different protocol.
\item \textbf{Brownian frame.} Draw $U_0$ from normalized Haar measure. Draw independent Brownian increments on a control subgrid and propagate the left-multiplicative Stratonovich equation, composing each increment on the left of the current frame. Record one continuous path on a fine subgrid and sample that same path on each coarser bath grid used in a convergence comparison. For every path run Appendix~\ref{alg:particle-history} with $\Pi_s(U_k)$, then average its complex coefficients. A new increment is independent of the past, whereas the successive endpoint frames are not; neither the tensor contraction nor the path sampling replaces them by independent Haar frames.
\item \textbf{OU Hamiltonian frame.} Draw $U_0$ from Haar measure and each $\xi_c(0)$ from its centered stationary Gaussian law of variance $\sigma^2$. Propagate $\boldsymbol\xi$ by its OU transition and propagate $U$ using the time-ordered Hamiltonian $\sum_c\xi_c(t)F_c$ over the same fine subgrid. Preserve the joint state $(U,\boldsymbol\xi)$ throughout the path; drawing a fresh $\boldsymbol\xi$ at each cell changes the model. Freeze the resulting $U$ on the bath cells, contract the complete history, and average complex coefficients. Refine the control step while keeping the chosen $\tau$ and variance scaling explicit.
\end{enumerate}
For the reported single-axis Drude refresh data, the eigenbasis is fixed and only the sign history changes. Consequently the system coefficients for all operation labels and the bath coefficients for each complete sign history can be cached separately. This is the actual refresh calculation; the Brownian and OU data use the full-memory tensor and sampled frame paths above.

\subsection{Alternative deterministic block algorithm}
\label{alg:representation-blocks}
Equations~\eqref{eq:app-brown-exponential} and \eqref{eq:app-ou-generator} permit a different evaluation when the group expansion is tractable. First choose finite representation and, for OU, Hermite cutoffs. Integrate the current-frame node projector and the transformed insertion against representation matrix elements to precompute their finite mixing matrices. Initialize the Haar group mode (and OU zero Hermite mode) together with the full bath-history index. Between nodes apply $\mathsf M_\Delta^{(\Lambda)}$ for Brownian motion, or the sparse joint group--Hermite exponential for OU; at each node apply the precomputed mixing matrix and append all new bath-history links. Retain every resulting mode until the final trace and Haar/noise integral, which selects the trivial group and zero Hermite modes. Increase both cutoffs and the bath-history rank until the requested amplitudes stabilize. The interval exponentials perform the $q_\Delta$ average exactly within a retained block, but no block by itself removes the need to track the bath history. This deterministic route was \emph{not} the solver used for the reported Drude plots.

%% file: appendix_weak.tex
\section{A weak-coupling analytic check for arbitrary fixed particle number}
\label{app:weak}

The exact Gaussian influence functional is an all-coupling result. A separate $g^2$ expansion is useful because it exposes the control correlation that acts between arbitrary particle insertions. Let $t_0=0,t_{n+1}=T$ and put $\mathcal P_{b:a}=\mathcal P_{\nu_b}\cdots\mathcal P_{\nu_a}$, with the identity understood when $b<a$, and $X_j=\mathcal P_{j:1}[\rho_S]$. Expand $S_\omega(t)=\sum_as_{a,\omega}(t)P_a$ in a Hermitian basis, with real $s_{a,\omega}$. Define
\begin{equation}
 K_{ab}(t,u)=\E_\omega[s_{a,\omega}(t)s_{b,\omega}(u)].
 \label{eq:app-weak-K}
\end{equation}
For $t\in(t_i,t_{i+1})$ and $u\in(t_j,t_{j+1})$ with $t>u$, denote the rectangular domain for $i>j$ or triangular domain for $i=j$ by $\Omega_{ij}$. The early interaction vertex acts as $-ig(S_\omega(u)X_j\otimes B(u)\rho_E-X_jS_\omega(u)\otimes\rho_E B(u))$. Acting with the later commutator and tracing the bath gives $C(t,u)$ for the first term and $C(t,u)^*$ for the second; the intervening particle insertions remain between the two vertices. Therefore
\begin{multline}
 \overline{\mathcal C}_{\boldsymbol\nu}^{(n),\mathrm{tr}}
 =\Tr_S\mathcal P_{n:1}[\rho_S]
 -g^2\sum_{i=0}^n\sum_{j=0}^i\int_{\Omega_{ij}}dt\,du
 \sum_{a,b}K_{ab}(t,u)\\
 \times\Tr_S\mathcal P_{n:i+1}\!\left\{
 C(t,u)[P_a,\mathcal P_{i:j+1}[P_bX_j]]
 -C(t,u)^*[P_a,\mathcal P_{i:j+1}[X_jP_b]]\right\}
 +O(g^4).
 \label{eq:app-weak-amplitude}
\end{multline}
This formula retains all $n$ particle legs and uses only one two-time control kernel because there are exactly two interaction vertices at this order. If the later vertex lies after the last particle event, the outer trace of its commutator vanishes, a useful causal check. Higher coupling orders require higher control moments and the full-history algorithm.

The refresh protocols give the first two explicit kernels. For $S=\sigma_z$, a uniform Pauli frame changes only its sign, so $s_{z,\omega}(t)=y_\omega(t)$ and all other coefficients vanish. Under fixed refresh, $K_{zz}(t,u)=1$ when $t$ and $u$ belong to the same holding interval and is zero when a refresh boundary separates them. Under Poisson refresh, the probability of no intervening redraw gives $K_{zz}(t,u)=e^{-\lambda_r(t-u)}$ for $t\ge u$. Thus the fixed-refresh $Z_{ij}^{zz}$ is obtained by integrating $C(t,u)$ only over portions of $\Omega_{ij}$ within a common holding interval. For Poisson refresh and $C(v)=\sum_m c_me^{-\nu_mv}$, it is $\sum_m c_mF_{ij}^{\rm event}(\nu_m+\lambda_r)$. These are second-order checks; the exact refresh averages retain all higher control moments and bath links in Eqs.~\eqref{eq:app-fixed-signs} and \eqref{eq:app-poisson-events}.

For $S=\sigma_z$ and an initial Haar frame with isotropic drift-free qubit Brownian motion, common-frame Haar averaging gives $K_{ab}(t,u)=\delta_{ab}e^{-\kappa(t-u)}/3$ for $a,b=x,y,z$. With the Drude expansion $C(v)=\sum_m c_me^{-\nu_mv}$, the only nonzero scalar coefficients of Eq.~\eqref{eq:app-weak-amplitude} are
\begin{equation}
 Z_{ij}^{ab}=\int_{\Omega_{ij}}C(t,u)K_{ab}(t,u)dt\,du
 =\frac{\delta_{ab}}3\sum_m c_mF_{ij}^{\rm event}(\nu_m+\kappa),
 \label{eq:app-weak-brown}
\end{equation}
where $F^{\rm event}$ is Eq.~\eqref{eq:app-Fkl} evaluated on the particle-event intervals. Thus the Brownian damping shifts every bath pole by $\kappa$ in this second-order coefficient, for any fixed $n$. This is not a universal multiplicative damping of each complete particle amplitude.

For optional constant Brownian drift $h\boldsymbol n\cdot\boldsymbol F$, let $\theta$ be the angle between $\boldsymbol n$ and the original $z$ coupling direction. Haar initialization and isotropic diffusion give
\begin{equation}
 K_{ab}(t,u)=\frac{\delta_{ab}}3e^{-\kappa v}
 [\cos^2\theta+\sin^2\theta\cos(hv)],\qquad v=t-u\ge0.
 \label{eq:app-weak-drift-kernel}
\end{equation}
Thus each $F_{ij}^{\rm event}(\nu_m+\kappa)$ in Eq.~\eqref{eq:app-weak-brown} becomes
\begin{equation}
 \cos^2\theta F_{ij}^{\rm event}(\nu_m+\kappa)
 +\frac{\sin^2\theta}{2}
 \bigl[F_{ij}^{\rm event}(\nu_m+\kappa+ih)
 +F_{ij}^{\rm event}(\nu_m+\kappa-ih)\bigr].
 \label{eq:app-weak-drift-poles}
\end{equation}
This is a second-order control-kernel result; the reported simulations have no deterministic drift. The bath coefficients $c_m$ retain their complex quantum part.

For OU control, let $L_{\rm OU}^{\rm ad}$ be the Hermite ladder of Eq.~\eqref{eq:app-ou-generator} restricted to the spin-one adjoint representation, and let $v_0=|\boldsymbol0\rangle\otimes e_z$. Its stationary two-time autocorrelation is
\begin{equation}
 k_{\rm OU}(v)=v_0^\dagger e^{vL_{\rm OU}^{\rm ad}}v_0,
 \qquad K_{ab}(t,u)=\frac{\delta_{ab}}3k_{\rm OU}(t-u).
 \label{eq:app-weak-ou}
\end{equation}
After a finite converged Hermite truncation, put $H_m=\nu_m I-L_{\rm OU}^{\rm ad}$. Analytic functional calculus in Eq.~\eqref{eq:app-Fkl} then gives
\begin{equation}
 Z_{ij}^{ab}=\frac{\delta_{ab}}3\sum_m c_m
 v_0^\dagger F_{ij}^{\rm event}(H_m)v_0.
 \label{eq:app-weak-ou-matrix}
\end{equation}
For a diagonal event interval of length $a$, $F(H)=H^{-2}(Ha-I+e^{-Ha})$; a cross interval is the analogous product of two $(I-e^{-Ha})$ factors and its gap exponential. Near zero eigenvalues these functions are evaluated by stable $\varphi$-functions or series, not by subtracting nearly equal matrices after an explicit inverse. Hermite and Matsubara cutoffs are increased independently. The finite-$\tau$ OU coefficient generally has several rates; replacing it by one Brownian exponential is justified only after a specified white-noise scaling limit.

\subsection{A static-frame and terminal-observable benchmark}
\label{app:static-benchmark}
These checks deliberately measure a final system observable, rather than calling the closed zero-particle coefficient a decay curve. Let $A_0(T)=\int_0^T(T-v)\operatorname{Re}C(v)dv$. For a static but Haar-random coupling axis, the exact Gaussian dephasing factor perpendicular to that axis is $q(T)=e^{-4g^2A_0(T)}$. Averaging axes sends an input Bloch vector $\boldsymbol r_0$ to
\begin{equation}
 \overline\rho_{\rm static}(T)=\frac12\left[
 \id+\frac{1+2e^{-4g^2A_0(T)}}3
 \boldsymbol r_0\cdot\boldsymbol\sigma\right].
 \label{eq:app-static-haar}
\end{equation}
The residual $1/3$ Bloch factor is an exact arbitrary-coupling check and shows why a single Haar draw is not rapid decoupling. For the uncut Drude bath, $A_0$ is evaluated from the integrated Matsubara kernel; its divergent point value $C(0)$ is not used.

For a moving isotropic Haar-initialized frame, a separate $g^2$ expansion of the final \emph{open} state gives
\begin{equation}
 \overline\rho_X(T)=\rho_S-\frac{g^2A_X(T)}3
 \sum_{a=x,y,z}[\sigma_a,[\sigma_a,\rho_S]]+O(g^4),
 \quad A_X(T)=\int_0^T(T-v)\operatorname{Re}C(v)k_X(v)dv,
 \label{eq:app-weak-terminal}
\end{equation}
where $k_X(v)=e^{-\kappa v}$ for Brownian motion and $k_X(v)=k_{\rm OU}(v)$ from Eq.~\eqref{eq:app-weak-ou} for OU driving. This isotropic formula does not apply unchanged to Pauli refresh, whose coupling remains on the fixed $z$ axis; its kernel is given above. For Brownian motion, $A_{\rm B}(T)=\operatorname{Re}\sum_mc_mF_{00}(\nu_m+\kappa)$ on the full interval. OU driving instead uses $v_0^\dagger F_{00}(\nu_m I-L_{\rm OU}^{\rm ad})v_0$. The imaginary bath response cancels in this symmetric final-state check but not in general off-diagonal multi-particle amplitudes. Equation~\eqref{eq:app-weak-terminal} is a weak-coupling terminal-state expansion. Arbitrary-coupling observables are evaluated from the full-history influence functional, with no additional Markov approximation or post hoc clipping.

%% file: appendix_physicality.tex
\section{Positivity, causality, finite-comb recovery, and single-trajectory transparency}
\label{app:physicality}

\subsection{Single-trajectory transparency with no independent $H_S$}
The laboratory Hamiltonian used in this paper is
\begin{equation}
 H_{\rm lab}^{\omega}(t)=H_E+H_{SE}+H_{\rm ctrl}^{\omega}(t)+H_{\rm inst}^{\omega}(t).
 \label{eq:app-lab-hamiltonian}
\end{equation}
There is no extra system drift $H_S$. A target logical instrument with Hamiltonian $h_{\rm inst}(t)$ is implemented in the laboratory by its frame-covariant version. Let $U_\omega(t)$ be the control frame, including its initial Haar draw, and choose the control Hamiltonian so that its propagator satisfies $\dot U_\omega=-iH_{\rm ctrl}^{\omega}U_\omega$. For $H_{SE}=0$, the compiled system propagator through an instrument interval obeys
\begin{equation}
 \dot V_\omega(t)=-i\left[H_{\rm ctrl}^{\omega}(t)
 +U_\omega(t)h_{\rm inst}(t)U_\omega^\dagger(t)\right]V_\omega(t).
 \label{eq:app-compiled-propagator}
\end{equation}
Set $W_\omega(t)=U_\omega^\dagger(t)V_\omega(t)$. Using the product rule and $\dot U_\omega^\dagger=iU_\omega^\dagger H_{\rm ctrl}^{\omega}$,
\begin{align}
 \dot W_\omega
 &=iU_\omega^\dagger H_{\rm ctrl}^{\omega}V_\omega
 +U_\omega^\dagger\dot V_\omega\\
 &=-i h_{\rm inst}(t)W_\omega.
 \label{eq:app-transparent-ode}
\end{align}
Thus $W_\omega(t)$ is exactly the target logical propagator $W_{\rm id}(t)$, up to the common boundary value. If the input is encoded as $\rho_{\rm lab}(0)=U_\omega(0)\rho_{\rm log}U_\omega^\dagger(0)$, then at the end of each interval
\begin{equation}
 \rho_{\rm lab}(t)=U_\omega(t)\rho_{\rm log}(t)U_\omega^\dagger(t),
 \qquad
 \rho_{\rm log}(t)=W_{\rm id}(t)\rho_{\rm log}(0)W_{\rm id}^\dagger(t).
 \label{eq:app-transparent-state}
\end{equation}
For fixed and Poisson refresh, the frame is constant between jumps. At either kind of jump $U^-\mapsto U^+$, the laboratory update is $J=U^+(U^-)^\dagger$, so $\rho_{\rm lab}^+=J\rho_{\rm lab}^-J^\dagger=U^+\rho_{\rm log}(U^+)^\dagger$. The logical state is unchanged. A pulse of nonzero duration requires pointwise covariance throughout its support and the separate finite-pulse analysis below. For Brownian driving, the same cancellation follows from the Stratonovich product rule; $H_{\rm ctrl}^{\omega}$ then denotes a stochastic generator, not a bounded pointwise Hamiltonian. For OU driving, the equations are ordinary differential equations along each continuous velocity path. Thus the transparency proof applies to each protocol without assuming differentiable Brownian paths or instantaneous physical pulses.
The same calculation works for a general covariant completely positive, trace-preserving (CPTP) instrument: if $\mathcal A_t$ is its logical infinitesimal map, define
\begin{equation}
 \mathcal A_t^\omega[X]=U_\omega(t)\mathcal A_t[U_\omega^\dagger(t)XU_\omega(t)]U_\omega^\dagger(t).
 \label{eq:app-covariant-map}
\end{equation}
At every intermediate time, applying $U_\omega^\dagger(t)$ maps the laboratory state and the compiled instrument back to the same logical pair $(\rho_{\rm log},\mathcal A_t)$. Therefore the complete composition of finite-duration maps is the target composition for each fixed $\omega$, not only after an ensemble average. Endpoint covariance alone would not imply this cancellation when the frame varies inside the pulse.

For a fixed laboratory POVM $M_x$, the physical readout that returns the logical probability is
\begin{equation}
 M_{x,\omega}^{\rm lab}=U_\omega(T)M_xU_\omega^\dagger(T),\qquad
 p_\omega(x)=\Tr[M_{x,\omega}^{\rm lab}\rho_{\rm lab}(T)]
 =\Tr[M_x\rho_{\rm log}(T)].
 \label{eq:app-transparent-readout}
\end{equation}
If the final boundary is a total trace, the same result follows without an explicit detector correction by cyclicity. A fixed uncorrected POVM instead measures a random-frame observable and is not the logical experiment. This distinction is why the finite-duration numerical cases use a fixed logical POVM together with the inverse-frame or equivalent covariant readout.

\subsection{Constructive positivity of the continuous Choi object}
The process-side Choi map is not obtained by identifying the process Fock space with an operator space through a local rule for every field species. As shown in Ref.~\cite{fuksa2026dilation}, the continuum Fock space does not factorize along the forward/backward contour, and the cross-species amplitudes cannot in general be inserted as ordinary Choi-operator kernels. We use the process map $\mathsf P$ of that work,
\[
 \mathsf P:\Gamma_T^{(d_S^4-1)}\longrightarrow
 \mathrm{HS}\!\left(\gamma_T\right),
 \qquad \gamma_T:=\Gamma_T^{(d_S^2-1)},
\]
which discards cross-species wavefunctions and reshapes the pure forward and backward amplitudes into an operator kernel on the Choi Fock space. For ordered particle sectors this kernel is
\begin{equation}
 [\mathsf P|\Upsilon_T\rangle]^{(n,n')}_{\boldsymbol\mu;\boldsymbol\mu'}
 (\boldsymbol t;\boldsymbol t')
 =\mathcal C^{(n+n')}_{\boldsymbol\nu}
 (\operatorname{sort}(\boldsymbol t,\boldsymbol t')),
 \label{eq:app-choi-process-map}
\end{equation}
where the sorted label string $\boldsymbol\nu$ assigns $(\mu_j,0)$ to each forward time $t_j$ and $(0,\mu'_k)$ to each backward time $t'_k$, in their respective temporal orders. The resulting kernel is Hilbert--Schmidt by construction. The recovery map has domain $\mathcal S\subset\mathfrak S_1(\gamma_T)$ and codomain $\Gamma_T^{(d_S^4-1)}$; on physical cPTs it satisfies $\mathsf R\mathsf P|\Upsilon_T\rangle=|\Upsilon_T\rangle$. The instrument-side map, denoted $\mathsf V$, acts on instrument effects and retains cross-species information; its output is generally a sesquilinear kernel form rather than an operator. The process and instrument are paired through the kernel pairing of Ref.~\cite{fuksa2026dilation}; in particular, $\mathsf P$, $\mathsf R$, and $\mathsf V$ have distinct domains and roles, and $\mathsf V$ is not used in the Gram-positivity argument below.

For a constructive proof, let $\widetilde\Omega$ be the Choi vacuum and define
the free logical Hamiltonian by
\begin{equation*}
 H_{\log,\omega}(t)=\id_S\otimes H_E(t)+
 (U_\omega^\dagger(t)\otimes\id_E)H_{SE}(t)
 (U_\omega(t)\otimes\id_E).
\end{equation*}
For a process-canonical representation, the positive operator $\mathsf P|\Upsilon_{T,\omega}\rangle$ has the following single-branch Gram construction, equivalent to Eqs.~(46)--(48) of Ref.~\cite{wassnerCPT} on the regular process domain and to the process map of Ref.~\cite{fuksa2026dilation}. The field regularization and generator are the same in both branches:
\begin{equation}
 \mathsf K_{T,\omega}=\mathcal T\exp\!\int_0^Tdt\,
 \left[-iH_{\log,\omega}(t)\otimes\id_F
 +\sum_{\mu>0}(P_\mu\otimes\id_E)\otimes\phi_\mu^\dagger(t)\right].
 \label{eq:app-choi-single-branch}
\end{equation}
It is a field-generating operator, not a unitary on the field. Expanding the
two branches shows that its reduced Gram kernel contains precisely the pure
forward/backward amplitudes in Eq.~\eqref{eq:app-choi-process-map}; it does
not insert cross-species amplitudes as a separate diagonal field block. Thus,
on this regular process domain, the trace-closed process Choi operator is
\begin{equation}
 \widehat\Upsilon_{T,\omega}^{\rm tr}
 =\Tr_{SE}\left[\mathsf K_{T,\omega}
 \bigl(\rho_{SE}\otimes|\widetilde\Omega\rangle\langle\widetilde\Omega|\bigr)
 \mathsf K_{T,\omega}^\dagger\right].
 \label{eq:app-choi-positive}
\end{equation}
To expose positivity, write $\rho_{SE}=\sum_i\lambda_i|i\rangle\langle i|$
and use an orthonormal basis $\{|j\rangle\}$ of the final $SE$ space:
\begin{equation}
 |u_{ji,\omega}\rangle=(\langle j|\otimes\id_F)\,
 \mathsf K_{T,\omega}(|i\rangle\otimes|\widetilde\Omega\rangle).
 \label{eq:app-choi-vectors}
\end{equation}
\begin{equation}
 \widehat\Upsilon_{T,\omega}^{\rm tr}
 =\sum_{i,j}\lambda_i|u_{ji,\omega}\rangle\langle u_{ji,\omega}|\succeq0.
 \label{eq:app-choi-positive-sum}
\end{equation}
The vectors in this positive decomposition need not be orthogonal; it is
not asserted to be a spectral decomposition. For every allowed Fock test
vector, the trajectory average therefore satisfies
\begin{equation}
 \langle\varphi|\overline{\widehat\Upsilon}_T^{\rm tr}|\varphi\rangle
 =\mathbb E_\omega\sum_{i,j}\lambda_i
 |\langle\varphi|u_{ji,\omega}\rangle|^2\ge0.
 \label{eq:app-choi-average-positive}
\end{equation}
To verify the identification and recovery explicitly, write $V_\omega(t,s)$
for the free joint propagator and, for a forward history, define
\begin{equation}
 A_{\boldsymbol\mu,\omega}(\boldsymbol t)
 =V_\omega(T,t_n)(P_{\mu_n}\otimes\id_E)
 V_\omega(t_n,t_{n-1})\cdots
 (P_{\mu_1}\otimes\id_E)V_\omega(t_1,0).
 \label{eq:app-choi-branch-word}
\end{equation}
Its empty history is $V_\omega(T,0)$. The $(n,n')$ block of the Gram
operator is exactly
\begin{equation}
 \Tr_{SE}\!\left[
 A_{\boldsymbol\mu,\omega}(\boldsymbol t)\rho_{SE}
 A_{\boldsymbol\mu',\omega}(\boldsymbol t')^\dagger\right],
 \label{eq:app-choi-gram-kernel}
\end{equation}
which is the coefficient obtained by interleaving the corresponding pure
left and pure right insertions. This proves
$\widehat\Upsilon_{T,\omega}^{\rm tr}=\mathsf P|\Upsilon_{T,\omega}\rangle$.
For a general label history $\nu_j=(\mu_j,\mu'_j)$, the forward word
contains every nonzero $\mu_j$ and the backward word every nonzero
$\mu'_j$, each at its original time. The recovery map $\mathsf R$
reconstructs these coefficients from the process Choi operator as in
Appendix~H, Eq.~(H8), of Ref.~\cite{fuksa2026dilation}. For the regular
Gram kernels generated here, strong continuity of the joint propagator
permits the equal-time restriction of Eq.~(H9) in that reference: the
left and right insertion at a shared time compose to
$X\mapsto P_{\mu_j}XP_{\mu'_j}^\dagger$. This establishes
$\mathsf R\mathsf P|\Upsilon_{T,\omega}\rangle
=|\Upsilon_{T,\omega}\rangle$ on this regular domain. For a general
physical cPT, recovery uses the operator-domain definition of
$\mathsf R$, rather than pointwise evaluation of an arbitrary $L^2$
kernel. In neither case is a separate delta-function block added to
the process Choi operator.
If $p=d^2-1$ and $L_P=\max_{\mu>0}\|P_\mu\|_\infty$, unitarity and
normalization give
\begin{equation}
 \Tr_F\widehat\Upsilon_{T,\omega}^{\rm tr}
 \le\sum_{n\ge0}\frac{(pL_P^2T)^n}{n!}
 =e^{pL_P^2T}.
 \label{eq:app-choi-trace-bound}
\end{equation}
Thus the Gram construction is trace class with a path- and cutoff-independent
bound, in the fixed basis convention used here.
No product-state assumption is needed for this Gram factorization. Retaining
the final system and tracing only the environment gives the corresponding
open-output Choi object; tracing the final system as well gives the
trace-closed tester. For an infinite bosonic bath, first apply the construction
to physical regularized processes. The cutoff limit in Theorem~\ref{thm:fock-convergence}
is also a physical process: for every finite number of slots, smearing the
coefficient kernels against nonnegative normalized $L^2$ test functions is a
continuous linear functional of the process vector. Identity slots reduce to
lower-order kernels by integration over the finite time interval. Hence the
finite-slot positivity conditions, the affine causal constraints, and the
normalized vacuum condition pass to the Fock-norm limit, since the finite
positive cones and causal affine subspaces are closed. The bounded process
map $\mathsf P$ carries Fock-norm convergence to Hilbert--Schmidt
convergence of the Choi operators. Since the Hilbert--Schmidt norm
bounds the operator norm, every nonnegative quadratic form of the
regularized Choi operators remains nonnegative in the limit. Fatou's
lemma applied to their diagonal entries in any orthonormal basis,
together with Eq.~\eqref{eq:app-choi-trace-bound}, also makes the
limiting positive operator trace class. Thus positivity passes through
convergence of the process Choi images, without assuming point evaluation
or continuity of recovery on arbitrary kernels.

The positivity proof uses the process map $\mathsf P$ and its single-branch
Gram representation. A transfer expression for an instrument belongs to
$\mathsf V$ and must be paired with $\mathsf P$ through the kernel pairing;
it is not a second local definition of the process Choi operator. Finite
truncations of a Fourier, Hermite, or bath-history representation likewise
require their own positivity checks.

\subsection{Causality and finite-comb marginals}
Let $\ell[X]=\Tr_{SE}$ and let $Q_\omega(t)$ be the trace-preserving no-insertion generator. Then
\begin{equation}
 \ell Q_\omega(t)=0.
 \label{eq:app-trace-identities}
\end{equation}
The left/right Pauli maps used in the amplitude and instrument calculations are
\begin{equation}
 \mathbb P_{\mu\mu'}[X]=(P_\mu\otimes\id_E)X
 (P_{\mu'}^\dagger\otimes\id_E).
 \label{eq:app-physical-insertion}
\end{equation}
They must be distinguished from the rank-one Liouville maps used in the
causality basis of Ref.~\cite{wassnerCPT}. With
$\langle\!\langle P_\alpha|P_{\alpha'}\rangle\!\rangle
=d_S\delta_{\alpha\alpha'}$, define
\begin{equation}
 \widetilde R_{\alpha\alpha'}[X]
 =\frac{1}{d_S}(P_\alpha\otimes\id_E)
 \Tr_S\!\left[(P_{\alpha'}^\dagger\otimes\id_E)X\right].
 \label{eq:app-rank-one-insertion}
\end{equation}
This is the action of
$d_S^{-1}|P_\alpha\rangle\rangle\langle\!\langle P_{\alpha'}|$
on the system Liouville factor. Consequently,
\begin{equation}
 (\ell\widetilde R_{\alpha\alpha'})(X)
 =\frac{\Tr P_\alpha}{d_S}
 \Tr_E\!\left[\Tr_S\!\left[(P_{\alpha'}^\dagger\otimes\id_E)X\right]\right]=0
 \quad(P_\alpha\ne\id_S).
 \label{eq:app-rank-one-trace}
\end{equation}
By contrast, $\ell\mathbb P_{\alpha\alpha}=\ell$ for an unscaled Pauli
operator. The zero-trace relation therefore applies to $\widetilde R$, not to
the physical left/right map $\mathbb P$.

The cPT causality condition now follows the construction of Ref.~\cite{wassnerCPT}.
The basis change from $\mathbb P_{\mu\mu'}$ to $\widetilde R_{\alpha\alpha'}$
induces the corresponding field transformation
$\xi_{\mu\mu'}\mapsto\widetilde\xi_{\alpha\alpha'}$. Let
$\mathsf P_{(t,T)}$ project onto the subspace with no field particles after
$t$. Trace preservation of the process-canonical no-insertion evolution,
together with Eq.~\eqref{eq:app-rank-one-trace}, gives
\begin{equation}
 \langle\Omega|\Upsilon_{T,\omega}\rangle=1,
 \qquad
 \mathsf P_{(t,T)}\widetilde\xi_{\alpha\alpha'}(t)
 |\Upsilon_{T,\omega}\rangle=0
 \quad(P_\alpha\ne\id_S).
 \label{eq:app-fock-causality}
\end{equation}
This is the representation-independent continuum causality condition used in
the reference construction. The frame protocol changes $Q_\omega$ and the
coefficient functions, but not the trace-preserving argument. Linearity then
preserves the condition under trajectory averaging.

For a finite set of operation slots, let $W_{\boldsymbol t}^{\rm op}$ denote
the Choi operator obtained from the same joint dilation while retaining the
final system output. As in Sec.~\ref{sec:background}, $i_j$ is the system
returned by the preceding intervention and $o_j$ is the terminal output of
the $j$th open prefix. Trace preservation of the ensuing joint propagation
gives
\begin{equation}
 \Tr_{o_j}W_{j:0}^{\rm op}
 =\id_{i_j}\otimes W_{j-1:0}^{\rm op},
 \label{eq:app-finite-comb-recursion}
\end{equation}
Tracing $o_j$ leaves the shorter open comb with terminal output $o_{j-1}$.
Closing the terminal output $S_f$ of the full comb with a deterministic trace
effect produces the tester
\begin{equation}
 W_{\boldsymbol t}^{\rm tr}=\Tr_{S_f}W_{\boldsymbol t}^{\rm op}.
 \label{eq:app-closed-tester}
\end{equation}
For any sequence of deterministic CPTP slot maps $\mathcal A_j$, it obeys
\begin{equation}
 \Tr\!\left[W_{\boldsymbol t}^{\rm tr}
 \left(\bigotimes_{j=1}^k J(\mathcal A_j)\right)^{\mathsf T}\right]=1.
 \label{eq:app-tester-normalization}
\end{equation}
The same normalization holds for a complete adaptive instrument after summing
all its outcomes. This is a positive, trace-closed tester, not an open-output
comb. Its causal normalization is inherited from
Eq.~\eqref{eq:app-finite-comb-recursion}.
Finite-duration covariant instrument contraction commutes with finite-marginal
recovery because it acts on the full time interval, including all intermediate
system maps.

To make this recovery explicit, choose distinct fixed times $t_1<\cdots<t_k$ and extend the operation labels by $\xi_{00}(t)=\id$ while $\xi_{\mu\mu'}(t)$ for $(\mu,\mu')\ne(0,0)$ extracts the corresponding nonvacuum field coefficient. Then
\begin{equation}
 C_{\boldsymbol\mu,\boldsymbol\mu'}^{\rm tr}(\boldsymbol t)
 =\langle\Omega|\xi_{\mu_k\mu'_k}(t_k)\cdots
 \xi_{\mu_1\mu'_1}(t_1)|\Upsilon_T^{\rm tr}\rangle.
 \label{eq:app-finite-coefficients}
\end{equation}
The double-zero label joins neighboring propagators by the composition law and adds no field particle. Thus the highest $k$-particle sector alone does not reconstruct the finite comb: all choices of identity slots contribute. In the column-vectorization convention $|P_\mu\rangle\rangle$, where $\langle\!\langle P_\mu|P_{\mu'}\rangle\!\rangle=d\delta_{\mu\mu'}$, the closed finite-slot Choi operator is
\begin{equation}
 W_{\boldsymbol t}^{\rm tr}
 =d^{-2k}\sum_{\boldsymbol\mu,\boldsymbol\mu'}
 C_{\boldsymbol\mu,\boldsymbol\mu'}^{\rm tr}(\boldsymbol t)
 \bigotimes_{j=1}^k|P_{\mu_j}^*\rangle\!\rangle
 \langle\!\langle P_{\mu'_j}^*|.
 \label{eq:app-finite-choi}
\end{equation}
Indeed $J(\mathcal P_{\mu\mu'})=|P_\mu\rangle\!\rangle\langle\!\langle P_{\mu'}|$, so contraction of Eq.~\eqref{eq:app-finite-choi} with the tensor of $J(\mathcal P_{\mu\mu'})^{\mathsf T}$ returns Eq.~\eqref{eq:app-finite-coefficients}; the factor $d^{-2k}$ follows from the $k$ Hilbert--Schmidt basis inner products. The final system has already been traced. An open-output comb retains that factor separately.

The coefficients have a positive Gram representation. Purify the initial
$SE$ state and, for each forward operation string $\boldsymbol\mu$, propagate
its ket branch through the complete joint dynamics to obtain
$|v_{\boldsymbol\mu}\rangle$. Then
$C_{\boldsymbol\mu,\boldsymbol\mu'}^{\rm tr}
=\langle v_{\boldsymbol\mu'}|v_{\boldsymbol\mu}\rangle$. Inserting this Gram
matrix into Eq.~\eqref{eq:app-finite-choi} proves positivity of the closed
tester, while the open-output Gram operator proves positivity of
$W_{\boldsymbol t}^{\rm op}$. Removing an intermediate time means linking its
slot with the identity channel, not taking an arbitrary partial trace of its
Choi legs. Because field extraction, basis conversion, and identity linking
are linear, recovery also commutes with averaging complete control
trajectories. Point-time extraction assumes the physically generated kernels
admit values at the chosen distinct times; an arbitrary abstract $L^2$ Fock
vector has no automatic point value. Finite-duration instruments are instead
contracted over their full time supports.

\subsection{Independent-interval classicalization and its limits}
\label{app:classicalization}
Assume the initial state factorizes as
$\rho_{SE}(0)=\rho_S\otimes\rho_E$ and is independent of all sampled frames.
Let $W_j$ be the full $SE$ unitary on prescribed interval $j$, independent of
the Pauli frame subsequently drawn for that interval. Expand it in the
complete system Pauli basis as
\begin{equation}
 W_j=\sum_aP_a\otimes B_{j,a},\qquad
 B_{j,a}=d^{-1}\Tr_S[(P_a^\dagger\otimes\id_E)W_j].
 \label{eq:app-interval-pauli-expansion}
\end{equation}
The $B_{j,a}$ are \emph{environmental expansion operators} for a whole finite interval, not the single bath coupling operator $B$. The protocol assumes that one independent Pauli frame $G_j$ conjugates the entire interval unitary and that its boundary operation slots are compiled covariantly. Thus $G_j^\dagger P_aG_j=\chi_a(G_j)P_a$. Expanding the complete forward and backward process histories before averaging gives
\begin{equation}
 \E\left[\prod_j\chi_{a_j}(G_j)\chi_{b_j}(G_j)\right]
 =\prod_j\delta_{a_jb_j}.
 \label{eq:app-pauli-proof}
\end{equation}
The surviving history $\boldsymbol a=(a_1,\ldots,a_K)$ has weight
\begin{equation}
 p(\boldsymbol a)=\Tr_E[B_{K,a_K}\cdots B_{1,a_1}\rho_E
 B_{1,a_1}^\dagger\cdots B_{K,a_K}^\dagger]\ge0.
 \label{eq:app-classical-history-prob}
\end{equation}
Unitarity of $W_j$ and Pauli orthogonality give $\sum_aB_{j,a}^\dagger B_{j,a}=\id_E$. Summing Eq.~\eqref{eq:app-classical-history-prob} from its latest label backward therefore yields $\sum_{\boldsymbol a}p(\boldsymbol a)=1$. For arbitrary operations $\mathcal A_j$ at the specified boundaries, the averaged process has the same action as
\begin{equation}
 \sum_{\boldsymbol a}p(\boldsymbol a)
 \mathcal P_{a_K}^{\rm err}\circ\mathcal A_{K-1}
 \circ\mathcal P_{a_{K-1}}^{\rm err}\circ\cdots
 \circ\mathcal A_1\circ\mathcal P_{a_1}^{\rm err}[\rho_S],
 \quad \mathcal P_a^{\rm err}[X]=P_aXP_a^\dagger.
 \label{eq:app-classical-history-action}
\end{equation}
This is process-level equality on the stated slots, rather than equality of one output state. The associated interval-pair Choi factors $|P_a\rangle\!\rangle\langle\!\langle P_a|/d$ are mutually orthogonal, so the inter-interval error history is diagonal. The weights generally fail to factorize because Eq.~\eqref{eq:app-classical-history-prob} propagates one environment through all intervals. The individual input--output pair can still transmit quantum information; ``classical history'' concerns correlations among the interval error labels.

For a correlated initial state, the character identity still eliminates
unequal forward--backward labels. The initial boundary of a surviving history
is instead the positive system operator
$\tau_{\boldsymbol a}=\Tr_E[(\id_S\otimes B_{\boldsymbol a})\rho_{SE}(0)
(\id_S\otimes B_{\boldsymbol a}^\dagger)]$, where
$B_{\boldsymbol a}=B_{K,a_K}\cdots B_{1,a_1}$.
In general $\tau_{\boldsymbol a}$ is not proportional to a common $\rho_S$,
so the scalar history mixture in
Eq.~\eqref{eq:app-classical-history-action} requires the stated product
initial state.

For a full-system Haar twirl on one paired Choi factor, $U\otimes\overline U$ splits into a one-dimensional identity sector $\Pi_0$ and an irreducible, multiplicity-free adjoint sector $\Pi_{\rm ad}=\id-\Pi_0$. Schur orthogonality gives the local conditional expectation
\begin{equation}
 \mathsf T_{\rm H}[X]=\Tr(\Pi_0X)\Pi_0
 +\frac{\Tr(\Pi_{\rm ad}X)}{d^2-1}\Pi_{\rm ad}.
 \label{eq:app-haar-interval-twirl}
\end{equation}
When $X$ also has other interval and environmental indices, the two traces here are partial traces and leave operators on those other factors; they are not prematurely independent scalar probabilities. Applying independent copies of Eq.~\eqref{eq:app-haar-interval-twirl} to every interval, then expanding $\Pi_{\rm ad}$ in a fixed orthonormal operator basis, again gives a diagonal interval-error history. For a different group representation $\bigoplus_\lambda V_\lambda\otimes M_\lambda$, a Haar twirl acts trivially on each multiplicity space $M_\lambda$. Coherence in those spaces can remain, so the conclusion is not valid for every group twirl or every pairing of temporal legs.

If two operation cuts lie inside one holding interval, the same $G_j$ appears on both pieces, and two nontrivial characters can multiply to one instead of vanishing. For Poisson refreshes Eq.~\eqref{eq:app-pauli-proof} holds conditional on a realized schedule at event-aligned slots; at two predetermined times separated by $\Delta$, no refresh occurs with probability $e^{-\lambda_r\Delta}$, leaving that nontrivial character correlation. Different event schedules first require a common fixed-slot cPT or an explicit classical schedule register before their finite combs can be averaged. These correlations demonstrate why the independent-interval proof does not apply at arbitrary continuous cuts; they alone do not prove that a particular resulting process has quantum temporal entanglement.

The premise concerns each \emph{whole} interval. Independent endpoint samples do not ensure that a continuously driven interval propagator is conjugated by one sampled frame. For a concrete comparison, take $H_{SE}=g\sigma_z^S\otimes\sigma_z^E$ and $\rho_E=\id_E/2$. Twirling the fixed-duration unitary as one Haar-conjugated block gives the depolarizing Bloch multiplier $[1+2\cos(2g\Delta)]/3$. Driving the coupling continuously with a drift-free isotropic Brownian frame instead gives the uncoupled channel as $\kappa\to\infty$ at fixed $g,\Delta$. The bounded-interaction estimate of Appendix~\ref{app:bounds} vanishes because the control mixing time scales as $1/\kappa$. These limits differ whenever the block multiplier is not one. OU paths also retain neighboring-frame correlations at finite strength. Refining either path's numerical mesh does not create independent Haar frames.

\subsection{Finite-width implementation of an ideal refresh}
\label{app:pulse-width}
The piecewise-constant protocol uses an ideal instantaneous frame update at each boundary. To compare it with a physical pulse, move to the pulse-control frame and let $H_{\rm err}(t)$ be the Hamiltonian term omitted by the ideal-jump description, including any untracked interaction and instrument mismatch. If $\|H_{\rm err}(t)\|\le g_{\rm p}$ uniformly over the allowed experiment class on the union $P$ of pulse supports, Duhamel's formula for the joint unitary, followed by trace-norm contraction of all later channels, gives the conservative complete-experiment estimate
\begin{equation}
 \|p_{\rm pulse}-p_{\rm jump}\|_{\rm TV}
 \le 2\int_Pdt\,\|H_{\rm err}(t)\|
 \le 2g_{\rm p}|P|.
 \label{eq:app-pulse-error}
\end{equation}
The factor two is a convenient upper bound for the state trace-norm difference and is not an optimized probability constant. The norm $g_{\rm p}$ concerns the \emph{difference after the intended pulse has been factored out}; it should not be identified with the interaction bound $g_*$ in Eq.~\eqref{eq:static-record-bound}. If the only mismatch is between the pulse-frame and ideal-jump conjugates of $H_{SE}$, then $g_{\rm p}\le2g_*$ follows by the triangle inequality, while additional instrument or control errors need separate bounds. A short pulse relative to a well-defined bath correlation time $\tau_E$ helps preserve the intended holding-interval picture, but does not by itself make $g_{\rm p}|P|$ small. Its instrument action must remain pointwise covariant, or its mismatch must be included in $H_{\rm err}$. For an unbounded bosonic field Eq.~\eqref{eq:app-pulse-error} needs a cutoff or energy-constrained replacement.

If $N_{\rm p}$ nonoverlapping fixed-refresh pulses each have width at most $\delta$, Eq.~\eqref{eq:app-pulse-error} gives $2g_{\rm p}N_{\rm p}\delta$. Adding this to the ideal-jump bound yields Eq.~\eqref{eq:fixed-finite-pulse-bound}. For Poisson rate $\lambda_r$, $m\sim\operatorname{Pois}(\lambda_rT)$ and at most $m+1$ pulse supports give the averaged estimate $2g_{\rm p}\delta(\lambda_rT+1)$ on schedules whose pulses can be implemented without overlap. The $+1$ covers an initial frame-setting pulse. For a schedule outside that set, its output distribution can differ by at most one in total variation; multiplying by its probability $p_{\rm ov}$ gives Eq.~\eqref{eq:poisson-finite-pulse-bound}.
For the equally spaced case in Eq.~\eqref{eq:fixed-pulse-envelope}, direct differentiation gives $F'(h)=\frac32g_*^2T-2g_{\rm p}T\delta/h^2$ and $F''(h)=4g_{\rm p}T\delta/h^3>0$. This proves the stated unique interior minimizer of the envelope when it lies in the feasible range. If $K=T/h$ must be integral, the smaller of the two feasible integer values nearest $T/h^*$ minimizes this convex envelope over the grid. The $\min\{1,F(h)\}$ cap may make several settings equally uninformative at value one; Eq.~\eqref{eq:fixed-pulse-optimum} concerns the uncapped informative branch. It also presumes $g_{\rm p}$ does not itself increase with refresh rate or pulse shortening.

The Poisson factorial moment formula bounds the expected number of event pairs closer than $\delta$ by
\begin{equation}
 \lambda_r^2\int_0^Tdt\int_t^{\min(t+\delta,T)}ds
 \le\lambda_r^2T\delta.
 \label{eq:app-poisson-overlap}
\end{equation}
Hence a union bound gives $p_{\rm ov}\le\lambda_r^2T\delta+2\lambda_r\delta$ if overlap with either endpoint pulse is also excluded. A simultaneous $\lambda_r\to\infty$ limit therefore needs control of this probability as well as $g_{\rm p}\lambda_rT\delta$. Refusing or delaying close Poisson events changes the schedule law. A terminal inverse-frame pulse belongs to an open-output or fixed-detector implementation and must be counted separately in $|P|$; the closed trace boundary has no such pulse. These are sufficient bounds uniform over the admissible schedules. Pulse designs with explicit cancellation may have smaller errors, but a diverging residual norm $g_{\rm p}$ cannot be hidden by taking $\delta\to0$ alone.

%% file: appendix_bounds.tex
\section{Fast-refresh limits and the complete-experiment bound}
\label{app:bounds}

\subsection{Fixed-sector limit for a bounded Gaussian correlation}
The first statement concerns the single-axis Gaussian model in Appendices~\ref{app:bath} and \ref{app:controlalg}. Assume $|C(t,u)|\le C_{\max}$ on $[0,T]^2$, hold $g,T$ and a particle sector fixed, and let $\Delta_k$ be the independent-sign holding lengths. Write $h_*=\max_k\Delta_k$ and $c_*=4g^2C_{\max}$. For $\lambda_s,\lambda_r\in\{\pm1\}$, the exponent of Eq.~\eqref{eq:app-discrete-influence} can be grouped as
\begin{equation}
 X_{\boldsymbol y}=\sum_kD_{kk}+\sum_{k>l}D_{kl}y_ky_l,
 \qquad \mathcal F_B[y\lambda_s,y\lambda_r]=e^{-X_{\boldsymbol y}}.
 \label{eq:app-X-signs}
\end{equation}
The diagonal triangle and cross-cell rectangle give $|D_{kk}|\le c_*\Delta_k^2/2$ and $|D_{kl}|\le c_*\Delta_k\Delta_l$. Independent centered signs make different unordered products $y_ky_l$ orthogonal. Consequently
\begin{align}
 \E_{\boldsymbol y}|X_{\boldsymbol y}|^2
 &=\left|\sum_kD_{kk}\right|^2+\sum_{k>l}|D_{kl}|^2\\
 &\le c_*^2\left(\sum_k\Delta_k^2\right)^2
 \le c_*^2T^2h_*^2.
 \label{eq:app-X-second-moment}
\end{align}
For a physical Gaussian influence functional $\operatorname{Re}X_{\boldsymbol y}\ge0$, because its real part is the nonnegative quadratic form of the forward-minus-backward source in the symmetrized bath correlation. The integral identity $1-e^{-X}=X\int_0^1e^{-aX}da$ therefore implies $|1-e^{-X}|\le|X|$. Cauchy--Schwarz and Eq.~\eqref{eq:app-X-second-moment} give the uniform history bound
\begin{equation}
 |\E_{\boldsymbol y}\mathcal F_B[y\lambda_s,y\lambda_r]-1|
 \le c_*T h_*.
 \label{eq:app-sector-bound}
\end{equation}
For any fixed number and times of system insertions, the finite system history sum can be taken through this limit. The resulting coefficient is the uncoupled ideal coefficient of Eq.~\eqref{eq:app-general-amplitude}. This is an all-$g$ statement under the bounded-correlation condition, while the constants depend on the fixed sector.

For Poisson refreshes, condition on the event schedule and apply Eq.~\eqref{eq:app-sector-bound} with its random largest gap. For every $\varepsilon>0$, cover $[0,T]$ by cells of length at most $\varepsilon/2$. If a gap exceeds $\varepsilon$, one such cell has no event. A union bound yields
\begin{equation}
 \Pr(h_*>\varepsilon)\le
 \left(\left\lceil\frac{2T}{\varepsilon}\right\rceil+1\right)
 e^{-\lambda_r\varepsilon/2}.
 \label{eq:app-max-gap}
\end{equation}
Since $0\le h_*\le T$, this proves $\E h_*\to0$ as $\lambda_r\to\infty$ and hence
\begin{equation}
 |\overline{\mathcal F}_B^{\rm Pois}[s,r]-1|
 \le c_*T\E h_*\longrightarrow0.
 \label{eq:app-poisson-sector}
\end{equation}
The uncut Drude spectrum has divergent $C(0)$ and lies outside this bounded-correlation premise. Appendix~\ref{app:fixed-sector-general} replaces it with a finite-time $L^1$ covariance condition, proves fixed-sector convergence and the full-process Fock-norm limit of Theorem~\ref{thm:fock-convergence}, and obtains a sharper fixed-sector rate from its logarithmic short-time envelope. The full-process result controls the particle sum, but extending it to an infinite-field-space CMI still requires an entropy-continuity estimate.

\subsection{Nonautonomous complete-record theorem}
Theorem~\ref{thm:complete-experiment} allows prescribed $H_E(t)$ and $H_{SE}(t)$ and an autonomous Markov control state $Z_t$ with time-dependent law $\pi_t$. The control is initially independent of the joint state, and its transition law is not altered by feedback from the system, instruments, or previous records; it may be statistically correlated with the records it generates. Its laws obey $\pi_sP_{s,t}=\pi_t$. The environment Hamiltonian generates a common time-ordered propagator; its interaction-picture transformation preserves the norm of a bounded coupling. The logical instruments are pointwise covariant, and their complete record-preserving channels are independent of $z$. We assume a nonnegative $g\in L^1[0,T]$ such that $\|H_{{\rm int},z}(t)\|_\infty\le g(t)$ and
\begin{equation}
 \int\pi_t(dz)H_{{\rm int},z}(t)=0
 \label{eq:app-zero-mean}
\end{equation}
for almost every $t$. The system-identity part of $H_{SE}(t)$ is included in $H_E(t)$. The ideal experiment uses the same environment propagation, logical instruments and initial state, with the centered interaction removed; there is no environment reset. These operator-norm assumptions do not follow from an integrable covariance of an unbounded bosonic field.

Let $P_{s,t}(z,dz')$ be the control transition kernel. Define $\|\mu-\nu\|_{\rm TV}=\frac12\int|d\mu-d\nu|$ and the average mixing envelope
\begin{equation}
 b(s,t)=\int\pi_s(dz)\|P_{s,t}(z,\cdot)-\pi_t\|_{\rm TV},
 \qquad 0\le b(s,t)\le1.
 \label{eq:app-mixing}
\end{equation}
For a stationary time-homogeneous process write $b(u)$ and, when finite, $t_{\rm mix}=\int_0^\infty b(u)du$. This is a property of the full control state; for OU it is the pair $(U,\boldsymbol\xi)$, not the frame alone.

Keep the physical system, bath, instrument auxiliary system, and an orthogonal register for all outcome histories in the conditional state $\rho_t(z)$. Complete instruments act as trace-preserving maps on this enlarged state and are trace-norm contractive on Hermitian inputs. Put
\begin{equation}
 \overline\rho_t=\int\pi_t(dz)\rho_t(z),\qquad
 q_t(z)=\rho_t(z)-\overline\rho_t,\qquad
 \|q_t\|_{\pi_t,1}=\int\pi_t(dz)\|q_t(z)\|_1.
 \label{eq:app-q-control}
\end{equation}
Independent preparation gives $q_0=0$. The probability kernel induces a forward operator-valued transfer $\mathcal K_{t,s}$ and a projection $\mathsf P_{t\leftarrow s}$ onto the control-independent component. Their consistency follows from $\pi_sP_{s,t}=\pi_t$. Writing $\mathcal V_z(t)[X]=-i[H_{{\rm int},z}(t),X]$ gives $\|\mathcal V_z(t)[X]\|_1\le2g(t)\|X\|_1$.

Here is the operator-valued kernel estimate needed below. Let $\mathcal X$ be the trace-class operators on the joint system, bath, instrument memory, and classical record. For $f\in L^\infty(\pi_s;\mathcal X)$, define
\begin{equation}
 \|f\|_{\infty,1}=\operatorname*{ess\,sup}_{z\sim\pi_s}\|f(z)\|_1,
 \qquad \|f\|_{\pi_s,1}=\int\pi_s(dz)\|f(z)\|_1.
 \label{eq:app-kernel-spaces}
\end{equation}
Let $\Gamma_{s,t}(dz,dz')=\pi_s(dz)P_{s,t}(z,dz')$ be the joint control law. Its second marginal is $\pi_t$. For $f\in L^\infty(\pi_s;\mathcal X)$, the trace-class-valued measure
\begin{equation}
 \mu_f(A)=\int\pi_s(dz)P_{s,t}(z,A)f(z)
 \label{eq:app-kernel-vector-measure}
\end{equation}
obeys $|\mu_f|(A)\le\|f\|_{\infty,1}\pi_t(A)$, so it vanishes whenever $\pi_t(A)=0$. On the standard Borel control spaces and separable trace class used here, its density is the Bochner conditional expectation $\E_{\Gamma_{s,t}}[f(Z_s)\mid Z_t=z']$, defined for $\pi_t$-almost every $z'$. This construction requires no absolute continuity of the individual transition laws $P_{s,t}(z,\cdot)$. The forward transfer and constant projection from $L^\infty(\pi_s;\mathcal X)$ into $L^1(\pi_t;\mathcal X)$ are therefore
\begin{align}
 (\mathcal K_{t,s}f)(z')&=\E_{\Gamma_{s,t}}[f(Z_s)\mid Z_t=z'],\\
 (\mathsf P_{t\leftarrow s}f)(z')&=\int\pi_s(dz)f(z).
 \label{eq:app-kernel-definition}
\end{align}
The vector measure of $(\mathcal K_{t,s}-\mathsf P_{t\leftarrow s})f$ is $\int\pi_s(dz)[P_{s,t}(z,\cdot)-\pi_t]f(z)$. The triangle inequality for its trace-class variation gives
\begin{align}
 \|(\mathcal K_{t,s}-\mathsf P_{t\leftarrow s})f\|_{\pi_t,1}
 &\le\int\pi_s(dz)\|f(z)\|_1
       \int|P_{s,t}(z,dz')-\pi_t(dz')|\\
 &\le 2b(s,t)\|f\|_{\infty,1}.
 \label{eq:app-operator-kernel-lemma}
\end{align}
This is an $L^\infty(\pi_s;\mathcal X)\to L^1(\pi_t;\mathcal X)$ bound. An unrestricted $L^1\to L^1$ estimate would require the stronger envelope $\operatorname*{ess\,sup}_z\|P_{s,t}(z,\cdot)-\pi_t\|_{\rm TV}$. A common trace-preserving logical propagator, including a complete record-preserving instrument and prescribed $H_E(t)$, acts independently of $z$ and is trace-norm contractive. It intertwines with both transfers, so Eq.~\eqref{eq:app-operator-kernel-lemma} also applies when such a map is inserted before or after the control transfer.

The control is autonomous and retains marginal $\pi_s$ even after the instrument record is included. Consequently each conditional $\rho_s(z)$ is a positive, trace-one joint-record state for $\pi_s$-almost every $z$, and the exact source $f_s(z)=\mathcal V_z(s)[\rho_s(z)]$ satisfies $\|f_s\|_{\infty,1}\le2g(s)$. Duhamel's formula for the true conditional evolution, projected onto its control-dependent component, is
\begin{equation}
 q_t=\int_0^t ds\,\Phi_0(t,s)
       (\mathcal K_{t,s}-\mathsf P_{t\leftarrow s})f_s,
 \label{eq:app-q-duhamel}
\end{equation}
with the common logical propagator $\Phi_0$ understood to act on the trace-class fibre. Applying Eq.~\eqref{eq:app-operator-kernel-lemma} gives
\begin{equation}
 \|q_t\|_{\pi_t,1}\le4\int_0^t ds\,g(s)b(s,t).
 \label{eq:app-q-bound}
\end{equation}
This is not a second-order Dyson truncation; the conditional state inside the source is the exact interacting state and includes all past control and record correlations.

By Eq.~\eqref{eq:app-zero-mean}, $\int\pi_t(dz)\mathcal V_z(t)[\overline\rho_t]=0$. The exact averaged-state difference from the ideal experiment is therefore
\begin{equation}
 \overline\rho_T-\rho_{{\rm id},T}
 =\int_0^Tdt\,\Phi_0(T,t)
 \left[\int\pi_t(dz)\mathcal V_z(t)[q_t(z)]\right],
 \label{eq:app-mean-difference}
\end{equation}
where $\Phi_0$ propagates the uncoupled experiment, including its instruments. Combining the commutator bound with Eq.~\eqref{eq:app-q-bound} gives
\begin{equation}
 \|\overline\rho_T-\rho_{{\rm id},T}\|_1
 \le8\int_0^Tdt\int_0^t ds\,g(t)g(s)b(s,t).
 \label{eq:app-trace-bound}
\end{equation}
The final bath trace and POVM are channels, so their total-variation distance is at most half this trace norm. This proves Eq.~\eqref{eq:general-bound}, uniformly over the stated experiment class and its number of finite instruments.

\subsection{Uniform envelope and protocol specializations}
If $g(t)\le g_*$, Eq.~\eqref{eq:general-bound} immediately gives Corollary~\ref{cor:static-record}. A stationary control has $\pi_t=\pi$ and $b(s,t)=b(t-s)$, so its unweighted time integral equals $\int_0^T(T-u)b(u)du$ and is at most $Tt_{\rm mix}$ when $t_{\rm mix}<\infty$. Scheduled fixed refresh and Cayley use the same corollary with their time-dependent control laws.

For independent four-Pauli redraws, $b(s,t)=3/4$ within a holding interval and zero after a redraw. Thus the triangle on $I_j$ contributes $(3/8)a_j^2$ to the double integral, giving the fixed line of Eq.~\eqref{eq:weighted-protocol-bounds}. For Brownian motion, $b(s,t)=b_1(\kappa(t-s))$ and $C_B=\int_0^\infty b_1(u)du$. Young's inequality on the half-line bounds the weighted triangle by $C_B\|g\|_2^2/\kappa$. The same argument gives $t_{\rm mix}^{\rm OU}\|g\|_2^2$ for the joint frame--velocity envelope. The Cayley calculation uses the nonstationary cell law and the same-cell and neighboring-cell mixing distances of Eq.~\eqref{eq:cayley-two-frame-mixing}; its weighted integral is displayed in Appendix~\ref{app:cayley}. Poisson refresh is treated in Appendix~\ref{app:poisson}.

For a finite control set of $M$ equiprobable frames, let $\pi$ be the uniform marginal law. The distance of a fixed frame from $\pi$ is
\begin{equation}
 \|\delta_G-\pi\|_{\rm TV}
 =\frac12\left[(1-M^{-1})+(M-1)M^{-1}\right]
 =1-M^{-1}.
 \label{eq:app-frame-tv}
\end{equation}
For fixed independent redraws, the transition kernel is $\delta_G$ if $s,t$ lie in the same holding interval, and $\pi$ once at least one refresh boundary has been crossed. Hence $b(s,t)=1-M^{-1}$ within the same interval and zero otherwise, and
\begin{equation}
 \int_0^Tdt\int_0^t ds\,b(s,t)
 =\tfrac12(1-M^{-1})\sum_j\Delta_j^2
 \le\tfrac12(1-M^{-1})Th_*.
 \label{eq:app-fixed-triangle}
\end{equation}
For the four Pauli frames, $M=4$, substitution in Eq.~\eqref{eq:static-record-bound} gives $\epsilon^{\rm fixed}\le(3/2)g_*^2Th_*$, as in Eq.~\eqref{eq:protocol-bounds}. The frames are independently redrawn, even if a new draw equals the previous frame.

The order $g_*^2Th_*$ cannot be improved uniformly under these assumptions. To see this, take one system qubit, no environment, and $H_{\rm int}=g\sigma_z$ before compilation. On each of $K$ equal intervals of length $h=T/K$, a uniform Pauli frame gives an independent sign $y_j=\chi_z(G_j)$. Prepare $|+\rangle$ in the logical frame and measure in the $\sigma_x$ basis at $T$. For a sign history, the total phase is $\theta=g h\sum_jy_j$, and the probability of the $|+\rangle$ outcome is $\cos^2\theta$. Independence gives
\begin{equation}
 \E_{\boldsymbol y}e^{2i\theta}=[\cos(2gh)]^K,
 \qquad p_+=\frac{1+[\cos(2gh)]^K}{2}.
 \label{eq:app-fixed-tight-example}
\end{equation}
The ideal outcome is certain, so its complete-record total variation is $1-p_+$. Expanding for $gh\ll1$ and $g^2Th\ll1$ gives Eq.~\eqref{eq:fixed-bound-example}. The theorem's factor $3/2$ is conservative in this example, while its $g^2Th$ dependence is attained. The prefactor is specific to the instrument and control model used here.

For a Poisson refresh process, condition on whether there was at least one event in a gap $u=t-s$. With probability $e^{-\lambda_ru}$ there was none and the frame is still $G$; otherwise the last redraw is independent and uniform. Thus
\begin{equation}
 P_{s,t}(G,\cdot)=e^{-\lambda_ru}\delta_G
 +(1-e^{-\lambda_ru})\pi,\qquad
 b(u)=(1-M^{-1})e^{-\lambda_ru}.
 \label{eq:app-poisson-kernel}
\end{equation}
The triangular integral is $\int_0^T(T-u)e^{-\lambda_ru}du=T/\lambda_r-(1-e^{-\lambda_rT})/\lambda_r^2$. For four Pauli frames this gives the sharper finite-time inequality
\begin{equation}
 \epsilon^{\rm Pois}\le\min\!\left\{1,3g_*^2
 \left[\frac{T}{\lambda_r}
 -\frac{1-e^{-\lambda_rT}}{\lambda_r^2}\right]\right\}
 \le\min\{1,3g_*^2T/\lambda_r\}.
 \label{eq:app-poisson-exact-bound}
\end{equation}
At zero rate the bracket tends to $T^2/2$, matching a single static random frame rather than complete decoupling. Equations~\eqref{eq:app-fixed-triangle} and \eqref{eq:app-poisson-kernel} compute control mixing, while the bath state is retained in the quantum conditional state throughout.

For isotropic drift-free Brownian control on a compact connected group, generator scaling gives $b^{\rm Br}(u;\kappa)=b^{\rm Br}(\kappa u;1)$. With $C_B=\int_0^\infty b^{\rm Br}(v;1)dv<\infty$, the full-group density yields $b^{\rm Br}(u;\kappa)=\frac12\int|q_u(G)-1|d\mu(G)$. Substitution gives the Brownian line of Eq.~\eqref{eq:protocol-bounds}. If only adjoint actions are physically distinguished, the quotient rotation group may be used, but its mixing constant must be computed for that state space.

For OU control the relevant state is $(U,\boldsymbol\xi)$, and a finite integrated total-variation mixing time is an additional hypothesis. It does not follow from the velocity correlation time $\tau$ alone. If a verified baseline joint process is uniformly accelerated by $U^{(r)}(t)=U^{(1)}(rt)$ and $\boldsymbol\xi^{(r)}(t)=r\boldsymbol\xi^{(1)}(rt)$, then $\tau_r=\tau_1/r$, $\sigma_r^2=r^2\sigma_1^2$, and $t_{{\rm mix},r}=t_{{\rm mix},1}/r$. This differs from both a fixed-strength OU white-noise limit and the fixed-$\tau$ particle scan.

%% file: appendix_fixed_sector.tex
\section{Fixed-sector and Fock-norm convergence for finite environments and Gaussian baths}
\label{app:fixed-sector-general}

The particle number in this appendix counts insertions in the cPT field expansion, not excitations of the physical environment. We fix an insertion order $n$, its times $0<\tau_1<\cdots<\tau_n<T$, and bounded system-side insertion maps. The bounds below may depend on $n$ and the chosen labels, but their constants are independent of the control rate. Pointwise covariance makes the uncoupled logical propagation and all insertion maps independent of the sampled frame. This observation is essential: averaging a control-dependent intended operation would not isolate the coupling error.

\subsection{A finite-environment Dyson theorem}
\label{app:finite-env-sector}
Let $\mathcal H_S$ and $\mathcal H_E$ be finite dimensional and let $\rho_{SE}(0)$ be an arbitrary initial joint state, possibly correlated. The autonomous control is sampled independently of this state. In the logical frame and the interaction picture of the free environment, write
\begin{equation}
 H_{{\rm int},r}(t)=g\sum_{\alpha=1}^{m}
 W_{\alpha,r}(t)\otimes B_\alpha(t),\qquad
 W_{\alpha,r}(t)=U_r(t)^\dagger S_\alpha U_r(t),
 \label{eq:app-finite-interaction}
\end{equation}
where $S_\alpha$ and $B_\alpha(t)$ are bounded. Put
\begin{equation}
 G_E=\sum_\alpha\|S_\alpha\|_\infty
 \sup_{0\le t\le T}\|B_\alpha(t)\|_\infty<\infty.
 \label{eq:app-finite-G}
\end{equation}
If $B_\alpha(t)$ is obtained by conjugating a fixed $B_\alpha$ with the environment propagator, its norm is unchanged. No Gaussian assumption or environment reset is made.

\begin{proposition}[Bounded-environment fixed-sector limit]
\label{prop:app-finite-sector}
Suppose that, for every $q\ge1$, every selection $\alpha_1,\ldots,\alpha_q$, and almost every set of distinct times $t_1,\ldots,t_q\in(0,T)$,
\begin{equation}
 \lim_{r\to\infty}\left\|\E_r
 \bigotimes_{j=1}^{q}W_{\alpha_j,r}(t_j)\right\|=0.
 \label{eq:app-FM-finite}
\end{equation}
Assume also that the product of uncoupled logical propagators and the fixed insertion maps has a trace-class norm bound $K_n$ uniform over every further subdivision of $[0,T]$. Then every trajectory-averaged fixed cPT coefficient obeys
\begin{equation}
 \lim_{r\to\infty}\overline{\mathcal C}^{(n)}_{\boldsymbol\nu,r}
 (\boldsymbol\tau)=\mathcal C^{(n)}_{\boldsymbol\nu,{\rm id}}
 (\boldsymbol\tau).
 \label{eq:app-finite-sector-limit}
\end{equation}
\end{proposition}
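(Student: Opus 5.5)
Our approach is to expand the complete joint evolution in a Dyson series in the interaction $H_{{\rm int},r}$ of Eq.~\eqref{eq:app-finite-interaction}, average term by term, and take the limit $r\to\infty$ in each term.

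\textbf{Step 1: the pathwise expansion.} We work in the logical frame and in the interaction picture of the free environment. For one control path, the coefficient in Eq.~\eqref{eq:app-general-amplitude} is the trace of an ordered product. Its factors are joint propagators interleaved with the fixed insertions $\mathcal P_{\nu_j}$ at the times $\tau_j$. We expand each propagator in powers of $\mathcal V_r(t)=-i[H_{{\rm int},r}(t),\cdot\,]$. The order-$q$ term is then an integral over the ordered simplex $0<t_1<\cdots<t_q<T$. Its integrand is a word in which $q$ commutator vertices are time-ordered together with the $n$ insertions and the uncoupled propagators. The $q=0$ term is exactly $\mathcal C^{(n)}_{\boldsymbol\nu,{\rm id}}(\boldsymbol\tau)$, because pointwise covariance makes the uncoupled logical dynamics and the insertions frame independent.

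Each vertex splits into left and right branches, and $W_{\alpha,r}$ enters each branch linearly. The order-$q$ integrand is therefore a sum of at most $(2m)^q$ terms. Each term has the form
\[
 \Tr\Big[\mathcal L_{\boldsymbol\alpha,\boldsymbol\epsilon}\Big(\bigotimes_{j=1}^q W_{\alpha_j,r}(t_j)\Big)\Big],
\]
where $\mathcal L$ is a linear functional that does not depend on $r$. It is built from the fixed insertions, the environment operators $B_\alpha(t_j)$, $\rho_{SE}(0)$, and the uncoupled propagators. Its norm is at most $g^qK_n\prod_j\|S_{\alpha_j}\|\,\|B_{\alpha_j}(t_j)\|$, by the assumed uniform trace-class bound $K_n$ together with $\|\rho_{SE}\|_1=1$.

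\textbf{Step 2: averaging.} Because the control is independent of $\rho_{SE}(0)$ and $\mathcal L$ is linear and frame independent, the expectation passes inside $\mathcal L$. This replaces the tensor product by $\E_r\bigotimes_jW_{\alpha_j,r}(t_j)$. By hypothesis \eqref{eq:app-FM-finite}, this moment tends to zero for almost every configuration of distinct times, since the boundary of the simplex has measure zero. Consequently, for every $q\ge1$, the averaged integrand tends to zero pointwise almost everywhere.

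The integrand is also bounded by $(2gG_E)^qK_n$, uniformly in $r$, because $\|W_{\alpha,r}\|=\|S_\alpha\|$ is frame independent. Dominated convergence on the simplex of volume $T^q/q!$ then shows that each order-$q$ averaged term tends to zero.

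\textbf{Step 3: summing the orders; the main obstacle.} The main obstacle is exchanging $\lim_r$ with the infinite sum over $q$, and justifying the Dyson expansion in the presence of the insertions. Order by order the bound is $(2gG_ET)^qK_n/q!$, with no dependence on $r$ or on the path. This series is summable, so dominated convergence for series (Tannery's theorem) lets the limit pass through the sum. Every term with $q\ge1$ vanishes, which gives Eq.~\eqref{eq:app-finite-sector-limit}.

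To justify the expansion itself, we would subdivide $[0,T]$ at the insertion times and use the norm-convergent Dyson series on each subinterval, which holds for bounded generators. The uniform-over-subdivision constant $K_n$ controls the products of the free parts. For merely measurable Brownian paths, we would note that each $W_{\alpha,r}(t)$ is bounded and measurable in $t$. That is enough for the Bochner integrals, so no regularity of $U_r$ is needed.
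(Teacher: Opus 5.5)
Your proposal is correct and follows the paper's proof essentially step for step. The paper likewise expands in $-i[H_{{\rm int},r}(t),\cdot\,]$ via Duhamel--Dyson, averages each order as a multilinear functional of $\E_r\bigotimes_jW_{\alpha_j,r}(t_j)$ (which vanishes at almost every set of distinct times by hypothesis), and uses the $r$- and path-independent majorant $K_n(2|g|G_ET)^q/q!$ to apply dominated convergence both within each order and across the sum over $q$. One small imprecision: your bound $g^qK_n\prod_j\|S_{\alpha_j}\|\,\|B_{\alpha_j}\|$ controls $\mathcal L$ only on product tensors. That is enough for the domination step via $|\E X|\le\E|X|$, but passing from the operator-norm hypothesis to $\mathcal L(\E_r\bigotimes_jW_{\alpha_j,r}(t_j))\to0$ uses boundedness of $\mathcal L$, which follows from finite-dimensionality of the system rather than from that bound.
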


\noindent\emph{Proof.} Let $\mathcal V_0(t,s)$ propagate the uncoupled environment and the logical instrument between two times. Define $\mathcal T_{\boldsymbol\nu}[\mathcal A_q(t_q),\ldots,\mathcal A_1(t_1)]$ by placing the $\mathcal A_j$ and all fixed particle insertions in their actual temporal order, with $\mathcal V_0$ between adjacent events, then applying the bounded final coefficient functional to $\rho_{SE}(0)$. This is multilinear and satisfies
\begin{equation}
 |\mathcal T_{\boldsymbol\nu}[\mathcal A_q,\ldots,\mathcal A_1]|
 \le K_n\prod_{j=1}^{q}\|\mathcal A_j\|_{1\to1}.
 \label{eq:app-finite-word-bound}
\end{equation}
For $\mathcal L_r(t)X=-i[H_{{\rm int},r}(t),X]$, the $q$th Duhamel--Dyson contribution is exactly
\begin{equation}
 \mathcal D^{(q)}_r=\int_{0<t_1<\cdots<t_q<T}
 \E_r\mathcal T_{\boldsymbol\nu}
 [\mathcal L_r(t_q),\ldots,\mathcal L_r(t_1)]\,d^qt.
 \label{eq:app-finite-Dyson}
\end{equation}
Expanding each commutator gives a finite sum over channel labels and left/right actions. For fixed distinct times, every term is a multilinear functional of the tensor in Eq.~\eqref{eq:app-FM-finite}; all environment operators and intended insertion maps are control independent. Thus its control average tends to zero. Coincident-time sets have zero integration measure. Moreover,
\begin{equation}
 \|\mathcal L_r(t)X\|_1\le 2|g|G_E\|X\|_1,
 \qquad
 |\mathcal D^{(q)}_r|\le K_n\frac{(2|g|G_ET)^q}{q!}.
 \label{eq:app-finite-majorant}
\end{equation}
The sum of this majorant over $q\ge1$ is $K_n(e^{2|g|G_ET}-1)$. Dominated convergence therefore removes every positive Dyson order while retaining the zero-order ideal coefficient. This proves Eq.~\eqref{eq:app-finite-sector-limit}.\hfill$\square$

The same proof permits prescribed $H_E(t)$ and $H_{SE}(t)$ without a uniform-in-time coupling norm. Let the environment have a well-defined time-ordered propagator $V_E(t,0)$ and expand the system-traceless coupling in a fixed finite system-operator basis as $H_{SE}(t)=\sum_\alpha S_\alpha\otimes B_\alpha(t)$. In the environment interaction picture $B_\alpha^I(t)=V_E(t,0)^\dagger B_\alpha(t)V_E(t,0)$ has the same norm. If $G(t)=\sum_\alpha\|S_\alpha\|_\infty\|B_\alpha(t)\|_\infty$ belongs to $L^1[0,T]$, the $q$th Dyson term is dominated instead by
\begin{equation}
 |\mathcal D^{(q)}_r|\le K_n\frac{\bigl(2\int_0^T G(t)dt\bigr)^q}{q!}.
 \label{eq:app-time-dependent-sector-majorant}
\end{equation}
The finite control-moment condition in Eq.~\eqref{eq:app-FM-finite} still removes each positive order at distinct times. Time-dependent system factors may be expanded in the same fixed traceless basis, with their deterministic coefficients absorbed into $B_\alpha(t)$. The limit therefore holds for this integrable nonautonomous coupling as well.

The condition concerns \emph{all} finite control moments; centering one time or checking only a two-time correlation does not suffice at arbitrary coupling. For independent fixed Pauli redraws, distinct fixed sampling times eventually occupy different cells, and each traceless $S_\alpha$ has zero Pauli twirl. Poisson redraws have the same limit because the chance of no event between two fixed times tends to zero. In a full-cell Cayley path, the frame in one cell depends only on its adjacent target pair; at sufficiently fine nodes, the pairs attached to distinct fixed times are disjoint. The old uniform target frame gives zero Pauli twirl conditional on the relative target. Brownian heat-kernel mixing gives asymptotically independent Haar frames. The time-accelerated three-axis OU conclusion is conditional on mixing of its \emph{joint} frame--velocity state, as specified below. Under that condition these observations verify Eq.~\eqref{eq:app-FM-finite} for the single-axis qubit calculation. They do not cover a fixed-$\tau$ OU variance scan merely by identifying a large variance with time acceleration.

For the qubit environment in Sec.~\ref{sec:numerical}, $S=\sigma_z^{(S)}$, $B=\sigma_x^{(E)}$, $g=0.8$, and $H_E=(1.7/2)\sigma_z^{(E)}$, so $G_E=1$. Equation~\eqref{eq:app-finite-sector-limit} applies separately to all $15^2=225$ two-particle labels at the specified pair of times. Since this label set is finite, its largest coefficient deviation also tends to zero. Equation~\eqref{eq:app-finite-majorant} controls the fixed-sector convergence. It is distinct from the complete-record bound of Theorem~\ref{thm:complete-experiment}, which assumes a quantitative mixing envelope and bounds a different observable. Boundedness, rather than finite environment dimension itself, is what the proof uses.

\subsection{A time-integrable Gaussian covariance theorem}
\label{app:gaussian-L1-sector}
Assume $\rho_{SE}(0)=\rho_S\otimes\rho_E$ is independent of the
autonomous control, with $\rho_E$ centered Gaussian (or quasifree).
The coupling operators $B_\alpha(t)$ are linear bath fields, so their
ordered moments obey Wick's rule. Write their matrix covariance and
finite-time norm as
\begin{equation}
 C_{\alpha\beta}(t,s)=\Tr_E[B_\alpha(t)B_\beta(s)\rho_E],
 \quad
 A_T=\int_0^T\!\int_0^T\|C(t,s)\|\,dt\,ds.
 \label{eq:app-general-covariance}
\end{equation}
The bath need not be stationary. A prescribed time-dependent quadratic $H_E(t)$ is admissible when its evolution preserves the initial state's quasifree structure; the operators $B_\alpha(t)$ then include that time-ordered interaction-picture evolution. Bounded deterministic coupling coefficients may be absorbed into these operators and their covariance. We use finitely many Hermitian system factors $S_\alpha$ and Hermitian bath fields $B_\alpha(t)$, and put $s_\Sigma=\sum_\alpha\|S_\alpha\|_\infty$. In the multichannel estimates, $\|C(t,s)\|$ is the matrix spectral norm; hence $|C_{\alpha\beta}(t,s)|\le\|C(t,s)\|$ and $\|C(s,t)\|=\|C(t,s)\|$. At each ultraviolet cutoff $\Lambda$, assume a Fock representation of the purified quasifree state with vacuum $\Omega_\Lambda$ and fields
\begin{equation}
 B_{\alpha,\Lambda}(t)=a(f_{\alpha,\Lambda}(t))+
 a^\dagger(f_{\alpha,\Lambda}(t)),\qquad
 L_\Lambda=2|g|\int_0^T\sum_\alpha
 \|S_\alpha\|_\infty\|f_{\alpha,\Lambda}(t)\|\,dt<\infty.
 \label{eq:app-gaussian-cutoff-realization}
\end{equation}
The one-particle space may be doubled for a thermal or mixed quasifree state. The form factors are strongly measurable; $L_\Lambda$ need not be uniform as the cutoff is removed. This assumption holds, for example, when each cutoff has finite uniformly bounded equal-time covariance on the chosen finite interval. A frequency cutoff does not make the field operator bounded; the following construction supplies its unitary interaction-picture propagation. Separately, assume $C_\Lambda\to C$ in $L^1([0,T]^2)$ with $\sup_\Lambda A_{T,\Lambda}<\infty$. For uncut fields, the limiting fixed coefficient is defined by the physical cutoff processes; no uncut Hamiltonian realization is asserted.

\begin{proposition}[Time-integrable Gaussian fixed-sector limit]
\label{prop:app-gaussian-sector}
Under the preceding factorized Gaussian-state, linear-field, and cutoff
assumptions, suppose $A_T<\infty$ and the rotated system factors satisfy
the multichannel finite-moment condition of Eq.~\eqref{eq:app-FM-finite}.
Then every trajectory-averaged fixed cPT coefficient converges to its
uncoupled value as the control is accelerated.
The convergence is uniform under removal of the stated ultraviolet cutoff, so
the cutoff and fast-control limits commute for each fixed sector.
\end{proposition}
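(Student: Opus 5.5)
We would follow the finite-environment proof of Proposition~\ref{prop:app-finite-sector} closely, but replace its operator-norm majorant by a Wick-pairing majorant that involves only $A_T$. The argument runs at fixed cutoff $\Lambda$ first.

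\emph{Step 1: expansion at fixed cutoff.} The realization in Eq.~\eqref{eq:app-gaussian-cutoff-realization} with $L_\Lambda<\infty$ gives a unitary interaction-picture propagator. We obtain it from the Dyson series applied to finite-particle vectors, using the standard bound $\|a^\sharp(f)\psi\|\le\|f\|\,\|(N+1)^{1/2}\psi\|$. On the purified vacuum the series converges, because $q$ creation and annihilation factors cost $\sqrt{q!}$ while the time simplex gives $1/q!$. Taking the bath expectation term by term, and using Wick's rule for the factorized quasifree state, yields Eq.~\eqref{eq:app-wick}. In it only even orders $q=2m$ survive, each with a sum over pairings $\pi$ and over branch labels.

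\emph{Step 2: cutoff-uniform majorant.} For fixed $m$, a pairing, and a branch assignment, we bound the system word by $K_n s_\Sigma^{2m}$. Here $K_n$ collects the insertion norms, and unitary conjugation by $U_r$ does not change $\|S_\alpha\|$. We bound the product of covariances by $\prod_{(a,b)\in\pi}\|C(v_b,v_a)\|$. Integrating over the ordered simplex, a single pairing contributes at most the product of integrals over the paired time variables. Summing over the orderings compatible with the pairing, the total over all pairings and orderings is at most $(2m)!^{-1}\cdot(2m-1)!!\,A_T^m$ times the branch factor $4^m$. Since $(2m-1)!!/(2m)!=1/(2^m m!)$, the order-$2m$ term is bounded by $K_n(2g^2s_\Sigma^2A_T)^m/m!$. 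This gives a summable majorant $K_n e^{2g^2s_\Sigma^2A_T}$ that is uniform in $r$, and also uniform in $\Lambda$ because $\sup_\Lambda A_{T,\Lambda}<\infty$. The main obstacle is exactly this step: the time ordering of the full word interleaves paired blocks with the particle insertions, so the factorization has to be done carefully. We would handle it by relaxing the ordered-simplex integral, for a fixed pairing, to an unordered integral over $[0,T]^{2m}$ divided by $(2m)!$, and count orderings only through the $(2m)!$ factor.

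\emph{Step 3: fast-control limit and cutoff removal.} For each fixed $m$, pairing, and set of distinct times, the system trace is multilinear in $\E_r\bigotimes_j W_{\alpha_j,r}(v_j)$. The bath and insertion factors do not depend on the control, and the averaging over $r$ passes inside the time integral by Fubini. Hence by Eq.~\eqref{eq:app-FM-finite} the integrand tends to zero pointwise almost everywhere. It is dominated by $K_n s_\Sigma^{2m}\prod\|C\|$, which is integrable, so dominated convergence kills every order $m\ge1$. Summation over $m$ is justified by Step 2. For cutoff removal, the multilinear difference of order $m$ is bounded by $m\,\|C_\Lambda-C\|_{L^1}(\sup A)^{m-1}$ times the same constants. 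This gives convergence of the averaged coefficients as $\Lambda\to\infty$ that is uniform in $r$, which defines the uncut coefficients. A standard interchange argument based on the uniform Cauchy property then shows that the two limits commute.
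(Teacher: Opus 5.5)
Your proposal is correct and follows essentially the same route as the paper's proof. The shared steps are the $\sqrt{(k+q)!/k!}$-versus-$1/q!$ Dyson estimate at fixed cutoff, Wick's rule on the convergent ket series, and the symmetrized pairing majorant $K_n(2g^2s_\Sigma^2A)^m/m!$; note that passing from the ordered simplex to $[0,T]^{2m}/(2m)!$ is valid for the symmetric sum over pairings, not for a single pairing, which is how the paper states it. They continue with dominated convergence driven by Eq.~\eqref{eq:app-FM-finite} and a once-telescoped $L^1$ cutoff-stability bound uniform in the control law.

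The paper does more in two places. In Step~1 it constructs the unitary cutoff propagator as a strong limit of the unitaries generated by the number-projected bounded fields $Q_JB_{\alpha,\Lambda}(t)Q_J$; your finite-particle Dyson series gives convergence on a dense set but does not by itself establish unitarity. It also telescopes between two cutoffs $\Lambda,\Lambda'$ to get a Cauchy family instead of comparing against the limiting $C$, which is equivalent.
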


\noindent\emph{Proof.} We first construct the regularized physical propagator. Let $N_\Lambda$ be the bath number operator and $Q_J=\mathbf1_{N_\Lambda\le J}$. Replace each field by $Q_JB_{\alpha,\Lambda}(t)Q_J$, retaining the bounded Hermitian system factors. The resulting interaction Hamiltonian is bounded, norm integrable, and self-adjoint, so it generates a unitary propagator $V_{\Lambda,J}(t,s)$. For a vector $\psi_k$ supported on bath particle numbers at most $k$, a linear field changes particle number by at most one and obeys $\|B_{\alpha,\Lambda}(t)\psi_k\|\le2\|f_{\alpha,\Lambda}(t)\|\sqrt{k+1}\|\psi_k\|$. Thus its $q$th ket Dyson term, with any finite bounded system insertions of total operator-norm product $M_n$, satisfies
\begin{equation}
 \|D_{q,\Lambda,J}\psi_k\|
 \le M_n\frac{L_\Lambda^q}{q!}
       \sqrt{\frac{(k+q)!}{k!}}\,\|\psi_k\|.
 \label{eq:app-gaussian-vector-majorant}
\end{equation}
This estimate is uniform in $J$ and in the control path, since conjugation does not change a system factor's norm. The ratio of successive majorants is $L_\Lambda\sqrt{k+q+1}/(q+1)\to0$. Every fixed-order term agrees with the unprojected field word once $J\ge k+q$. Here $J$ is an auxiliary particle-number truncation used to construct the cutoff propagator; it is distinct from the physical ultraviolet cutoff $\Lambda$. Consequently $V_{\Lambda,J}(t,s)$ has a strong limit on the dense finite-particle subspace, uniformly over $0\le s\le t\le T$. The same argument for the adjoints gives a strong inverse limit. Since all approximants are unitary, the limits extend to a unitary cocycle $V_\Lambda(t,s)$ on the full Hilbert space. This defines the physical cutoff propagation for the measurable interaction of Eq.~\eqref{eq:app-gaussian-cutoff-realization}, including its bounded logical interaction-picture factors.

On the purified Gaussian input, each ket Dyson series, including finitely many system insertions, converges in norm. The corresponding bra--ket matrix element is therefore the absolutely convergent double series of these terms. Applying Wick's rule to the unprojected field words and regrouping by total coupling order identifies the physical insertion coefficients with the Dyson--Wick expansion of Eq.~\eqref{eq:app-wick}. The number-projected bath is used only to construct the propagator; Wick's rule is applied after its strong limit, not to the projected, non-Gaussian field operators. This closes the propagation-to-coefficient step at each cutoff without assuming that frequency regularization bounds the field operator.

We may now take the bath trace and average the control. At Dyson order $2m$, Wick's rule pairs the bath operators. Each pair is an entry of $C_\Lambda$ or its adjoint; odd orders vanish. With $a=2g^2s_\Sigma^2$ and $A=A_{T,\Lambda}$, the commutator branches, pairings, and ordered time integral have the absolute majorant
\begin{equation}
 K_n\frac{(aA)^m}{m!}.
 \label{eq:app-Wick-majorant}
\end{equation}
Indeed, extending the symmetric pairing sum from the ordered simplex to the whole time cube gives $A^m/(2^m m!)$; the $2^{2m}$ left/right choices supply the displayed factor in $a$. The sum is bounded by $K_ne^{aA}$ uniformly in control strength and cutoff. At each fixed set of distinct coupling times, the averaged system factor is a multilinear functional of the tensor in Eq.~\eqref{eq:app-FM-finite}, and therefore vanishes in the fast limit. Dominated convergence applies first to each Wick order and then to their uniformly convergent sum.

For two cutoffs $\Lambda,\Lambda'$, telescope each product of pair covariances once and let $A_*$ bound their two $A$ values. The same pairing count yields
\begin{equation}
 \sup_r|\overline{\mathcal C}^{(n)}_{\Lambda,r}
 -\overline{\mathcal C}^{(n)}_{\Lambda',r}|
 \le K_n a e^{aA_*}
 \|C_\Lambda-C_{\Lambda'}\|_{L^1([0,T]^2)}.
 \label{eq:app-cutoff-stability}
\end{equation}
The bound is uniform in the control law. It makes the cutoff family Cauchy for every fixed coefficient and permits either order of limits.\hfill$\square$

The covariance condition includes matrix-valued cross correlations and nonstationary Gaussian baths. It is weaker than a pointwise bound on $C(t,s)$. The explicitly stated matrix spectral norm and Hermitian-channel convention are used in the entrywise Wick estimate above; a non-Hermitian channel decomposition must first be replaced by a Hermitian one or use a symmetric covariance envelope with the corresponding constants recomputed. The preceding propositions concern fixed field-insertion order. For the fixed operation basis of the free, trace-closed process, the insertion bound can be chosen uniformly across orders as shown next, allowing the particle sum to be controlled. Entropy continuity requires additional assumptions.

\subsection{Fock-norm convergence of the trajectory-averaged process}
\label{app:fock-convergence}
We prove Theorem~\ref{thm:fock-convergence}. The norm is the ordered-simplex
Fock norm of Ref.~\cite{fuksa2026dilation}; its Appendix~B,
Eqs.~(B9)--(B10), establishes factorial sector bounds for physical cPTs.
Here we derive a coefficient majorant directly in the operation basis
used in this paper, including uniformity in the Gaussian regularization.
The PCR density theorem in that reference is not itself a fast-control
convergence result.

\paragraph{Fixed basis and the coefficient majorant.}
Let $q=d_S^4-1$, $\Delta_n=\{0<t_1<\cdots<t_n<T\}$, and
let $\mathcal P_\nu$ be the fixed insertion maps in
Eq.~\eqref{eq:app-general-amplitude}. These are left/right maps
$X\mapsto(P_\mu\otimes\id_E)X(P_{\mu'}^\dagger\otimes\id_E)$.
Set
\begin{equation}
 R=\max\!\left\{1,\max_{\nu=(\mu,\mu')}
          \|P_\mu\|_\infty\|P_{\mu'}\|_\infty\right\}.
 \label{eq:app-fock-insertion-bound}
\end{equation}
Then $\|\mathcal P_\nu[X]\|_1\le R\|X\|_1$, also after extension
by any environment identity. The constant is independent of the control
rate, cutoff, insertion times, and environment dimension.
For the unscaled Pauli basis used in the qubit calculations, $R=1$.
The basis is fixed throughout; a rate-dependent rescaling would change
the norm under consideration.

In case (i), joint unitary conjugation is an isometry on trace-class
operators, including non-Hermitian operators produced by off-diagonal
insertions. Starting from $\|\rho_{SE}(0)\|_1=1$ and using
$|\Tr X|\le\|X\|_1$ at the trace boundary gives, path by path,
\begin{equation}
 |\mathcal C^{(n),\mathrm{tr}}_{\boldsymbol\nu,r,\omega}
   (\boldsymbol t)|\le R^n.
 \label{eq:app-fock-finite-majorant}
\end{equation}
The same bound holds for the trajectory average and the ideal process.
Initial system--environment correlations do not change this estimate.

In case (ii), use the regularized Dyson--Wick construction of
Proposition~\ref{prop:app-gaussian-sector}. Its uncoupled propagation is
unitary, and its $n$ fixed insertions have total trace-norm bound $R^n$
under every subdivision by coupling events. Thus the constant $K_n$
in Eqs.~\eqref{eq:app-finite-word-bound}
and \eqref{eq:app-Wick-majorant} may be taken to be $R^n$ for these
free-process coefficients. With $a=2g^2s_\Sigma^2$ and
$A_*=\sup_\Lambda A_{T,\Lambda}<\infty$, summing the absolute Wick
majorant gives
\begin{equation}
 |\overline{\mathcal C}^{(n),\mathrm{tr}}_{
       \boldsymbol\nu,\Lambda,r}(\boldsymbol t)|
 \le R^n\sum_{m\ge0}\frac{(aA_*)^m}{m!}
 =R^ne^{aA_*}.
 \label{eq:app-fock-gaussian-majorant}
\end{equation}
No cutoff-uniform pointwise bound on $C_\Lambda(t,s)$ or operator-norm bound on the field is used in this Wick estimate; the finite-cutoff realization was established separately above.
Consequently, in both cases,
\begin{equation}
 |\overline{\mathcal C}^{(n),\mathrm{tr}}_{
       \boldsymbol\nu,r}(\boldsymbol t)|,
 \quad |\mathcal C^{(n),\mathrm{tr}}_{
       \boldsymbol\nu,{\rm id}}(\boldsymbol t)|
 \le MR^n,
 \qquad
 M=\begin{cases}1,&\text{case (i)},\\
 e^{aA_*},&\text{case (ii)}.
 \end{cases}
 \label{eq:app-fock-uniform-majorant}
\end{equation}
In case (ii) this estimate first applies at each cutoff and then passes
to the limiting coefficients. The normalized vacuum coefficient is one.
These bounds also ensure that the process vectors and their measurable
trajectory averages lie in $\Gamma$.

\paragraph{Finite sectors and the uniform tail.}
Write $|D_r\rangle=|\overline\Upsilon_{T,r}\rangle
-|\Upsilon_{T,{\rm id}}\rangle$ and let $\Pi_{\le N}$ project onto
particle orders at most $N$. In the normalization of
Eq.~\eqref{eq:cpt-expansion},
\begin{equation}
 \|D_r\|_\Gamma^2
 =\sum_{n=0}^{\infty}\sum_{\boldsymbol\nu\in[q]^n}
  \int_{\Delta_n}
  \left|\overline{\mathcal C}^{(n),\mathrm{tr}}_{
       \boldsymbol\nu,r}(\boldsymbol t)
       -\mathcal C^{(n),\mathrm{tr}}_{
       \boldsymbol\nu,{\rm id}}(\boldsymbol t)\right|^2
  d\boldsymbol t.
 \label{eq:app-fock-norm}
\end{equation}
For each fixed $n$ and label string, the integrand tends to zero for
almost every insertion-time tuple by
Proposition~\ref{prop:app-finite-sector} or
\ref{prop:app-gaussian-sector}. It is bounded by $4M^2R^{2n}$.
Since $|\Delta_n|=T^n/n!$ and there are $q^n$ label strings,
dominated convergence gives
$\|\Pi_{\le N}D_r\|_\Gamma\to0$ for every fixed $N$.
The complementary part satisfies the rate-independent estimate
\begin{equation}
 \sup_r\|\Pi_{>N}D_r\|_\Gamma^2
 \le4M^2\sum_{n>N}\frac{(qR^2T)^n}{n!}
 \xrightarrow[N\to\infty]{}0.
 \label{eq:app-fock-tail}
\end{equation}
The bound is also cutoff independent in case (ii).
Given any error tolerance, choose $N$ to control this tail, and then
choose $r$ to control the finitely many lower sectors. Orthogonality of
the sector projections proves Eq.~\eqref{eq:fock-convergence}.
The estimate is for fixed $T$ and establishes no uniform long-time limit.

\paragraph{Gaussian cutoff removal in the full norm.}
The stability estimate \eqref{eq:app-cutoff-stability}, now with
$K_n=R^n$, gives
\begin{equation}
 \sup_r\left|
  \overline{\mathcal C}^{(n),\mathrm{tr}}_{
    \boldsymbol\nu,\Lambda,r}
 -\overline{\mathcal C}^{(n),\mathrm{tr}}_{
    \boldsymbol\nu,\Lambda',r}\right|
 \le R^n a e^{aA_*}
       \|C_\Lambda-C_{\Lambda'}\|_{L^1([0,T]^2)}.
 \label{eq:app-fock-cutoff-coefficient}
\end{equation}
Squaring, integrating over insertion times, and summing all particle
orders yields
\begin{equation}
 \sup_r\bigl\||\overline\Upsilon_{T,\Lambda,r}\rangle
       -|\overline\Upsilon_{T,\Lambda',r}\rangle\bigr\|_\Gamma
 \le a e^{aA_*}e^{qR^2T/2}
       \|C_\Lambda-C_{\Lambda'}\|_{L^1([0,T]^2)}.
 \label{eq:app-fock-cutoff-stability}
\end{equation}
The vacuum difference vanishes; retaining the zeroth term in the
exponential only weakens the bound. Completeness of $\Gamma$ and the
assumed $L^1$ convergence of $C_\Lambda$ therefore define an uncut
averaged process $|\overline\Upsilon_{T,r}\rangle$, uniformly in $r$.
For a fixed cutoff, the preceding argument gives convergence to
$|\Upsilon_{T,{\rm id}}\rangle$. Given a tolerance, first choose a
cutoff for which Eq.~\eqref{eq:app-fock-cutoff-stability} controls its
distance to the uncut process uniformly in $r$, and then take $r$ large
at this fixed cutoff. The triangle inequality proves convergence of
the uncut process and shows that both iterated limits equal the ideal
vector. This argument also controls simultaneous limits with
$\Lambda\to\infty$ and $r\to\infty$.
It remains to verify that this norm limit is a physical cPT rather than
only a Fock vector. Use the finite-slot positivity and causality definition
of Ref.~\cite{fuksa2026dilation}. For a fixed number of slots and a
nonnegative normalized test function $f\in L^2(\Delta_k)$, every entry of
the smeared finite-slot Choi matrix is a bounded linear functional of
finitely many process sectors. A nonidentity label uses the corresponding
$L^2$ pairing. An identity slot composes adjacent propagators and removes
that field insertion; integration over the unused slot times produces
an $L^2$ kernel on the lower simplex by Cauchy--Schwarz on the finite
interval. Thus these smeared matrices converge entrywise with the Fock
vector. Their finite-dimensional positive cones, causal affine constraints,
and normalized vacuum condition are closed. Each cutoff process is physical
by the unitary construction above, and therefore the limit satisfies the
same smeared positivity and causality conditions. No point evaluation of
an arbitrary $L^2$ kernel is required. The logarithmic short-time covariance
of the uncut Drude model is integrable on a finite time square, so it is
included whenever the stated physical cutoff realization and $L^1$
covariance convergence hold.\hfill$\square$

Equation~\eqref{eq:fock-instrument-bound} follows from the induced
instrument norm of Ref.~\cite{fuksa2026dilation}, Eq.~(8).
The process-vector convergence does not provide an error bound uniform
over instruments of arbitrarily large norm. In particular, it does not
replace Theorem~\ref{thm:complete-experiment}, which has a quantitative
mixing premise and controls complete records directly. Nor does Fock-norm
convergence alone imply continuity of operator CMI in infinite field
space. The fixed-sector rates below remain statements at prescribed
particle order unless their order-dependent constants are also summed.

\subsection{Protocol-specific fixed-sector rates}

For the single-axis, stationary fixed-refresh case, one can say more without estimating the entire Dyson series. Let $C(v)=C(t,t-v)$ and $y_k=\pm1$ be independent Pauli signs on cells of length $h$. The exact influence exponent of Eq.~\eqref{eq:app-discrete-influence} is $X_y=A_h+Z_y$, where $A_h$ sums same-cell triangles and $Z_y$ sums $y_ky_lD_{kl}$ over distinct cells. Since $\operatorname{Re}X_y\ge0$, $|1-e^{-X_y}|\le|X_y|$. If $C\in L^2(0,T)$, Cauchy--Schwarz on each cross-cell rectangle and orthogonality of distinct sign products give
\begin{equation}
 |\E_y e^{-X_y}-1|\le\delta(h):=
 4g^2\!\left[T\int_0^h|C(v)|dv
 +h\left(\int_0^T(T-v)|C(v)|^2dv\right)^{1/2}\right].
 \label{eq:app-L2-refresh}
\end{equation}
Thus $\delta(h)=O(\sqrt h)$ for a fixed finite $L^2$ norm, with a constant independent of an ultraviolet cutoff whose $L^2$ norms are uniformly bounded. Mere $L^1$ integrability still gives qualitative convergence: the same-cell integral tends to zero by absolute continuity, while $\E|Z_y|\le4g^2\sqrt{Th}\,\|C\|_{L^1(0,T)}$. A bounded short-time kernel recovers $O(h)$. Each fixed coefficient is a finite sum of influence factors multiplied by bounded insertion weights; for normalized qubit Pauli insertions a coarse bound is $4^{n+1}\delta(h)$. The bath is traced only after the full temporal history has been assembled.

\subsection{Spectral tests and the uncut Drude example}
\label{app:drude-L1-sector}
For a stationary thermal bath write $C(v)=\int_{\mathbb R}f(\omega)e^{-i\omega v}d\omega$, with the thermal absorption and emission factors included in $f$. If $f\in L^1(\mathbb R)$, then $C$ is bounded. The useful new case is $f\in L^2(\mathbb R)$ but $f\notin L^1$: Plancherel gives $C\in L^2(0,T)\subset L^1(0,T)$ although $C(0)$ may diverge. These spectral criteria provide sufficient verification of the stated convergence class. An ultraviolet law $f(\omega)=O(|\omega|^{-\alpha})$ has an integrable high-frequency tail for $\alpha>1$ and a square-integrable one for $\alpha>1/2$, provided its infrared and thermal factors also pass the corresponding test. Exponentially or Gaussian-cutoff Ohmic families, and regular Lorentzian sums, are typically in the bounded-correlation class after their infrared behavior is checked. The theorem itself uses the time covariance and also covers nonstationary Gaussian kernels with no single spectral density.

For the uncut Drude spectrum of Eq.~\eqref{eq:app-drude}, let $a_D=2\Lambda\gamma_B/\pi$. Its positive-frequency thermal density obeys $f(\omega)=a_D/\omega+O(\omega^{-3})+O(e^{-\beta\omega}/\omega)$ at high frequency; the negative-frequency tail is exponentially suppressed, and both sides are bounded near $\omega=0$. Hence $f\in L^2(\mathbb R)$ but $f\notin L^1(\mathbb R)$. With $f_\Omega=e^{-|\omega|/\Omega}f$, Plancherel gives $\sup_\Omega\|C_\Omega\|_2\le\sqrt{2\pi}\|f\|_2$ and $C_\Omega\to C$ in $L^2$, hence in $L^1$ on $[0,T]$. At each fixed $\Omega$, $f_\Omega\in L^1$ and the purified thermal-field form factor has finite norm squared $C_\Omega(0)$; stationarity then gives $L_\Omega<\infty$ in Eq.~\eqref{eq:app-gaussian-cutoff-realization}. Together these facts verify the physical-cutoff and covariance assumptions of Proposition~\ref{prop:app-gaussian-sector}, while the uncut $C(0)$ remains logarithmically divergent.

The $1/\omega$ tail also controls the short-time rate. For any fixed reference time $\tau_0>0$, splitting $f=a_D\mathbf1_{\omega>1/\tau_0}/\omega+f_{\rm rem}$ with $f_{\rm rem}\in L^1$ and integrating the oscillatory tail by parts gives a cutoff-independent envelope
\begin{equation}
 |C_\Omega(v)|\le A_D+B_D\log_+(\tau_0/v),\qquad v>0,
 \label{eq:app-Drude-envelope}
\end{equation}
where $A_D,B_D<\infty$ depend on bath parameters, not $\Omega$. Substitution in Eq.~\eqref{eq:app-L2-refresh} yields, for $h\le\tau_0$,
\begin{equation}
 |\overline{\mathcal C}^{(n)}_{\boldsymbol\nu,h}
 -\mathcal C^{(n)}_{\boldsymbol\nu,{\rm id}}|
 =O_n\!\left(g^2Th\log(\tau_0/h)\right).
 \label{eq:app-Drude-refresh-rate}
\end{equation}
The logarithm is a consequence of this ultraviolet tail, not a universal rate for Gaussian baths. For this fixed sector, removing the Drude cutoff and taking $h\to0$ commute by Eq.~\eqref{eq:app-cutoff-stability}.

\subsection{Brownian and OU rates for the uncut bath}
\label{app:continuous-frame-Drude}
For the continuous single-axis protocols let $W_r(t)=U_r(t)^\dagger S U_r(t)$, $s=\|S\|_\infty$, and $Z_r(t)=Z_1(rt)$ be the full stationary Markov control state. For Brownian control $Z=U$; for OU control $Z=(U,\boldsymbol\xi/r)$ under the overall time acceleration defined in Appendix~\ref{app:controlalg}. Define the average total-variation envelope of the baseline chain as $b(u)=\int\pi(dz)\|P_u(z,\cdot)-\pi\|_{\rm TV}$. Consecutive conditioning gives, at ordered times $t_1<\cdots<t_q$,
\begin{equation}
 \left\|\E\bigotimes_{j=1}^q W_r(t_j)\right\|
 \le 2s^q\sum_{j=1}^{q-1}b(r(t_{j+1}-t_j)).
 \label{eq:app-finite-mixing-moments}
\end{equation}
The right side tends to zero at distinct fixed times if $b(u)\to0$, establishing the moment condition. It does not replace the full trajectory average by a mean generator.

Put $A=\int_{[0,T]^2}|C(t-u)|dtdu$, $w_* =\sup_t\int_0^T|C(t-u)|du$, $a=2g^2s^2$, and
\begin{equation}
 J_r=2\int_0^T(T-v)|C(v)|b(rv)dv,\qquad
 D_r=2\int_0^T(T-v)b(rv)dv.
 \label{eq:app-control-bath-weights}
\end{equation}
To bound a Wick term, enlarge the sum of adjacent time gaps in Eq.~\eqref{eq:app-finite-mixing-moments} to all time pairs. If the selected pair is itself a Wick contraction, its two integrations give $J_r$; otherwise its two Wick partners give at most $w_*^2D_r$. Counting these possibilities and summing the majorant of Eq.~\eqref{eq:app-Wick-majorant} yields
\begin{equation}
 |\overline{\mathcal C}^{(n)}_{\boldsymbol\nu,r}
 -\mathcal C^{(n)}_{\boldsymbol\nu,{\rm id}}|
 \le 2K_ne^{aA}\left[aJ_r+2a^2w_*^2D_r\right].
 \label{eq:app-control-Gaussian-rate}
\end{equation}
If $B_1=\int_0^\infty b(u)du<\infty$ and $C\in L^2(0,T)$, Cauchy--Schwarz gives $J_r\le2T\|C\|_2\sqrt{B_1/r}$ and $D_r\le2TB_1/r$, so the fixed-sector bound is $O_n(r^{-1/2})$. For an exponential envelope $b(u)\le M e^{-\gamma_cu}$ and Eq.~\eqref{eq:app-Drude-envelope}, direct integration improves this to
\begin{equation}
 |\overline{\mathcal C}^{(n)}_{\boldsymbol\nu,r}
 -\mathcal C^{(n)}_{\boldsymbol\nu,{\rm id}}|
 =O_n\!\left(\frac{\log(r\gamma_c\tau_0)}{r\gamma_c}\right)
 \quad(r\gamma_c\tau_0\ge1).
 \label{eq:app-Drude-continuous-rate}
\end{equation}
Here $r$ is a dimensionless acceleration factor, and $\gamma_c$ is a control mixing rate, distinct from the Drude parameter $\gamma_B$.

The Brownian envelope is exponentially integrable by the spectral gap of the compact-group heat semigroup \cite{nechita2013,onorati2017}. For OU, the relevant state is the frame together with its velocity. Set $\boldsymbol\eta=\boldsymbol\xi/\sigma$ for a baseline process with $\sigma,\tau>0$ and all three noise directions present. Its equations and invariant law are
\begin{equation}
 \dot U_t=-i\sigma\sum_{c=1}^3\eta_{c,t}F_cU_t,\qquad
 d\eta_{c,t}=-\tau^{-1}\eta_{c,t}dt+\sqrt{2/\tau}\,dW_{c,t},
 \qquad \pi=\mu_{\rm Haar}\otimes N(0,I_3).
 \label{eq:app-OU-joint-state}
\end{equation}
The OU generator preserves the Gaussian factor and, for each fixed velocity, the group flow preserves Haar measure. For $V(U,\boldsymbol\eta)=1+|\boldsymbol\eta|^2$, direct calculation gives
\begin{equation}
 LV=-2\tau^{-1}V+8\tau^{-1},\qquad
 P_tV=e^{-2t/\tau}V+4(1-e^{-2t/\tau}).
 \label{eq:app-OU-Lyapunov}
\end{equation}
The noise fields are $Y_c=\sqrt{2/\tau}\,\partial_{\eta_c}$. If $X_c^G$ are the three group directions, their brackets with the drift contain
\begin{equation}
 [\partial_{\eta_c},X_0]=\sigma X_c^G-\tau^{-1}\partial_{\eta_c}.
 \label{eq:app-OU-brackets}
\end{equation}
Thus the noise fields and their first brackets span the six-dimensional tangent space at every point, giving the H\"ormander condition for smooth transition densities \cite{hormander1967}. This establishes regularity, but does not by itself give a common minorization measure.

To obtain an exponential mixing estimate from this drift and bracket calculation, one additionally needs a small-set minorization on a sufficiently large velocity sublevel set. We state it as an explicit condition: for some sufficiently large $R$, a time $t_R>0$, a neighborhood $O$ in $SU(2)\times\mathbb R^3$, and $\varepsilon_R>0$,
\begin{equation}
 P_{t_R}(z,\cdot)\ge\varepsilon_R\nu_O(\cdot)
 \quad\text{for }|\boldsymbol\eta(z)|\le R,
 \label{eq:app-OU-small-set}
\end{equation}
where $\nu_O$ is normalized Haar--Lebesgue volume on $O$. A possible verification route uses smooth controlled velocity paths, endpoint-map nondegeneracy, a support theorem \cite{stroockvaradhan1972}, and a positive-density theorem. The bracket and accessibility checks identify the route to the small-set condition. Under Eq.~\eqref{eq:app-OU-small-set}, the drift in Eq.~\eqref{eq:app-OU-Lyapunov} and a geometric Harris theorem \cite{meyntweedie2009} give constants $M_{\rm OU},\gamma_c>0$ such that
\begin{equation}
 \|P_u(z,\cdot)-\pi\|_{\rm TV}
 \le M_{\rm OU}V(z)e^{-\gamma_cu},\qquad
 b(u)\le4M_{\rm OU}e^{-\gamma_cu}.
 \label{eq:app-OU-mixing}
\end{equation}
Thus the OU fixed-sector rates follow whenever Eq.~\eqref{eq:app-OU-small-set}, or directly an integrable joint-state mixing envelope, is available. The displayed drift and bracket checks provide the structural route to this condition; the mixing rate can then be obtained from the joint-process analysis or numerical calibration.

The OU limit here accelerates the \emph{whole} baseline process: $U_r(t)=U_1(rt)$ and $\boldsymbol\xi_r(t)=r\boldsymbol\xi_1(rt)$, so $\tau_r=\tau_1/r$ and $\sigma_r^2=r^2\sigma_1^2$. The Drude numerical OU sweep in Sec.~\ref{sec:numerical} keeps $\tau$ fixed while changing its variance. Equation~\eqref{eq:app-Drude-continuous-rate} makes no claim about that separate parameter family. Theorem~\ref{thm:fock-convergence} controls all field sectors through the additional uniform tail estimate. The fixed-sector rate bounds alone do not give a full-process rate, a complete-record total-variation bound uniform over arbitrary instruments for unbounded baths, or continuum operator CMI convergence.

%% file: appendix_cayley.tex
\section{Bounded Cayley paths for fixed Pauli refresh}
\label{app:cayley}

This appendix gives a bounded-control realization of fixed refresh. The random variables remain the independently sampled Pauli targets at the prescribed nodes; only the instantaneous jump between targets is replaced by a path generated by bounded Hamiltonians. This distinction is relevant for finite pulse amplitude or bandwidth and connects the construction to bounded-control dynamical decoupling and geometric control \cite{violaKnill2003,chen2006}. The random-walk picture follows the group-valued control framework of random decoupling \cite{viola2005}; CPRC additionally imposes pointwise covariance on every finite-duration instrument.

\subsection{Target frames and the Cayley graph}
Let $\mathcal P=\{I,X,Y,Z\}$ denote Pauli frames modulo global phase. Divide $[0,T]$ into $K$ equal intervals, with nodes $b_j=jh$ and $h=T/K$. Draw
\begin{equation}
 G_0,G_1,\ldots,G_{K-1}\quad\text{independently and uniformly from }\mathcal P,
 \label{eq:cayley-targets}
\end{equation}
where the last target frame is held through the final interval. Consecutive draws are allowed to coincide. The target frame on the holding interval is $G_j$; hence the target-frame process is exactly the fixed-refresh process used in the interval classicalization proof.

The physical transition is specified on the Cayley graph generated by $X$ and $Z$. For a relative transformation $Q=G_{j+1}G_j$ (global phases ignored), let
\begin{equation}
 Q=L_{j,m}\cdots L_{j,1},\qquad L_{j,r}\in\{X,Z\},
 \label{eq:cayley-word}
\end{equation}
be a shortest word. Thus $Q=I$ has length zero, $Q=X$ or $Z$ has length one, and $Q=Y$ is represented by two edges. An edge labelled $L$ is generated by a bounded Hamiltonian proportional to the corresponding Pauli operator. Choosing the pulse duration $\delta_{j,r}$ so that
\begin{equation}
 \exp[-i\,\vartheta_{j,r}L_{j,r}/2]
 \quad\text{implements the required edge,}\qquad
 |\vartheta_{j,r}|\le\pi,
 \label{eq:cayley-edge}
\end{equation}
gives a continuous path from $G_j$ to $G_{j+1}$. The required edge amplitude depends on the available duration and is not held fixed as $K$ changes. Increasing $K$ therefore refines the node schedule while increasing the required control resource; the sequence describes a control-speed refinement.

For a path $V_j(t)$ inside interval $j$, define the implemented frame by
\begin{equation}
 U(t)=V_j(t)G_j,\qquad b_j\le t<b_{j+1},
 \label{eq:cayley-frame}
\end{equation}
where $V_j(b_j)=I$ and $V_j(b_{j+1})=G_{j+1}G_j$ whenever $j<K-1$. The instrument Hamiltonian is transformed pointwise according to Eq.~\eqref{eq:instrument-hamiltonian-covariance}. Therefore the path changes the physical control trajectory but does not change the uncoupled logical instrument on any individual trajectory, by Proposition~\ref{prop:transparency}.

\subsection{Full-interval node-matched schedule}
The Cayley word is traversed uniformly over $[b_j,b_{j+1}]$; the last target frame is held through the final interval. If $n_j$ is the shortest-word length, each edge lasts $h/n_j$. The associated peak angular velocity is proportional to $n_j/h$. Thus the path reaches the independently sampled Pauli target exactly at each node. Completing a nontrivial edge in time $h=T/K$ requires increasing control speed, so increasing $K$ implements a control-speed refinement.

\subsection{Propagation and error accounting}
On a Cayley edge, distinguish the laboratory Hamiltonian from its logical-frame generator. Tensor products with identities on untouched spaces are implicit:
\begin{equation}
 \begin{aligned}
 H_{\rm lab}(t)&=H_{\rm ctrl}(t)+H_E(t)+H_{SE}(t)\\
 &\quad+U(t)H_{\rm inst}^{\rm log}(t)U^\dagger(t),\\
 H_{\rm log}(t)&=H_E(t)+U^\dagger(t)H_{SE}(t)U(t)\\
 &\quad+H_{\rm inst}^{\rm log}(t).
 \end{aligned}
 \label{eq:cayley-frame-generators}
\end{equation}
The control generator cancels in the logical frame. Because the Cayley frame, and possibly the environment and coupling, vary during an edge, its exact joint propagator is
\begin{equation}
\mathcal U_{j,r}^{\rm log}=\mathcal T\exp\!\left[-i\int_{t_{j,r}}^{t_{j,r+1}}
 H_{\rm log}(t)\,dt\right],
 \label{eq:cayley-propagator}
\end{equation}
and the complete interval propagator is the ordered product of the edge and instrument pieces. In the finite-dimensional Cayley simulations, each edge is subdivided at control and instrument boundaries, $t_*=(a+b)/2$ is used on a substep $[a,b]$, and $e^{-iH_{\rm log}(t_*)(b-a)}$ approximates Eq.~\eqref{eq:cayley-propagator}. For a Hamiltonian Lipschitz with constant $L$ on that substep, the unitary Duhamel estimate bounds its local operator-norm error by $L(b-a)^2/4$; without this regularity the exact equation remains the time-ordered one. Edge-substep refinements quantify the local numerical approximation at the stated control resolution. The environment is carried through every substep; it is neither reset at a node nor averaged separately on each edge.

The exact interval classicalization identity of Eq.~\eqref{eq:pauli-diagonal} applies to the target-frame construction when a whole interval is represented by one constant-frame conjugation. A Cayley path generally violates that premise during the transition window. To compare it with the ideal jump, factor out the intended edge rotation and write the residual Hamiltonian as $H_{\rm err}(t)$. If $\|H_{\rm err}(t)\|\le g_{\rm p}$ on the union $P$ of transition supports, the Duhamel estimate gives
\begin{equation}
 \|p_{\rm Cayley}-p_{\rm jump}\|_{\rm TV}
 \le 2\int_P\!dt\,\|H_{\rm err}(t)\|
 \le 2g_{\rm p}|P|,
 \label{eq:cayley-pulse-bound}
\end{equation}
which is the fixed-refresh specialization of Eq.~\eqref{eq:app-pulse-error}. For $N_{\rm p}$ nonoverlapping transitions of width at most $\delta$, $|P|\le N_{\rm p}\delta$. Thus the bounded path is physically meaningful only when the cumulative residual action is within the allowed output error, in addition to the single-edge condition $g_{\rm p}\delta\ll1$. The bath correlation-time requirement $\delta\ll\tau_E$ helps justify the piecewise holding picture but does not replace Eq.~\eqref{eq:cayley-pulse-bound}.

Pathwise transparency is stronger than this comparison. When the instrument is covariant throughout the transition, the ideal uncoupled logical output is unchanged exactly. Equation~\eqref{eq:cayley-pulse-bound} measures only the difference between a finite transition and the ideal instantaneous-refresh model caused by residual interaction or implementation error.

\subsection{Direct complete-experiment bound for full-interval paths}
The pulse-support comparison can be uninformative when each transition occupies an entire interval: then $|P|$ is of order $T$ even as $h=T/K$ decreases. A separate bound uses independence of the \emph{target draws}, rather than a small total transition duration. It applies to the full-interval path for bounded, centered interactions and the pointwise covariant instrument class of Theorem~\ref{thm:complete-experiment}.

For $j<K-1$, use the full state $z_t=(j,u_t,G_j,G_{j+1})$, where $u_t=(t-b_j)/h\in[0,1)$ is the deterministic position on the selected Cayley word. On the final holding cell use $z_t=(K-1,u_t,G_{K-1})$; no artificial terminal target is drawn. The state spaces $\mathsf Z_t$ are consequently time dependent. Let $\pi_t$ put unit mass on the current $(j,u_t)$ and give the target pair the uniform law on 16 ordered Pauli pairs, or the final target the uniform law on four frames. The shortest-word rule depends on the two targets and the phase only. For a traceless system coupling, averaging $U^\dagger S U$ over $\pi_t$ gives zero at every physical time. Indeed, $Q=G_{j+1}G_j^\dagger$ is independent of uniform $G_j$, and $U=V(Q,u_t)G_j$. The inner average over $G_j$ is the Pauli twirl of $V^\dagger S V$, which vanishes. The last cell has the ordinary Pauli twirl. A system-identity component can instead be incorporated into the common environment Hamiltonian.

Write $P_{s,t}:\mathsf Z_s\to\mathcal P(\mathsf Z_t)$ for the time-inhomogeneous transition kernel: it advances the deterministic phase, retains the current targets until the next node, then shifts the next target into the old-target position and independently draws the subsequent target. At entry to the terminal cell it retains only $G_{K-1}$. Uniform independent target draws imply the consistency relation $\pi_sP_{s,t}=\pi_t$. Conditional on a complete old pair, the next pair retains one known frame and redraws the other, so its total-variation distance from $\pi_t$ is $3/4$. After two cell crossings the new pair has no target draw in common with the old pair and has law $\pi_t$; its distance is zero. Within one nonterminal cell the old pair is fixed and its distance from $\pi_t$ is $15/16$. In the terminal cell the corresponding distance is $3/4$. Consequently an average-mixing envelope valid for every cell is
\begin{equation}
 b_{\rm Cayley}(s,t)\le
 \begin{cases}
  15/16,&s,t\text{ in one nonterminal cell},\\
  3/4,&s,t\text{ in the terminal cell},\\
  3/4,&s,t\text{ in adjacent cells},\\
  0,&\text{at least two nodes lie strictly between }s\text{ and }t.
 \end{cases}
 \label{eq:cayley-two-frame-mixing}
\end{equation}
At cell boundaries the value assigned to a single time point has no effect on the integral. This envelope concerns the hidden pair of targets, not the marginal distribution of the physical frame at one time.

For clarity, the nonstationary operator-kernel statement can be made on its precise spaces. Let $\mathcal X$ be the trace class of the joint system, bath, instrument memory, and complete classical record. At fixed $s<t$, suppress the deterministic phase coordinates and denote the finite frame-label spaces by $\mathsf S_s,\mathsf S_t$. For $f\in L^\infty(\pi_s;\mathcal X)$, define the conditional-state transfer and its constant projection on every $z'\in\mathsf S_t$ with $\pi_t(z')>0$ by
\begin{equation}
 (\mathcal K_{t,s}f)(z')=
 \frac{\sum_{z\in\mathsf S_s}\pi_s(z)P_{s,t}(z,z')f(z)}{\pi_t(z')},
 \qquad
 (\mathsf P_{t\leftarrow s}f)(z')=
 \sum_{z\in\mathsf S_s}\pi_s(z)f(z).
 \label{eq:cayley-kernel-definition}
\end{equation}
These maps take $L^\infty(\pi_s;\mathcal X)$ to $L^1(\pi_t;\mathcal X)$. The consistency relation above and the trace-norm triangle inequality give
\begin{equation}
 \|\mathcal K_{t,s}f-\mathsf P_{t\leftarrow s}f\|_{\pi_t,1}
 \le 2b_{\rm Cayley}(s,t)\|f\|_{\infty,1},\qquad
 b_{\rm Cayley}(s,t)=\int\pi_s(dz)\|P_{s,t}(z,\cdot)-\pi_t\|_{\rm TV}.
 \label{eq:cayley-time-kernel}
\end{equation}
The deterministic phase causes no singularity: both conditional and reference laws have the same phase coordinate at time $t$. Explicitly, multiply the first difference in Eq.~\eqref{eq:cayley-kernel-definition} by $\pi_t(z')$, sum its trace norm over $z'$, exchange the finite sums, and use $\sum_{z'}|P_{s,t}(z,z')-\pi_t(z')|=2\|P_{s,t}(z,\cdot)-\pi_t\|_{\rm TV}$. This proves Eq.~\eqref{eq:cayley-time-kernel} without a stationary law. Pointwise centering, autonomous target selection and control-independent logical instruments then give the same Duhamel complete-record estimate as Eq.~\eqref{eq:general-bound}, with $b_{\rm Cayley}$. The $K-1$ nonterminal same-cell triangles each have area $h^2/2$ and weight $15/16$; the final triangle has weight $3/4$. The $K-1$ adjacent-cell rectangles each have area $h^2$ and weight $3/4$.
For a prescribed coupling envelope $g(t)$, put $a_j=\int_{b_j}^{b_{j+1}}g(t)dt$. Symmetry of $g(t)g(s)$ makes each same-cell triangle one half of $a_j^2$, while an adjacent-cell rectangle gives $a_ja_{j+1}$. The weighted bound is therefore
\begin{equation}
 \epsilon^{\rm Cayley}_{n_{\rm inst}}(T)\le\min\!\left\{1,4\left[
 \frac{15}{32}\sum_{j=0}^{K-2}a_j^2
 +\frac38a_{K-1}^2
 +\frac34\sum_{j=0}^{K-2}a_ja_{j+1}\right]\right\}.
 \label{eq:cayley-weighted-bound}
\end{equation}
The target-frame law and its pointwise centering are unchanged for deterministic $H_{SE}(t)$; a prescribed $H_E(t)$ enters the common ideal propagator. With $g(t)=g_*$ and equal cells, Eq.~\eqref{eq:cayley-weighted-bound} reduces to the static result
\begin{equation}
 \epsilon^{\rm Cayley}_{n_{\rm inst}}(T)
 \le\min\!\left\{1,\frac{39}{8}g_*^2Th-\frac{27}{8}g_*^2h^2\right\}.
 \label{eq:cayley-direct-bound}
\end{equation}
The bound is a sufficient upper bound obtained from the exact terminal-cell mixing distance and triangle inequalities. At $K=1$ it reduces to the single static-frame bound $\min\{1,(3/2)g_*^2T^2\}$. Completing a nontrivial edge in time $h$ requires speed proportional to $1/h$, so the sequence is interpreted as a control-speed refinement. A transition-containing interval also remains distinct from a Pauli-diagonal error-history slot. The Drude coefficient deviations are separate observables from the complete-record distance bounded here.

\subsection{Numerical implementation}
\label{app-cayley-numerical}
The node-matched calculation uses the qubit-environment model of Sec.~\ref{sec:numerical}, with $T=1.2$, $H_E=(1.7/2)\sigma_z^{(E)}$, and $H_{SE}=0.8\sigma_z^{(S)}\otimes\sigma_x^{(E)}$. We use 500 trajectories at each $K$ in the scan $\{1,6,24,100,200,500,768\}$. The paired comparisons use common target frames. Each Cayley edge is subdivided into 12 propagation steps; the reported operational values also have an exact finite-frame ensemble check. The logical experiments are the $H$--idle--$R_y(\pi/4)$ task and Bell preparation with a shared environment qubit, both shown in Fig.~\ref{fig:operational}.

For the operator CMI, the same complete target-frame paths are used in both averaging orders, with 80 temporal blocks and field rank 128. The two-particle calculation enumerates all 225 label pairs with the Drude--Matsubara history tensor. The original uniform-grid Cayley Drude calculation covers $K=1,6,24$; a separate event-aligned full-interval extension reaches $K=100,200,768$ with the system-leg resolution stated below. The operational and CMI scans use the same target-frame paths and are reported alongside the particle-sector calculation.

At $K=768$, the exact frame-ensemble $H$--idle--$R_y$ probabilities are $0.147164785$ for the full-interval Cayley path and $0.147082152$ for instantaneous refresh. The data show a small transition correction consistent with the finite-duration path construction.

\subsection{Drude-bath two-particle amplitudes}
The particle-sector calculation uses a separate thermal Gaussian bath with $\beta=5$, reorganization energy $\Lambda=1$, bath parameter $\gamma_B=1$, and coupling $g=1.5$, with no additional ultraviolet cutoff in the target Drude spectrum. The insertion times are $(t_1,t_2)=(1/3,2/3)$. For every trajectory we contract the complete bath history and all 225 ordered pairs of nonvacuum operation labels. We average the complex amplitudes over 500 target-frame paths before subtracting the uncoupled coefficient. Writing this difference as $\Delta C_{\nu_2\nu_1}$, the reported mean absolute and RMS deviations are
\begin{equation}
 d_1=\frac{1}{225}\sum_{\nu_1,\nu_2}|\Delta C_{\nu_2\nu_1}|,
 \qquad
 d_2=\left(\frac{1}{225}\sum_{\nu_1,\nu_2}|\Delta C_{\nu_2\nu_1}|^2\right)^{1/2}.
 \label{eq:cayley-particle-metrics}
\end{equation}
These are deviations of complex cPT coefficients, not output probabilities or operational distances.

The full-interval values from the original 384-cell bath grid are $d_1=0.188936$, $0.116848$, and $0.055611$ at $K=1,6,24$, respectively. The event-aligned contraction gives the closely corresponding $P=4$ values $0.188931$, $0.116862$, and $0.055636$; these are the low-$K$ points in Fig.~\ref{fig:drude-particles-main}. The two calculations have different temporal quadratures, so their small differences are retained as a convergence check rather than averaged away.

The Drude correlation used 2048 Matsubara terms. The original history tensor used 384 time cells, truncation tolerance $10^{-7}$, and maximum bond rank 64. Paired calculations at 192 cells, tolerance $10^{-6}$, and maximum rank 48 changed $d_1$ by at most about $9.2\times10^{-5}$ over the three full-interval cases. The paired calculation provides a sensitivity check for the reported full-interval values. A uniform 384-cell grid does not resolve each edge at high $K$; the event-aligned extension separates system and bath quadratures and is reported below with its corresponding resolution.

The Cayley construction provides a controlled physical realization of the fixed-refresh schedule and resolves the finite-duration path dependence.

\subsection{Node-matched validation and high-$K$ Drude extension}
The three-qubit Bell calculation uses two independent local Pauli target frames and one shared environment qubit. The laboratory gate Hamiltonians are covariant pointwise during a duration-$0.6$ Hadamard and a duration-$0.6$ CNOT; the final logical readout is the $|\Phi^+\rangle$ projector. At zero system--bath coupling the largest pathwise deviation from unit Bell fidelity was below $10^{-11}$. Figure~\ref{fig:operational} shows all seven matched $K$ values for the full-interval Cayley path and instantaneous refresh. The target frames and shared bath are identical in each paired comparison.

The event-aligned full-interval Drude calculation extends the mean absolute coefficient deviation over 225 labels to $K=100,200,768$. Against matched instantaneous fixed-refresh values, the pairs $(d_1^{\rm fixed},d_1^{\rm Cayley})$ are $(0.018438,0.021103)$, $(0.010857,0.012135)$, and $(0.003477,0.003997)$, respectively. These points appear in Fig.~\ref{fig:drude-particles-main}. The $K=100$ Cayley point uses four system quadrature legs per transition; the $K=200$ and $768$ points use two. The sequence decreases with $K$, and the displayed system-leg resolution is specified for reproducibility.

%% file: appendix_poisson.tex
\section{Poisson refresh: event schedules, bounds and numerical results}
\label{app:poisson}
This appendix gives the event-time extension of the fixed-node protocol. It is included for completeness, while the main text uses fixed refresh and its bounded Cayley realization as the scheduled-node comparison.

\subsection{Ordered event schedules}
Let $N_T$ be a Poisson random variable of mean $\lambda_rT$. Conditional on
$N_T=m$, the ordered event times satisfy $0<\tau_1<\cdots<\tau_m<T$ and have
conditional joint density $m!/T^m$ on the ordered simplex. Multiplying by
$\Pr\{N_T=m\}=e^{-\lambda_rT}(\lambda_rT)^m/m!$ gives the joint (Janossy)
density $e^{-\lambda_rT}\lambda_r^m$ used below. If $y_0,\ldots,y_m$ are
independent Pauli characters, the complete Gaussian influence average is
\begin{equation}
 \overline{\mathcal F}^{\rm P}[s,r]=e^{-\lambda_rT}\sum_{m=0}^{\infty}\lambda_r^m\int_{0<\tau_1<\cdots<\tau_m<T}d^m\boldsymbol\tau\,2^{-(m+1)}\sum_{y_0,\ldots,y_m=\pm1}\mathcal F_B[y\lambda_s,y\lambda_r].
 \label{eq:poisson-complete-average}
\end{equation}
The particle insertion times remain fixed. For each sampled schedule the time axis is split at the union of event and insertion times; all bath links across those pieces are retained before the final environment contraction.

\subsection{Conditional classicalization and residual correlations}
Under the factorized, frame-independent initial-state assumption used for
interval classicalization, a fixed event schedule has independent frames on
the realized holding intervals, and the same character identity as
Eq.~\eqref{eq:pauli-diagonal} gives a diagonal error-history mixture at the
event-aligned boundaries. Averaging the schedule does not turn arbitrary
preassigned cuts into independent slots. For a nontrivial character $x(t)$,
\begin{equation}
 \mathbb E[x(t_2)x(t_1)]=\Pr\{N([t_1,t_2])=0\}=e^{-\lambda_r(t_2-t_1)}.
 \label{eq:poisson-character-correlation}
\end{equation}
Thus Poisson refresh is conditionally classical at its random boundaries but retains a finite-time correlation after the event times are averaged.

\subsection{Register representation and physicality}
Before eliminating the environment, introduce a finite classical register $G$ for the current Pauli frame. If $Q_G$ is the complete cMPS generator with $G$ held fixed, the joint register--process generator is
\begin{equation}
 (\mathbb Q_{\rm P})_{GG'}=\delta_{GG'}Q_G+\lambda_r(M^{-1}-\delta_{GG'})\id_{SE},\qquad M=4.
 \label{eq:poisson-register-generator}
\end{equation}
The register jumps are classical and leave the environment state untouched. Every finite schedule is a valid causal dilation, and the event-time average is a convex combination of such processes. Positivity, causality and finite-comb recovery therefore follow from the same trajectory argument as Proposition~\ref{prop:averaged-cpt}. Truncating the event sum at $m_{max}$ introduces the unnormalized tail bound
\begin{equation}
 |\overline{\mathcal C}-\overline{\mathcal C}_{m\le m_{\max}}|\le \Pr\{\operatorname{Pois}(\lambda_rT)>m_{\max}\}.
 \label{eq:poisson-tail}
\end{equation}

\subsection{Operational and particle-amplitude results}
The Poisson Bell experiment used the same two logical qubits, shared environment qubit, $T=1.2$, $g=0.8$, and 500 trajectories as the fixed-node and Cayley scans. The Bell infidelity was $0.39630$, $0.20392$, $0.07677$, $0.02282$, $0.00579$, and $0.00268$ for rates $\lambda_r=0,5,20,80,320,640$, respectively. These values are an event-time average and should not be read as a fixed-node result at $K=\lambda_rT$.

For the Drude two-particle coefficients at $(t_1,t_2)=(T/3,2T/3)$, the strongest Poisson point used rate $\lambda_r=512$ and gave mean absolute deviation $0.00830$ over all 225 ordered operation-label pairs at those times. This is not a numerical integration over the continuous two-particle sector. The fixed-sector bound applies after conditioning on the schedule and then averaging its random maximum holding interval; it does not assert a finite-time exact independence at preassigned cuts.

\subsection{Finite pulse duration}
If each event is implemented by a pulse of width $\delta$, with residual Hamiltonian bounded by $g_{\rm p}$, the ensemble comparison acquires the conservative correction
\begin{equation}
 \epsilon^{\rm P,pulse}(T)\le \min\left\{1,\frac{3g_*^2T}{\lambda_r}+2g_{\rm p}\delta(\lambda_rT+1)+p_{\rm ov}\right\},
 \label{eq:poisson-pulse-bound}
\end{equation}
where $p_{\rm ov}$ accounts for schedules with overlapping nonzero-width pulses. The event-time rate and pulse-width limits therefore cannot be optimized independently.

\subsection{Separate Poisson figures}
The following plots isolate the event-time protocol. Their control-rate axes must not be identified with the prescribed-node count or the continuous-drive strength parameters in the main-text figures. The single-qubit curve in Fig.~\ref{fig:poisson-operational}(a) uses the exact event-and-frame ensemble contraction; the Bell, CMI, and particle-coefficient results use sampled complete Poisson trajectories. The remaining protocols are shown in the main text.

\begin{figure}[t]
\includegraphics[width=\linewidth]{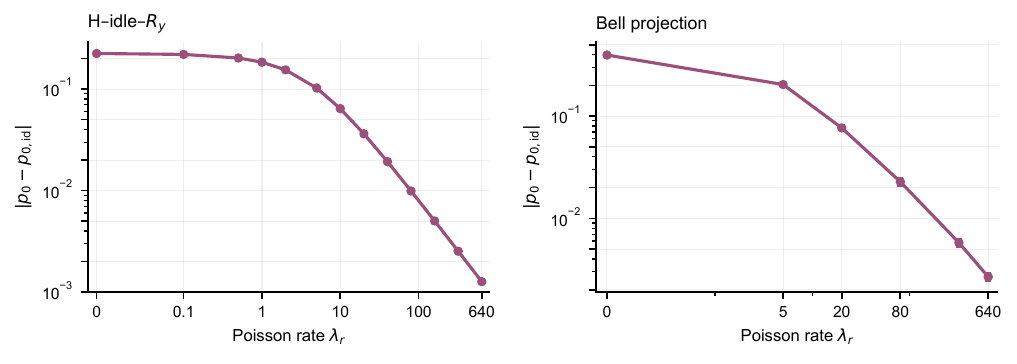}
\caption{Poisson event-time operational recovery for the $H$--idle--$R_y$ task (left, 13 sampled rates) and Bell projection with a shared environment qubit (right, six sampled rates). The ordinate is $|p_0-p_{0,\mathrm{id}}|$. The left curve uses the exact event-and-frame ensemble contraction and has no sampling bars. The Bell points average 500 trajectories; their bars show pointwise 95\% sampling intervals and exclude time-grid error. Zero rate is a static random frame, not the deterministic uncompiled process.}
\label{fig:poisson-operational}
\end{figure}

At $\lambda_r=640$, the 500-trajectory estimate for the first task gives $p_0-p_{0,\mathrm{id}}=3.93\times10^{-4}$ with a 95\% half-width of $8.08\times10^{-4}$. That sampled interval includes zero, so its absolute value does not have a positive lower confidence limit. The independent event-and-frame ensemble contraction gives $|p_0-p_{0,\mathrm{id}}|=1.27\times10^{-3}$; Fig.~\ref{fig:poisson-operational}(a) uses these exact-ensemble values at every rate. The sampled values remain available as a numerical cross-check, without being interpreted as a resolved residual at the strongest rate.

\begin{figure}[t]
\includegraphics[width=\linewidth]{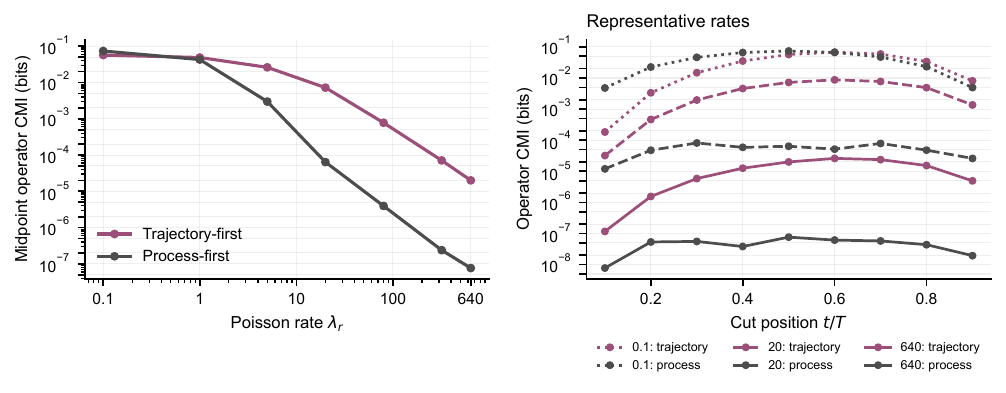}
\caption{Poisson event-time operator CMI. The left panel shows both averaging orders at the midpoint for all seven calculated rates, including four additional 500-trajectory scans. The right panel shows nine cut positions at three representative rates ($0.1$, $20$, and $640$); line style identifies rate and color identifies averaging order. These values use the same 80-block, rank-128 physical opening as the main-text CMI figures.}
\label{fig:poisson-cmi}
\end{figure}

\begin{figure}[t]
\includegraphics[width=0.65\linewidth]{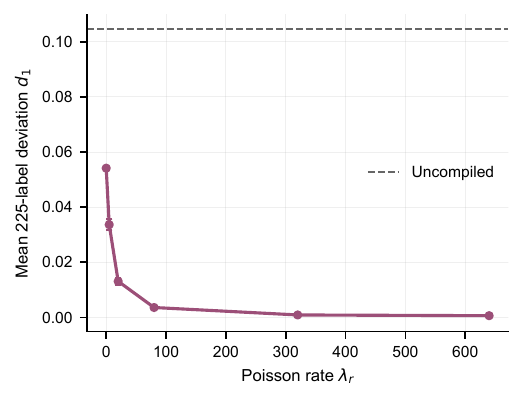}
\caption{Poisson qubit-environment two-particle coefficients at $(t_1,t_2)=(T/3,2T/3)$. Each point is the mean modulus difference between the averaged complex amplitude and its zero-coupling value over all 225 operation-label pairs at those times; the label enumeration does not integrate over the continuous insertion-time domain. The dashed line gives the deterministic uncompiled $U(t)=\id$, $g=0.8$ baseline. Bars are pointwise 95\% bootstrap sampling intervals over 500 paths.}
\label{fig:poisson-qubit-particles}
\end{figure}

\begin{figure}[t]
\includegraphics[width=0.65\linewidth]{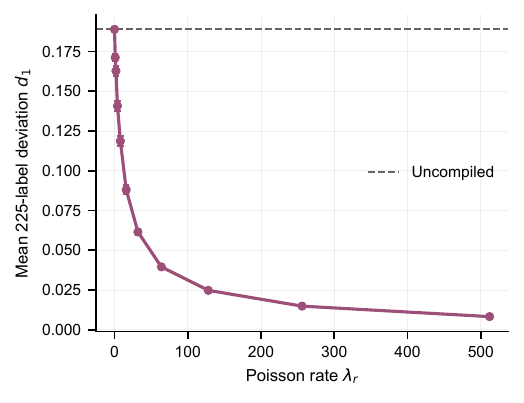}
\caption{Poisson Drude-bath two-particle coefficients at $(t_1,t_2)=(T/3,2T/3)$. The plotted mean absolute deviation uses all 225 complex operation-label coefficients at this time pair, rather than integrating over the continuous insertion-time domain. The dashed line is the deterministic uncompiled $U(t)=\id$, $g=1.5$ baseline. Bars are pointwise 95\% bootstrap sampling intervals from 500 complete event-time trajectories.}
\label{fig:poisson-drude-particles}
\end{figure}

\subsection{Pulse-support counting}
For Poisson refresh, let $m$ be the event count and $p_{\rm ov}$ the probability of a schedule that cannot be implemented by the specified nonoverlapping pulses. With at most $m+1$ pulse supports away from that event, the ensemble-averaged bound is
\begin{equation}
 \epsilon_{n_{\rm inst}}^{\rm Pois,pulse}(T)
 \le\min\!\left\{1,\frac{3g_*^2T}{\lambda_r}
 +2g_{\rm p}\delta(\lambda_rT+1)+p_{\rm ov}\right\}.
 \label{eq:poisson-finite-pulse-bound}
\end{equation}
The $+1$ allows an initial frame-setting pulse; a terminal inverse-frame pulse, if required by the readout, must also be counted. The Poisson expression is an ensemble-averaged sufficient bound. If pulses cannot overlap and events separated by less than $\delta$ are treated as exceptional, $p_{\rm ov}=O(\lambda_r^2T\delta+\lambda_r\delta)$ on a fixed interval, including possible boundary overlaps. The value of this term depends on the specified overlap-handling rule; dropping or delaying such events defines a different ideal Poisson protocol. Appendix~\ref{app:pulse-width} gives the underlying comparison.

\subsection{Rate-dependent bound and pulse optimum}
The complete-experiment mixing bound gives the event-time result
\begin{equation}
 \epsilon^{\rm P}(T)\le\min\{1,3g_*^2T/\lambda_r\},
 \label{eq:poisson-operational-bound}
\end{equation}
because the no-refresh probability is $e^{-\lambda_r(t-s)}$ and its triangular time integral is at most $T/\lambda_r$. This uses the complete trajectory average, not a reset of the environment at each event.

For a prescribed time-dependent coupling envelope $g\in L^1[0,T]$, the same kernel calculation gives
\begin{equation}
 \epsilon^{\rm P}(T)\le\min\!\left\{1,
 3\int_0^Tdt\int_0^t ds\,g(t)g(s)e^{-\lambda_r(t-s)}\right\}.
 \label{eq:poisson-weighted-bound}
\end{equation}
Young's inequality reduces this to $\epsilon^{\rm P}\le\min\{1,3\|g\|_2^2/\lambda_r\}$ when $g\in L^2[0,T]$. The static expression above follows on setting $g(t)=g_*$. The environment propagation may remain time ordered and is never reset at an event.

For an allowed additional pulse error $\eta$, the sufficient ensemble condition is $2g_{\rm p}\delta(\lambda_rT+1)+p_{\rm ov}\le\eta$, where $p_{\rm ov}$ is the probability of an event schedule that cannot be realized with nonoverlapping pulses. A small single-event action $g_{\rm p}\delta$ alone does not control the accumulated error. If the pulse correction must be smaller than the ideal $O(g_*^2T/\lambda_r)$ term, it is sufficient that $g_{\rm p}\lambda_r\delta=o(g_*^2/\lambda_r)$ and $p_{\rm ov}=o(g_*^2T/\lambda_r)$ at fixed $g_*$ and $T$. When the overlap term is controlled by the implementation, the conservative envelope has the formal stationary point $\lambda_r^*=\sqrt{3g_*^2/(2g_{\rm p}\delta)}$; the realized optimum is obtained by including the specified overlap-handling rule.

%% file: appendix_cmi.tex
\section{The two operator-CMI implementations}
\label{app:cmi}

\subsection{Interface, normalization, and the entropy used in the data}
The CMI calculation is a diagnostic on an opened continuous process, not the final POVM probability in Sec.~\ref{sec:numerical}. We use the normalized Pauli basis $p_\mu=\sigma_\mu/\sqrt2$ for a qubit, so $\Tr(p_\mu^\dagger p_\nu)=\delta_{\mu\nu}$. A nonvacuum field insertion is the unscaled left/right Pauli operation multiplied by the corresponding basis normalization, namely
\begin{equation}
 R_{\mu\mu'}[X]=(p_\mu\otimes\id_E)X(p_{\mu'}^\dagger\otimes\id_E),
 \qquad (\mu,\mu')\ne(0,0).
 \label{eq:app-cmi-R}
\end{equation}
This convention is specific to the operator entropy and is not the unnormalized Pauli convention used for the Drude amplitude plots. Mixing these scales changes the operator state and its entropy.

At a cut $t$, let $p$ and $f$ denote orthonormal past and future Fock coordinates. Let $i,j=1,\ldots,d_S^2$ be the two physical system Liouville indices at the opened interface and $e=1,\ldots,d_E^2$ the environment Liouville index. For one complete control trajectory, propagation from the initial boundary to the cut and from the cut to the terminal trace gives arrays
\begin{equation}
 A_{p;ie},\qquad B_{f;je}.
 \label{eq:app-cmi-AB}
\end{equation}
The closed process coefficient is the bilinear contraction
\begin{equation}
 M_{fp}=\sum_{i,j,e}B_{f;je}A_{p;ie}\delta_{ji}=(BA^T)_{fp}.
 \label{eq:app-cmi-M}
\end{equation}
The transpose is ordinary: a process amplitude is bilinear in past and future coefficients. Complex conjugation is introduced only when constructing a Gram matrix or the final density operator. To open the physical system while keeping the environment connection closed, define
\begin{equation}
 E_{ji}=|j\rangle\langle i|_{S,\mathrm L}\otimes\id_{E,\mathrm L},
 \qquad Z^{ij}=BE_{ji}A^T,
 \quad Z^{ij}_{fp}=\sum_eB_{f;je}A_{p;ie}.
 \label{eq:app-cmi-Z}
\end{equation}
The same $e$ is contracted across the cut; $i,j$ remain open. This is an opening of the physical system port, not a separate environment trace and not a bath reset. Summing $i=j$ recovers $M$ and is a direct closure check.

Stacking the $Z^{ij}$ gives a vector $|\zeta\rangle_{PSF}$. We normalize it globally, $|\widehat\zeta\rangle=|\zeta\rangle/\sqrt{\langle\zeta|\zeta\rangle}$, form $\rho_{PSF}=|\widehat\zeta\rangle\langle\widehat\zeta|$, and use
\begin{equation}
 \mathcal N_{\rm op}(t)=S(P S)+S(S F)-S(S)-S(P S F),
 \label{eq:app-cmi-definition}
\end{equation}
with base-two entropy $S(\rho)=-\Tr(\rho\log_2\rho)$. Since the normalized operator vector is pure, $S(PSF)=0$, $S(PS)=S(F)$, and $S(SF)=S(P)$, but the four-term expression fixes the convention and remains the implementation definition.

\subsection{First implementation: physical interface and Gram propagation}
For the qubit model use $H_E=(1.7/2)\sigma_z^{(E)}$, $H_{SE}=0.8\sigma_z^{(S)}\otimes\sigma_x^{(E)}$, initial $|0\rangle_S|+\rangle_E$, and no $X/H$ logical instruments or terminal POVM in this diagnostic. The process has its trace boundary, and the system port is opened only at the chosen cut. For one control trajectory define the Liouville generator
\begin{equation}
 Q_\omega(t)|X\rangle\rangle=|-i[H_\omega(t),X]\rangle\rangle,
 \label{eq:app-cmi-Q}
\end{equation}
and use the two $16\times16$ auxiliary Gram matrices $r$ and $l$ obeying
\begin{align}
 \dot r&=Qr+rQ^\dagger+\sum_\nu R_\nu rR_\nu^\dagger,
 &r(0)&=|\rho_{SE}\rangle\rangle\langle\langle\rho_{SE}|,\\
 -\dot l&=Q^\dagger l+lQ+\sum_\nu R_\nu^\dagger lR_\nu,
 &l(T)&=|\id_{SE}\rangle\rangle\langle\langle\id_{SE}|.
 \label{eq:app-cmi-gram}
\end{align}
The two equations are forward and backward contractions of the same full cMPS. They do not trace the environment at the cut. For a grid with pure-emission blocks, the sum over arbitrary particle number is performed analytically by the linear emission transfer. A convenient representation is $A=(\sqrt r)^T$, $B=\sqrt l$, so $BA^T=\sqrt l\sqrt r$; only the field support used by the numerical compression is finite. The pure-emission auxiliary map can be stabilized by subtracting its common scalar decay, which cancels after the final normalization. This is a computational gauge, not an additional physical depolarizing channel.

For each cut construct $Z^{ij}$ from Eq.~\eqref{eq:app-cmi-Z}, normalize the stacked vector, build the three reduced matrices by ordinary partial traces, and diagonalize them. For example,
\begin{align}
 (\rho_P)_{pp'}&=\sum_{i,j,f}\widehat z_{pijf}\widehat z_{p'ijf}^*,\\
 (\rho_F)_{ff'}&=\sum_{p,i,j}\widehat z_{pijf}\widehat z_{pijf'}^*,\\
 (\rho_S)_{ij,i'j'}&=\sum_{p,f}\widehat z_{pijf}\widehat z_{pi'j'f}^*.
 \label{eq:app-cmi-reductions}
\end{align}
This is how the infinite particle hierarchy is handled: it is encoded in the past and future Gram contractions and never enumerated as a finite particle cutoff. Numerical errors instead come from time blocks, field-rank compression, propagation tolerances, and trajectory sampling.

For an average process, this calculation cannot be performed by separately normalizing each trajectory and then mixing the resulting density matrices if the desired object is $\mathcal N_{\rm op}(\E_\omega[\Upsilon_\omega])$. All trajectories must use one common orthonormal past and future field basis. With $w_\omega=1/N_{\rm tr}$,
\begin{align}
 \overline M_{fp}&=\sum_\omega w_\omega\sum_{i,e}B_{\omega,f;ie}A_{\omega,p;ie},\\
 \overline Z^{ij}_{fp}&=\sum_\omega w_\omega\sum_eB_{\omega,f;je}A_{\omega,p;ie}.
 \label{eq:app-cmi-average}
\end{align}
Equivalently retain a hidden index $h=(\omega,e)$, put $w_\omega$ on the past tensor once, and contract the hidden index only after forming the averaged open vector. The outer product of this one averaged vector contains cross-trajectory terms. Dropping them computes a different classical mixture.

\subsection{Second implementation: central Schmidt coordinates}
The independent coordinate check begins with the same $r,l,A,B$ and closed matrix $M=BA^T$. Perform the full SVD
\begin{equation}
 M=W\Sigma V^\dagger.
 \label{eq:app-cmi-svd}
\end{equation}
Because the closed contraction is bilinear, the transformed arrays are
\begin{equation}
 A_c=V^TA,\qquad B_c=W^\dagger B,\qquad B_cA_c^T=\Sigma,
 \label{eq:app-cmi-central-arrays}
\end{equation}
and the physical openings transform as
\begin{equation}
 Z_c^{ij}=B_cE_{ji}A_c^T=W^\dagger Z^{ij}V.
 \label{eq:app-cmi-central-open}
\end{equation}
Thus $W$ and $V$ change only orthonormal field coordinates; carrying the same physical $E_{ji}$ proves equality of the two implementations up to numerical precision. In the 500-trajectory calculation, the largest single-trajectory discrepancy over nine cuts was $1.03\times10^{-13}$ bit. The largest averaged-process discrepancy was $5.51\times10^{-14}$ bit.

The center construction of Ref.~\cite{wassnerCPT} distributes square roots of the Schmidt coefficients to the two sides. If $r,l$ are full rank, define the balancing map $L_{\rm bal}=\sqrt r\,V\Sigma^{-1/2}$. Then
\begin{equation}
 L_{\rm bal}^{-1}rL_{\rm bal}^{-\dagger}=\Sigma,\qquad L_{\rm bal}^\dagger lL_{\rm bal}=\Sigma,\qquad
 Z_c^{ij}=\sqrt\Sigma\,(L_{\rm bal}^{-1}E_{ji}L_{\rm bal})\sqrt\Sigma.
 \label{eq:app-cmi-gauge}
\end{equation}
The map $L_{\rm bal}$ is generally not unitary and is distinct from a random frame. Its induced metric is $H_{\rm bal}=L_{\rm bal}^\dagger L_{\rm bal}$ and the physical adjoint is $X^\sharp=H_{\rm bal}^{-1}X^\dagger H_{\rm bal}$. Therefore one cannot reshape a Schmidt index into a $4\times4$ pair and call the two positions system and environment. The physical interface must be transported as $L_{\rm bal}^{-1}E_{ji}L_{\rm bal}$, or the calculation must remain in the original physical coordinates. Rank-deficient sectors are handled with the full SVD and no unsupported inverse of $\Sigma$.

This gives a precise coordinate test for the two implementations. If $V$ and $W$ are full unitary field-basis changes, then $Z_c^{ij}=W^\dagger Z^{ij}V$ preserves the norm of the stacked opened vector. Its reduced states are related to the original ones by local unitaries on $P$ and $F$, so every entropy in Eq.~\eqref{eq:app-cmi-definition} is unchanged. Under a nonunitary central balancing, the same conclusion requires transporting both the matrix units $E_{ji}$ and the inner product induced by $H_{\rm bal}$; using an ordinary Euclidean adjoint after transporting only the arrays is not this calculation. Closure $\sum_iZ^{ii}=M$ checks the trace boundary, but alone does not identify the physical port because other openings can have the same closure.

There is a stronger algebraic check when the cut has full rank. The transported physical system algebra is $\mathcal A_S^c=L_{\rm bal}^{-1}[M_{d_S^2}(\mathbb C)\otimes\id_{E,\mathrm L}]L_{\rm bal}$, and its commutant is the similarly transported environment algebra. A unitary relabeling of central coordinates is available when this algebra is closed under the ordinary adjoint. In the uncompiled midpoint audit, the ordinary-adjoint closure residual is $1.84\times10^{-16}$ in physical coordinates and $0.436$ in the selected balanced coordinates. The full port-transport calculation keeps the induced metric and the transported matrix units, and Algorithms~\ref{alg:cmi-physical} and \ref{alg:cmi-central} then agree to numerical precision.

\subsection{Algorithm I: CMI from the physical system interface}
\label{alg:cmi-physical}
This procedure starts from the physical $S\otimes E$ auxiliary space, whose qubit Liouville dimensions are $4\times4$. It computes both averaging orders in Eq.~\eqref{eq:cmi-orders}; those orders are distinct outputs of one implementation, not two ways of constructing the interface.
\begin{enumerate}
\item \textbf{Input.} Fix $\rho_{SE}$, $H_E$, $H_{SE}$, $T$, the chosen cuts, a control protocol and strength, trajectory weights $w_\omega$ summing to one, the normalized field basis $p_\mu=\sigma_\mu/\sqrt2$, a time-block discretization, and a field-support tolerance. Use the free cPT: do not insert the finite $X/H$ test instruments or the final measured POVM. The boundary at $T$ is the full joint trace.
\item \textbf{Complete paths and emissions.} Draw each control as in Appendix~\ref{alg:control-paths}: fixed holding frames, Poisson frames and actual event times, Brownian frame paths, or joint OU frame--velocity paths. For each path construct its joint Liouville generator $Q_\omega(t)$ and the 15 nonvacuum emission maps $R_{\mu\mu'}$ in Eq.~\eqref{eq:app-cmi-R}. All trajectories at a cut use the same time blocks and orthonormal field basis.
\item \textbf{Past and future.} Propagate the past from $|\rho_{SE}\rangle\rangle$ and the future backward from $|\id_{SE}\rangle\rangle$ using Eq.~\eqref{eq:app-cmi-gram}; its emission transfer sums every particle number. For an individual path factor the positive Gram matrices as $A=(\sqrt r)^T$, $B=\sqrt l$. For the average, retain past and future coefficient banks in a \emph{common} orthonormal field basis before summing paths; separately chosen square-root gauges cannot be added as common coordinates. Compress only the field supports, retaining physical $i,j,e$ and the trajectory index.
\item \textbf{Physical opening.} Form $M_\omega=B_\omega A_\omega^T$ and all $Z_\omega^{ij}=B_\omega E_{ji}A_\omega^T$, and verify $\sum_iZ_\omega^{ii}=M_\omega$. The same environment index $e$ is contracted through the cut; system indices $i,j$ remain open. The transpose is ordinary, while conjugation enters in norms and density matrices.
\item \textbf{Trajectory-first result.} Stack $Z_\omega^{ij}$ into $z_{\omega,pijf}$, divide the whole vector by its Euclidean norm, form $\rho_P,\rho_F,\rho_S$ by Eq.~\eqref{eq:app-cmi-reductions}, and calculate $S(P)+S(F)-S(S)$ in bits. Average these scalars with $w_\omega$ to obtain $\mathcal N_{\rm traj}(t)$.
\item \textbf{Process-first result.} In the common field basis form $\overline Z^{ij}=\sum_\omega w_\omega Z_\omega^{ij}$ \emph{before} normalization. Equivalently place $w_\omega$ once on the past coefficient and retain a hidden $(\omega,e)$ index until the contraction in Eq.~\eqref{eq:app-cmi-average}. Normalize the single stacked $\overline Z$, form its reductions, and calculate $\mathcal N_{\rm avg}(t)$. This retains cross-trajectory terms; averaging normalized trajectory density matrices or Schmidt spectra does not.
\item \textbf{Checks.} Verify Gram positivity, open-to-closed reconstruction, normalization, and nonnegative CMI within numerical tolerance. Refine time blocks, field rank, propagation tolerance, and trajectory count separately, and compare identical cuts with Algorithm~\ref{alg:cmi-central}. Field compression sums the emission hierarchy; it is not a particle-number cutoff.
\end{enumerate}

\subsection{Algorithm II: central-Schmidt coordinate check}
\label{alg:cmi-central}
This calculation repeats both averaging orders after changing only the orthonormal coordinates of the past and future fields. It uses the same physical opening $E_{ji}$ as Algorithm~\ref{alg:cmi-physical}, so agreement tests the coordinate invariance of the operator-CMI construction.
\begin{enumerate}
\item \textbf{Input and closed matrix.} Reuse $A_\omega,B_\omega$ and the common-basis averaged arrays from Algorithm~\ref{alg:cmi-physical}, without intermediate normalization. Form $M=BA^T$ for one trajectory, or $\overline M=\sum_\omega w_\omega B_\omega A_\omega^T$ for the average. Future field coordinates index rows; past field coordinates index columns.
\item \textbf{Full central decomposition.} Compute $M=W\Sigma V^\dagger$ with full orthonormal bases, including the null space of the closed matrix. Put $A_c=V^TA$ and $B_c=W^\dagger B$. For the average use the hidden index $(\omega,e)$ and include $w_\omega$ on only one side before transforming. Verify $B_cA_c^T=\Sigma$ numerically.
\item \textbf{Transport the physical port.} Compute all $Z_c^{ij}=B_cE_{ji}A_c^T=W^\dagger Z^{ij}V$, then stack them. Do not reshape a Schmidt number into a system/environment pair. A closed-matrix null-space vector can contribute to an opened $Z^{ij}$; do not truncate it solely because its closed singular value is zero. If the full-rank balancing map $L_{\rm bal}$ is used, insert $L_{\rm bal}^{-1}E_{ji}L_{\rm bal}$ and its induced metric as in Eq.~\eqref{eq:app-cmi-gauge}, rather than an untransformed matrix unit.
\item \textbf{Entropy and comparison.} Normalize the entire central opened vector only after opening, take physical $P,S,F$ reductions, and evaluate the same base-two CMI. Repeat per trajectory and for the arithmetic process average. Compare each result to Algorithm~\ref{alg:cmi-physical}, the closed-matrix reconstruction, the opened norm, and weight outside the closed Schmidt support. A discrepancy signals a coordinate, transpose, port-transport, or truncation error.
\end{enumerate}

\subsection{Averaging orders and interface scope}
For each protocol and control strength we use 500 complete trajectories and nine cuts. The trajectory-first statistic is
\begin{equation}
 \mathcal N_{\rm traj}(t)=\frac1{500}\sum_{\omega=1}^{500}
 \mathcal N_{\rm op}(\Upsilon_\omega;t),
 \label{eq:app-cmi-traj}
\end{equation}
where each trajectory is normalized before its entropy is evaluated. The process-first statistic is
\begin{equation}
 \mathcal N_{\rm avg}(t)=\mathcal N_{\rm op}\!\left(\frac1{500}\sum_{\omega=1}^{500}\Upsilon_\omega;t\right),
 \label{eq:app-cmi-avg}
\end{equation}
where the common field basis and cross-trajectory terms in Eq.~\eqref{eq:app-cmi-average} are retained. The first is a mean of trajectory-wise operator diagnostics; the second is the same nonlinear diagnostic of the averaged physical process. The pure operator state used to compute either CMI is not the generally mixed physical Choi operator, and its cross-trajectory terms do not describe coherence between experimentally distinct control histories. Ref.~\cite{wassnerCPT} associates this operator-CMI choice with compressibility rather than a direct intervention-based interpretation. Both statistics can respond to classical and quantum conditional correlations.

The two implementations are numerically equivalent for the physical system--environment opening above. Together with the cMPS construction of Ref.~\cite{wassnerCPT}, this supplies the operator-CMI used in the numerical section and verifies that its value is independent of the chosen orthonormal field coordinates.

%% file: appendix_numerics.tex
\section{Numerical resolution and independent checks}
\label{app:numerics}

\subsection{Finite-duration probability experiments}
The two probability experiments use $T=1.2$, $\omega_E=1.7$, $g=0.8$ per coupled system qubit, and initial environment state $|+\rangle_E$. The single-system-qubit $H$--idle--$R_y$ task starts in $|0\rangle_S$; its fixed logical $Z$ readout has no-compilation probability $0.457606$ and uncoupled ideal $(2-\sqrt2)/4$. The Bell task starts in $|00\rangle_{S_1S_2}$, couples both system qubits to the same environment, and uses the two-outcome Bell POVM specified in Eq.~\eqref{eq:app-bell-generators} and the following text. Its no-compilation $p_0$ is $0.528444$, while the uncoupled ideal is exactly one. A zero-rate refresh still has a randomly drawn initial frame and therefore differs from the uncompiled fixed identity frame in both tasks.

Every scan point has 500 complete paths. Fixed and Poisson refresh trajectories are split at their actual events and instrument boundaries; within each piece the joint Hamiltonian is constant and propagated by a matrix exponential. The Bell calculation uses a pure eight-dimensional joint state and its equivalent joint-Hamiltonian form of Eq.~\eqref{eq:app-inst-overlap}; it does not assemble the full $64\times64$ Liouville transfer at every step. This propagates the same complete trajectory and preserves the shared environment across both gates. At zero coupling the Bell $p_0$ differed from one by at most $8.3\times10^{-15}$ across the checked trajectories and protocols. The fixed and Poisson Bell data use exact event times but do not have a separately computed exact ensemble average.

Brownian and OU paths require a control substep. In the single-system-qubit task, the strongest reported Brownian and time-accelerated OU points used 10242 and 15360 substeps, with paired refinements at 20484 and 30720 steps. The Bell scan used 2400 substeps for intermediate smooth-control points. Its strongest points were evaluated on 4800 steps and compared with 2400-step propagation using the same initial frames and underlying fine-grid noise increments. At Brownian strength $\kappa=128$, the paired $p_0$ means were $0.981566$ and $0.981784$ (coarse and fine); the mean fine-minus-coarse difference was $2.19\times10^{-4}$ with paired standard error $3.92\times10^{-4}$. At OU speed $r=64$, the corresponding means were $0.986856$ and $0.987146$, with difference $2.90\times10^{-4}$ and paired standard error $2.36\times10^{-4}$. These paired grid comparisons, together with the reported trajectory-sampling intervals, specify the numerical resolution of the displayed control points. The OU probability scans change $\tau_r=0.2/r$ and $\sigma_r^2=10r^2$ together. This is a different parameter scan from the particle experiment, where $\tau=0.1$ is fixed. Exact finite-state refresh-ensemble calculations provide an independent check for the fixed and Poisson data in that task.

\subsection{Two-particle amplitudes}
\paragraph{Finite qubit environment.}
The direct calculation uses $T=1.2$, $(t_1,t_2)=(0.4,0.8)$, $g=0.8$, $H_E=(1.7/2)\sigma_z^{(E)}$, and $\rho_{SE}(0)=|0\rangle\langle0|_S\otimes|+\rangle\langle+|_E$. In the logical frame $H_{SE,\omega}(t)=g[U_\omega^\dagger(t)\sigma_z U_\omega(t)]\otimes\sigma_x^{(E)}$. For each control segment, the complete four-dimensional propagator is applied to the $SE$ state. At $t_1$, each of the 15 operators $\mathcal P_{\nu_1}$ acts on the current state; the resulting 15 matrices are jointly propagated to $t_2$, where each of the 15 possible $\mathcal P_{\nu_2}$ is inserted. A final joint trace produces the $15\times15$ coefficient table. The environmental state is carried in the joint operator at every step and is never traced or replaced between insertions. The propagator after $t_2$ can be omitted only because this calculation has a trace boundary and no subsequent instrument or POVM. It must be retained for a measured terminal boundary.

Fixed-refresh propagators are split at every prescribed frame change and both insertion times. Poisson trajectories are split at their actual event times, not at a fixed approximation grid. The corresponding Pauli-conjugated coupling takes values $\pm g\sigma_z\otimes\sigma_x^{(E)}$; both signs are equally likely after an independent frame draw. Finite-dimensional Cayley edges are subdivided and propagated with the frame evaluated at each substep midpoint; this frozen-Hamiltonian exponential approximates the exact time-ordered edge propagator in Eq.~\eqref{eq:cayley-propagator}. For Brownian and OU driving, a midpoint control direction is likewise held on each numerical step and the resulting frozen Hamiltonian is exponentiated exactly. The qubit OU scan uses $\tau_r=0.2/r$ and $\sigma_r^2=10r^2$, as in the finite-duration qubit experiment. This is different from the fixed-$\tau$ Drude OU scan. All qubit amplitude results use 500 complete trajectories per control point. No normalization factor is inserted into the Pauli operation labels.
The uncompiled qubit baseline uses no random draw: $U(t)=\id$, and the two exact propagators passed to the same 225-label contraction are $e^{-iHt_1}$ and $e^{-iH(t_2-t_1)}$, with $H=H_E+0.8\sigma_z^S\otimes\sigma_x^E$. The $g=0$ table and all label conventions are identical to the scan. This gives a mean absolute deviation of $0.1044878974$ and an RMS deviation of $0.1959148075$; the largest individual modulus is $0.3673402641$. A single random initial Pauli sign is a different experiment and gives mean absolute deviation $0.0541846931$ at fixed $K=1$.

Independent checks cover every ordered label pair: direct matrix-chain evaluation and the vectorized table agree to machine precision, with a maximum checked difference of zero. The analytic frozen-step propagator and a separate matrix exponential differ by at most $2.0\times10^{-17}$. At $g=0$, all four controls reproduce the same ideal table pathwise to within $1.5\times10^{-15}$. Trace-preserving diagonal Pauli insertions give unit amplitude, and adjoint-paired labels obey complex conjugacy; these invariants were checked for all sampled trajectories. The strong-end paired comparison used the same initial frames and fine-grid Gaussian increments on 9600 and 19200 steps. Its root-mean-square difference between the two \emph{ensemble-mean complex tables} was $8.24\times10^{-4}$ for Brownian strength $\kappa=128$ and $2.26\times10^{-4}$ for OU speed $r=64$. A separate 19200/38400 comparison gave $6.53\times10^{-4}$ and $1.74\times10^{-4}$, respectively. These paired differences quantify time-discretization sensitivity separately from the bootstrap standard errors in Fig.~\ref{fig:qubit-particles}; the finer-grid comparisons provide the numerical resolution for the displayed Brownian and OU endpoints. The displayed endpoint results use the refined 9600-step calculation, with the paired refinements reported above.

\paragraph{Finite-temperature Drude bath.}
The low-temperature calculation fixes $g=1.5$, $\Lambda=\gamma_B=1$, $\beta=5$, $T=1$, and $(t_1,t_2)=(T/3,2T/3)$. Each particle leg has 15 nonvacuum labels, so the complete fixed two-particle table contains $15^2=225$ complex amplitudes. For a protocol point define the mean complex deviation first, and only then take its magnitude:
\begin{align}
 \Delta\overline C_{\nu_2\nu_1}
 &=\frac1{500}\sum_{\omega=1}^{500}C_{\nu_2\nu_1,\omega}
 -C_{\nu_2\nu_1,\mathrm{id}},\\
 M_{\rm abs}&=\frac1{225}\sum_{\nu_2,\nu_1}
 |\Delta\overline C_{\nu_2\nu_1}|,\\
 M_{\rm rms}&=\left[\frac1{225}\sum_{\nu_2,\nu_1}
 |\Delta\overline C_{\nu_2\nu_1}|^2\right]^{1/2}.
 \label{eq:app-numeric-metrics}
\end{align}
Bootstrap resampling recomputes the mean complex matrix before either nonlinear summary. Averaging the magnitude of each trajectory's deviation first would be a different statistic. Rows of the displayed heat map correspond to the later particle label, columns to the earlier one.
For the deterministic Drude comparator we fix every coupling sign to $+1$ in the exact event-time Matsubara integrals, with event boundaries only at $0,t_1,t_2$. We then use the same 16 system histories and 225 label contractions as in the refresh scan. With 2048 positive Matsubara terms and the same analytic fast-mode area correction, this yields $M_{\rm abs}=0.1889313618$, $M_{\rm rms}=0.3227692943$, and maximum label modulus $0.7065300547$. Every computed fixed-$K=1$ static-Pauli table agrees with this deterministic table to numerical precision: the centered Gaussian influence exponent is quadratic in the history sign, so a global sign reversal leaves it unchanged. This equality is model-specific and does not identify a randomly initialized frame with no compilation in other environments.

Fixed and Poisson refresh calculations resolve their actual holding intervals and event times. Brownian and OU calculations freeze the moving coupling on a bath grid after generating control paths on a finer mesh. All four retain the bath links between every pair of cells before QR and SVD compression of the four-valued double-eigenvalue history. No bath-memory cutoff or coupling-order truncation is imposed. The integrated ideal Drude kernel used 2048 positive Matsubara terms and an analytic fast-mode area correction. Representative strong-refresh points were recomputed with 8192 terms. The paired bath-grid stability target for the averaged complex table was
\begin{equation}
 \left[\frac1{225}\sum_{\nu_2,\nu_1}
 |\overline C_{N'}-\overline C_N|^2\right]^{1/2}<10^{-3},
 \qquad \max_{\nu_2,\nu_1}|\overline C_{N'}-\overline C_N|<3\times10^{-3}.
 \label{eq:app-grid-criterion}
\end{equation}
The representative fixed-refresh heat map uses $K=384$ physical holding intervals, whereas the strongest fixed-refresh scan point uses $K=768$; the strongest Brownian point used 768 bath cells. Control-substep, Matsubara, history-rank, and paired-grid changes were checked separately on the sampled trajectories. The history tensor is weighted for the corresponding control paths, so its resolution is assessed for each protocol and grid. Small-grid results match direct history enumeration to approximately $9\times10^{-15}$ in the checked cases, and a truncated single bosonic-mode calculation independently checks the influence-functional convention.

The independent single-mode check used a thermal mode of frequency $1.3$, inverse temperature $1$, coupling amplitude $0.43$, and total time $0.9$, with a fixed refresh at $0.4$ and particle insertions at $0.2$ and $0.63$. Direct joint evolution with 24 Fock levels was compared with the analytic Gaussian influence kernel on the same complete sign histories. The selected two-particle off-diagonal labels, their frame average, and the closed vacuum result agreed to within $4\times10^{-14}$ in the checked matrix elements. This finite-mode comparison checks branch ordering, insertion placement, and the single final bath trace. For deterministic Poisson-event truncation, Eq.~\eqref{eq:app-poisson-tail} bounds the omitted unnormalized event tail separately.

\subsection{Operator CMI}
The main CMI data use 80 time blocks, a maximum common field rank of 128 for the averaged process, 500 trajectories per point, and nine internal cuts. The Brownian midpoint scan uses $\kappa=0.2,0.6,2,6,16,64,128,256$, and the OU scan uses $r=0.05,0.2,0.6,2,8,32,64,128$. The trajectory-first Gram contraction itself is finite-dimensional and does not use that field-rank cutoff; the average-process contraction does. At the uncompiled midpoint the three pure-operator-state entropies are approximately $S(P)=2.00384895$, $S(F)=2.00165955$, and $S(S)=3.94751497$ bits, yielding $\mathcal N_{\rm op}(T/2)\simeq0.05799353$ bit. The uncompiled midpoint changes from $0.05799183$ to $0.05799390$ to $0.05799399$ bit on 40, 160, and 320 blocks, respectively. The uncoupled ideal is numerically zero at all nine cuts.

Increasing the common-field rank from 128 to 192 changes a representative intermediate Brownian averaged-process CMI from about $8.79\times10^{-5}$ to $9.02\times10^{-5}$ bit, and the corresponding OU point from $4.76\times10^{-5}$ to $4.94\times10^{-5}$ bit. The stronger endpoints were separately audited at the midpoint. Table~\ref{tab:cmi-resolution} reports absolute changes on doubling the time blocks from 80 to 160 at rank 128 and on increasing the field rank from 128 to 192 at 80 blocks. The $S_X$ column is the largest change among $S(P)$, $S(F)$, and $S(S)$, rather than the change in their sum. Thus a small CMI change does not hide cancellation among the larger entropy components. The corresponding entropy components enter the numerical resolution analysis.

\begin{fixedtable}
\fixedtablecaption{Midpoint operator-CMI resolution audit at the strongest displayed control setting. The columns headed $\Delta_t$ compare 80 and 160 blocks at common field rank 128; $\Delta_r$ compares ranks 128 and 192 at 80 blocks. All values are absolute changes in bits from the same 500 sampled control paths. $\Delta S_{\max}$ is the largest change of the three entropy components. The trajectory-first calculation has no common-field-rank truncation.}{tab:cmi-resolution}
\begin{ruledtabular}
\begin{tabular}{lrrrr}
Protocol & $\Delta_t\mathcal N_{\rm traj}$ & $\Delta_t\mathcal N_{\rm avg}$ & $\Delta_r\mathcal N_{\rm avg}$ & $\Delta_t S_{\max}$\\ \hline
Fixed, $K=768$ & $2.62\times10^{-7}$ & $8.22\times10^{-10}$ & $1.46\times10^{-10}$ & $1.84\times10^{-6}$\\
Poisson, $\lambda_r=640$ & $9.33\times10^{-7}$ & $2.04\times10^{-9}$ & $4.95\times10^{-10}$ & $3.57\times10^{-6}$\\
Brownian, $\kappa=256$ & $4.91\times10^{-6}$ & $8.33\times10^{-10}$ & $2.60\times10^{-8}$ & $6.95\times10^{-6}$\\
OU, $r=128$ & $3.03\times10^{-6}$ & $3.99\times10^{-9}$ & $1.96\times10^{-8}$ & $5.75\times10^{-6}$
\end{tabular}
\end{ruledtabular}
\end{fixedtable}

At fixed $K=768$, for example, the trajectory-first midpoint changes from $5.46253\times10^{-6}$ to $5.72419\times10^{-6}$ bit as the grid is doubled, while the process-first value changes from $1.75316\times10^{-8}$ to $1.83536\times10^{-8}$ bit. At 80 blocks the same process-first result changes to $1.76776\times10^{-8}$ bit at rank 192. The corresponding bootstrap standard errors are $3.26\times10^{-7}$ bit for the trajectory mean and $1.18\times10^{-8}$ bit for the process-first estimator. For the other strong endpoints, the trajectory standard errors are $1.15\times10^{-6}$, $3.72\times10^{-6}$, and $2.12\times10^{-6}$ bit for Poisson, Brownian, and OU, respectively. Their process-first bootstrap standard deviations are $7.16\times10^{-8}$, $1.44\times10^{-7}$, and $8.35\times10^{-8}$ bit. Together, the grid, rank, and bootstrap comparisons characterize the reduction in the specified diagnostic over the sampled controls and the numerical resolution of the reported values.

The error sources are conceptually separate. Time-block changes probe drift/emission splitting and control-path representation; common-field-rank changes probe compression; bootstrap resampling probes the finite control ensemble; diagonalization and eigenvalue thresholding affect the three von Neumann entropies. For normalized $d$-dimensional density operators at trace distance $\delta_{\rm tr}\le1-1/d$, the Audenaert bound $|S(\rho)-S(\sigma)|\le\delta_{\rm tr}\log_2(d-1)+h_2(\delta_{\rm tr})$ \cite{combref067} converts a verified trace-distance bound into an entropy bound. The reported CMI values are accompanied by grid, rank, propagation, and sampling comparisons, which jointly set their numerical resolution.